\newtheorem{theorem}{Theorem}[section]
\newtheorem{lemma}[theorem]{Lemma}
\newtheorem{corollary}[theorem]{Corollary}
\newcommand\cotype{\operatorname{\mathsf{cotype}}}
\newcommand\type{\operatorname{\mathsf{type}}}
\newcommand{\m}[1]{\operatorname{\mathsf{{#1}}}}
\newcommand{\mi}[1]{\mathop{\operatorname{\mathit{{#1}}}}}
\newcommand{\mt}[1]{\mathop{\operatorname{\mathtt{{#1}}}}}
\newcommand\hra\hookrightarrow
\newcommand\bdepth[1]{_{({#1})}}
\newcommand{\depth}[1]{\bdepth{#1}}
\newcounter{numbered}
\newenvironment{numbered}{
    \setcounter{numbered}{0}
    }{
    \setcounter{numbered}{0}
    }
\newcommand{\previousNumber}{\the\numexpr\value{numbered}-1\relax }
\newcommand{\blue}[1]{\textcolor{blue}{#1}}
\newcommand{\symbrec}{\medcirc}
\newcommand{\hrec}{^\symbrec}
\newcommand{\symbcontra}{\blacksquare}
\newcommand{\hcontra}{^\symbcontra}
\renewcommand{\cite}{\citep}
\newcommand{\recvar}[1]{\underline{\mi{#1}}}
\newcounter{rulenumber}
\newcommand{\resetRN}{\setcounter{rulenumber}{0}}
\newcommand{\RN}{\refstepcounter{rulenumber}\textsc{(\therulenumber)}}
\newcommand{\RNwithLabel}[1]{\refstepcounter{rulenumber}%
\textsc{({#1})}
\def\@currentlabel{\textsc{{#1}}}%
}
\author{Zhibo Chen}
\email{zhiboc@andrew.cmu.edu}
\author{Frank Pfenning}
\email{fp@cs.cmu.edu}
\affiliation{%
  \institution{Carnegie Mellon University}
  \country{USA}
}
\keywords{Pattern Unification, Rational Terms, Regular B\"ohm Trees}
\begin{document}

\title{A Saturation-Based Unification Algorithm for Higher-Order Rational Patterns}

\begin{abstract}
Higher-order unification has been shown to be undecidable \cite{Huet73ic}. Miller 
discovered the pattern fragment and subsequently showed that 
higher-order pattern unification is decidable and has most general unifiers \citeyearpar{Miller91jlc}.
We extend the algorithm to higher-order rational terms (a.k.a. regular B\"{o}hm trees \cite{Huet98mscs}, a form of 
cyclic $\lambda$-terms)
and show that pattern unification on higher-order rational terms is decidable and 
has most general unifiers. We prove the soundness 
and completeness of the algorithm.
\end{abstract}

\maketitle

\section{Introduction}

Unification is the backbone of logic programming \citep{Miller91jlc} and is also used in type reconstruction in 
the implementation and coverage checking of dependent type theories \citep{Pfenning99cade,Schurmann03tphol}.
Given a list of equations with unification metavariables, unification is the problem of finding substitutions for the
unification metavariables such that the equations hold true. Often, one is interested in finding most general unifiers. 
Consider the following equation as an example, where unification metavariables are written in capital letters.
\[\lambda x. \, \lambda y.\, \lambda z.\, F\, x\, y \doteq 
\lambda x. \, \lambda y.\, \lambda z.\, G\, y\, z 
\]

The most general unifier in this case is $F = \lambda v.\, \lambda w.\, H\, v$ and 
$G = \lambda u.\, \lambda v.\, H\, v$ for some fresh unification metavariable $H$.

The attempt to develop a formal representation for circular proof systems \cite{Brotherston11jlc,Fortier13csl} has 
led to the development of CoLF \cite{Chen23fossacs}, a logical framework with higher-order rational (circular) terms. Type checking and reconstruction in CoLF involves unification on
inherently circular terms. 
We also foresee applications of our unification algorithms
in the context of cyclic logic and process calculi \cite{Derakhshan19arxiv}, for example,
to implement transformations of recursive (cyclic) processes or to
perform type inference in the presence of recursive (cyclic)
types.

In this paper, we provide a unification algorithm on higher-order rational terms in 
the sense of the type theory CoLF \cite{Chen23fossacs}, 
where two $\lambda$-terms are equal if their infinite unfoldings as rational trees are equal. 
The higher-order rational terms we are considering are also called $\bot$-free regular B\"ohm trees by \citet{Huet98mscs}. 
Our work is distinguished from recent works on nominal unification in $\lambda$-calculus with recursive let \cite{Schmidt-Schauss22fi} (a.k.a. cyclic $\lambda$-calculus), in that 
the notion of equality in their work is much weaker than ours. Their equality is based on alpha-equivalence 
and permutation of order of declaration within the recursive let construct, but our equality is based on the infinite tree equalities
generated from circular terms.
For instance, given two recursive definitions $\recvar r =_d c\, \recvar r$ and $\recvar s =_d c\, (c\, \recvar s)$, 
our algorithm considers $\recvar r$ and $\recvar s$ to be equal, whereas Schmidt-Schau{\ss} et. al.'s algorithm distinguishes these two terms.

We only consider the case of unification problems between simply-typed higher-order rational terms, and in particular, we 
treat validity as a separate issue and thus do not distinguish between $\type$ and $\cotype$ \cite{Chen23fossacs}. 
For instance, when encountering the unification problem $F = \m{succ}\, F$, supposedly 
with the type of natural numbers, our algorithm is happy to come up with the solution $F$ being 
an infinite stack of $\m{succ}$'s, and disregards the fact the circular terms $F =\recvar{r}, \recvar{r} =_d \m{succ}\, \recvar{r}$ are not valid.
In an implementation, validity checking can be a separate procedure from unification.
The approach is to build validity checking into unification, such that 
 $F = \m{succ}\, F$ of natural number type has no unifier due to the failure of the occurs check, because validity 
 requires every term of natural number type to be a finite term. We did not take the combined approach as we find separating
 two issues leads to a cleaner presentation of the unification algorithm.

One of the simplest examples of higher-order unification on higher-order rational terms is the following one, where $F$ is of simple type $(* \to *) \to *$.
\[\lambda x.\, x\, (F\, x) \doteq \lambda x.\, F\, x\]
If we were to consider this problem in the setting of non-cyclic unification, there would be no unifier 
due to the failure of the occurs check. However, 
our cyclic-unification algorithm will successfully find the unifier $F = \lambda x.\, \recvar r\, x$ where $\recvar r =_d \lambda x.\, x\, (\recvar r\, x)$.
The symbol $\recvar r$ is a recursion constant that  unfolds to the infinitary 
term, $\lambda x.\, x\, (x\, (x\, (x\, \dots)))$, an abstraction that binds $x$ and its body is an infinite stack of $x$'s.

We first present the algorithm for first-order rational unification in a new way (Section \ref{sec:fo_case}) and then extend the algorithm to include higher-order patterns (Section~\ref{sec:ho_case}). 
In each case, we
 first define the unification problem (Section~\ref{sec:fo_problem_formulation} and \ref{sec:problem_formulation}) and 
 give a preprocessing algorithm that 
transform an arbitrary unification problem into a flattened form (Section~\ref{sec:fo_preprocessing} and \ref{sec:preprocessing}).
Then, we give a saturation-based 
algorithm that operates on the flattened form (Section~\ref{sec:fo_algorithm} and \ref{sec:algorithm}). The saturation rules are complemented with 
examples showing how the algorithm operates on concrete problems. Finally, a proof of the correctness of the 
algorithm is given (Section~\ref{sec:fo_metatheory} and \ref{sec:metatheory}). 
Examples use the syntax of Twelf \cite{Pfenning99cade} extended 
with cyclic terms of CoLF \cite{Chen23fossacs}.

\section{First-Order Rational Unification}
\label{sec:fo_case}

First-order rational unification \cite{Jaffar84ngc} arises directly out of first-order unification \cite{Robinson65jacm}, but without occurs check.
We give a new presentation of the algorithm based on saturation \cite{Ganzinger96icalp,Pfenning06lnUnif}, that mimics the structure 
of the higher-order case in Section~\ref{sec:ho_case}. In Jaffar's algorithm, the unification is presented as 
transformations on equivalence classes of terms (containing variables) into a solved form, 
where solutions can be read directly. Our algorithm shares the essential idea as his algorithm, but 
we presented it very differently.
The primary motivation for a different presentation is to make the later presentation of the higher-order case easier to follow.
The circular terms in Jaffar's algorithm are created implicitly by the presence of a variable and its recursive definition in the same equivalence class, 
whereas we use explicit recursive definitions and explicit equations between terms.

\subsection{Problem Formulation}
\label{sec:fo_problem_formulation}
We present the definition of a unification problem in this section, 
and then present in Section~\ref{sec:fo_preprocessing} a flattened form of the unification problem that the algorithm and the proofs assume, obtainable by preprocessing. 
With three syntactic entities, 
\emph{constructors} (written $c$, $d$, or $e$) and
\emph{unification metavariables} (written in capital letters $E$, $F$, $G$, $H$), 
\emph{recursion constants} (written in underlined letters $\recvar r, \recvar s, \recvar t$) \cite{Chen23fossacs}, possibly with subscripts,
a first-order \emph{concrete unification context} $\Delta_c$ is a system of equations $T \doteq T'$ together with definitions for
recursion constants that may occur in $T$. 
In term $c\, T_1\, \dots\, T_n$, $c$ is the \emph{head} and $T_1\, \dots\, T_n$ are the \emph{arguments} of $c$.
The grammar is shown as follows. It
enforces that 
recursive definitions are required to be contractive: $\recvar{r} =_d T$ means the head of $T$ must be a constructor.

\begin{center}
\begin{tabular}{lll}
  Concrete Unification Contexts & $\Delta_c ::= []\mid \Delta_c, T_1 \doteq T_2 \mid \Delta_c, \recvar r =_d c\, T_1\, \dots\, T_n $\\
  Terms & $ T ::=  c \, T_1\, \dots\, T_n \mid H \mid \recvar r $ \\
\end{tabular}
\end{center}

We now define the infinitary denotation of $T$ in a context $\Delta_c$ by depth $k$ observations of $M$. 
First, define $M_\bot$ to be first-order terms with the symbol $\bot$, and contractive and recursive unification metavariables (defined later in Section~\ref{sec:fo_preprocessing}).
\[M_\bot ::= c\, (M_\bot)_1\, \dots\,(M_\bot)_n \mid H\hcontra \mid H\hrec \mid \bot\]

We define \emph{definitional expansion} up to depth $k$ of a term $T$ into $M_\bot$ as the function $\m{exp}^{\Delta_c}\depth k(T) =M_\bot$, defined by lexicographic induction on $(k, T)$.
Since the parameter $\Delta_c$ remains unchanged throughout, we omit writing it to reduce visual clutter if it is not referenced.

\begin{center}
  \begin{tabular}{ll}
  $\m{exp}\depth{0}(T) = \bot$ \\
  $\m{exp}\depth{k+1}(c\, T_1\, \dots\, T_n) = c\, (\m{exp}\depth{k}(M_1))\, \dots \, (\m{exp}\depth{k}(M_n))$ \\
  $\m{exp}\depth{k+1}(H)  = H\hrec$ \\
  $\m{exp}^{\Delta_c}\depth{k+1}(\recvar r)  = \m{exp}\depth{k+1}(c\, T_1\, \dots\, T_n)$ if $\recvar r =_d c\, T_1\, \dots \, T_n \in \Delta_c$\\
  \end{tabular}
\end{center}

As an example, given a signature for conatural numbers and their simple equality, we are asked to find 
which number's double cosuccessor is $\mt{omega}$. 

\begin{verbatim}
conat : cotype.
cozero : conat.
cosucc : conat -> conat.

omega : conat = cosucc omega.
?- omega = (cosucc (cosucc H)).
\end{verbatim}

We may formulate the problem as follows, where $H$ is a fresh unification metavariable standing for the answer to our query.
\[\Delta_c = \{\recvar{omega} =_d \mi{cosucc}\, \recvar{omega}, \mi{cosucc} \,(\mi{cosucc}\, H) \doteq \recvar{omega}\}\]

We will not define unifiers for the concrete unification context, but the definition would look 
similar to the one for the unification context after the preprocessing phase defined next.
Eventually, we will find the following unifier for $\Delta_c$.
\[\Gamma_c = \{ H~\doteq~\recvar{omega},\recvar{omega} =_d \mi{cosucc}\, \recvar{omega}  \}\]

Notice that given two terms in a concrete context, their equality of definitional 
expansion up to depth $\omega$ is decidable \cite{Chen23fossacs,Huet98mscs}. The core idea is to carry out structural 
comparisons of the terms and memoizing intermediate equalities. This comparison terminates because the rationality 
of the terms ensures that only a finite number of intermediate equalities are possible.

\subsection{Preprocessing}
\label{sec:fo_preprocessing}

Terms are now divided into recursive terms and contractive terms.
A \emph{contractive term} is a term with a constructor as its head, and a \emph{recursive term} is a term with a recursion constant as its head.
The purpose of preprocessing is to put recursive definitions into 
shallow forms that are one level deep, meaning that
the arguments to a constructor must not be contractive terms and can only be recursive terms.
That means a term $c\, (d\, e)$ must be written down using recursive definitions: $c\, \recvar{r}, \recvar{r} =_d d\, \recvar{s}, \recvar{s} =_d\, e$ .
This greatly simplifies the termination proof of the 
unification algorithm here and for the higher-order case, which we eventually wish to develop.
Similarly, unification metavariables are 
divided into \emph{recursive unification metavariables} (with superscript $\hrec$), which may unify with only recursion constants, and 
\emph{contractive unification metavariables} (with superscript $\hcontra$) which may only unify contractive terms. 
We use the lower case letter $m$ to denote either $\symbcontra$ or $\symbrec$ and write $H^m$ to indicate a unification metavariable $H$ that is 
either contractive or recursive.
We also include a special symbol $\m{contra}$ 
for contradictory unification contexts that do not have a unifier. The unification context now only permits equations 
between two recursive terms or two contractive terms.
The grammar is as follows:

\begin{center}
\begin{tabular}{lll}
  Unification Contexts & $\Delta, \Gamma ::= []\mid \Delta, U_1 \doteq U_2 \mid \Delta, N_1 \doteq N_2 \mid \Delta, \recvar r =_d U \mid \Delta, \m{contra}$\\
  Contractive  Terms & $ U ::=  c \, N_1\, \dots\, N_n \mid H\hcontra $ \\
  Recursive Terms & $N ::=   \recvar r  \mid  H\hrec$ \\
\end{tabular}
\end{center}

As an example, we would like the concrete unification context $\Delta_c$ defined in the previous section
\[\Delta_c = \{\recvar{omega} =_d \mi{cosucc}\, \recvar{omega}, \mi{cosucc}\, (\mi{cosucc} H) \doteq \recvar{omega}\}\]
to be processed to the following unification context $\Delta$.
\[\Delta = \{\recvar{omega} =_d \mi{cosucc}\, \recvar{omega}, \recvar{s} =_d \mi{cosucc} \recvar{r}, \recvar{r} =_d\mi{cosucc} H\hrec, \recvar{s} \doteq \recvar{omega}\}\]

We define a preprocessing translation from concrete unification contexts to unification contexts.
We write $\Delta_c \rhd \Delta$ translation of $\Delta_c$ to $\Delta$, 
$T \rhd\hcontra U \diamond \Delta$ for translating a term $T$ into a contractive term $U$ with a new context $\Delta$, and 
$T \rhd\hrec N \diamond \Delta$ for translating a term $T$ into a recursive term $N$ with a new context $\Delta$. 
We treat a unification context as an unordered list and may write $\Delta_1,\Delta_2$ to join two contexts $\Delta_1$ and $\Delta_2$ 
with disjoint sets of recursion constants. If the set of recursion constants of $\Delta_1$ is not disjoint from the set of recursion constants of $\Delta_2$, we may 
consistently rename recursion constants in $\Delta_2$ such that $\Delta_1, \Delta_2$ is always defined.

\resetRN
\boxed{\Delta_c \rhd \Delta}
\begin{mathpar}
  \inferrule{ }{[] \rhd []}\RN

\inferrule{
\Delta_c \rhd \Delta_1 \\
 T_1 \rhd\hrec N_1 \diamond \Delta_2 \\
  T_2 \rhd\hrec N_2 \diamond \Delta_3}
  {\Delta_c, T_1 \doteq T_2 \rhd \Delta_1,  \Delta_2, \Delta_3, N_1 \doteq N_2}
  \RN \label{rule:_xxx_fefef}

  \inferrule{
\Delta_c \rhd \Delta_1 \\  c\, T_1\, \dots\, T_n \rhd\hcontra U \diamond \Delta_2
  }{
\Delta_c, \recvar r =_d  c\, T_1\, \dots \, T_n \rhd \Delta_1, \Delta_2, \recvar r =_d U
  }\RN
\end{mathpar}

\boxed{T \rhd\hrec N \diamond \Delta}
\begin{mathpar}
  \inferrule{
c\, T_1\, \dots\, T_n \rhd\hcontra U \diamond \Delta 
  }{
c\, T_1\, \dots\, T_n \rhd\hrec \recvar r \diamond (\Delta, \recvar r =_d U)
  }(\recvar{r} \text{ fresh})\RN
  \label{rule:fo_trans_into_rec}

  \inferrule{
  }{
    H \rhd H\hrec \diamond []
  }\RN

  \inferrule{
  }{
    \recvar r \rhd \recvar r \diamond []
  }\RN
\end{mathpar}

\boxed{T \rhd\hcontra U \diamond \Delta}
\begin{mathpar}
  \inferrule{
    \forall_{i, 1 \le i \le n}. T_i \rhd\hrec N_i \diamond \Delta_i
  }{
c\, T_1\, \dots\, T_n \rhd\hcontra c\, N_1\, \dots\, N_n \diamond (\Delta_1, \dots, \Delta_n)
  }\RN
  \label{rule:fo_trans_into_contra}

  (\text{No rules for $T = H$ or $T = \recvar r$})
\end{mathpar}

The informal intuition of the transformation rules is that we make transform both sides of a unification equation 
into recursive terms and transform the body of a recursive definition to a contractive term. 
Recursion constants are already recursive terms.
To transform any term 
with a constructor head to a contractive term, we transform the arguments into recursive terms (Rule (\ref*{rule:fo_trans_into_contra})).
To transform any non-recursive term into a recursive term, create a recursive definition that mimics the term (Rule (\ref*{rule:fo_trans_into_rec})).

We define the definitional expansion at depth $k$ for a recursive or a contractive term mutually recursively,
 $\m{exp}^{\Delta}\depth k(U) = M_\bot$ and $\m{exp}^{\Delta}\depth k(N) = M_\bot$. We take the liberty 
 to omit writing $\Delta$ if it 
 remains unchanged throughout and is not referenced.

 \begin{center}
 \begin{tabular}{ll}
 $\m{exp}\depth 0(U) = \bot$ \\
 $\m{exp}\depth{k+1}(H\hcontra) = H\hcontra$ \\
 $\m{exp}\depth{k+1}(c\,N_1\,\dots\,N_n  ) = c\, (\m{exp}\depth{k}(N_1))\, \dots\, (\m{exp}\depth{k}(N_n))$\\[1ex]
 $\m{exp}\depth 0(N) = \bot$ \\
 $\m{exp}\depth{k+1}(H\hrec) = H\hrec$ \\
 $\m{exp}^\Delta\depth{k+1}(\recvar r) = \m{exp}^\Delta\depth{k+1}(U)$ if $\recvar r =_d U \in \Delta$ \\
 \end{tabular}
\end{center}

The translation preserves the definitional expansion of arbitrary depth.

\begin{theorem}
  We have
  \begin{enumerate}
    \item 
If $T \rhd \hcontra U \diamond \Delta_2$
  and $\m{exp}^{\Delta_c}\depth k (\recvar s) = \m{exp}^{\Delta_1}\depth k (\recvar s)$ for all $s$ occurring in $T$,  
  then $\m{exp}^{\Delta_c}\depth k (T) = \m{exp}^{\Delta_1, \Delta_2}\depth k(U)$.
    \item 
If $T \rhd \hrec N \diamond \Delta_2$
  and $\m{exp}^{\Delta_c}\depth k (\recvar s) = \m{exp}^{\Delta_1}\depth k (\recvar s)$ for all $s$ occurring in $T$,  
  then $\m{exp}^{\Delta_c}\depth k (T) = \m{exp}^{\Delta_1, \Delta_2}\depth k(N)$.
  \end{enumerate}
\end{theorem}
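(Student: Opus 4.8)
The plan is to prove statements (1) and (2) simultaneously by a nested induction whose primary (outer) measure is the depth $k$ and whose secondary (inner) measure is the height of the translation derivation. Before starting I would isolate two routine properties of $\m{exp}$. The first is \emph{weakening}: if $\Delta'$ is a context whose recursion constants are fresh for $X$ --- they do not occur in $X$, in $\Delta$, or in any definition of $\Delta$ reachable from $X$ --- then $\m{exp}^{\Delta,\Delta'}\depth{k}(X) = \m{exp}^{\Delta}\depth{k}(X)$ for every recursive or contractive term $X$, proved by the same lexicographic induction that defines $\m{exp}$, since the added definitions are never consulted. The second is \emph{truncation}: $\m{exp}\depth{k}(X)$ is the depth-$k$ truncation of $\m{exp}\depth{k+1}(X)$ (everything at depth $k$ replaced by $\bot$), so that if $\m{exp}^{\Delta_c}\depth{k+1}(\recvar s) = \m{exp}^{\Delta_1}\depth{k+1}(\recvar s)$ then $\m{exp}^{\Delta_c}\depth{k}(\recvar s) = \m{exp}^{\Delta_1}\depth{k}(\recvar s)$; hence the hypothesis about recursion constants is automatically inherited at all smaller depths.

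With these in hand, the case $k = 0$ is immediate for both statements because every depth-$0$ expansion is $\bot$. For $k = k'+1$ I case on the last rule of the translation derivation. For statement (2) the rule $H \rhd H\hrec \diamond []$ gives $\m{exp}\depth{k}(H) = H\hrec = \m{exp}\depth{k}(H\hrec)$; the rule $\recvar r \rhd \recvar r \diamond []$ reduces the goal to $\m{exp}^{\Delta_c}\depth{k}(\recvar r) = \m{exp}^{\Delta_1}\depth{k}(\recvar r)$, which is exactly the hypothesis instantiated at $s = r$; and Rule~(\ref{rule:fo_trans_into_rec}), with $T = c\,T_1\cdots T_n$, $N = \recvar r$ fresh, and $\Delta_2 = (\Delta,\recvar r =_d U)$ obtained from a strictly shorter sub-derivation $c\,T_1\cdots T_n \rhd\hcontra U \diamond \Delta$, unfolds the right-hand side $\m{exp}^{\Delta_1,\Delta,\recvar r =_d U}\depth{k}(\recvar r)$ to $\m{exp}^{\Delta_1,\Delta,\recvar r =_d U}\depth{k}(U)$, which equals $\m{exp}^{\Delta_1,\Delta}\depth{k}(U)$ by weakening (freshness of $\recvar r$); applying the inner induction hypothesis (statement (1), same depth $k$, strictly shorter derivation, same term $T$ and hence the same occurring recursion constants) rewrites this to $\m{exp}^{\Delta_c}\depth{k}(T)$, as required.

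For statement (1) only Rule~(\ref{rule:fo_trans_into_contra}) applies, so $T = c\,T_1\cdots T_n$, $U = c\,N_1\cdots N_n$, $\Delta_2 = (\Delta_1',\dots,\Delta_n')$, with sub-derivations $T_i \rhd\hrec N_i \diamond \Delta_i'$. Unfolding $\m{exp}\depth{k'+1}$ on both sides reduces the goal to $\m{exp}^{\Delta_c}\depth{k'}(T_i) = \m{exp}^{\Delta_1,\Delta_1',\dots,\Delta_n'}\depth{k'}(N_i)$ for each $i$. The right-hand side simplifies to $\m{exp}^{\Delta_1,\Delta_i'}\depth{k'}(N_i)$ by weakening, since the recursion constants of $\Delta_j'$ for $j \ne i$ are fresh for $N_i$. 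As the recursion constants occurring in $T_i$ are among those occurring in $T$, and the hypothesis at depth $k'+1$ implies the one at depth $k'$ by truncation, the outer induction hypothesis (statement (2) at the strictly smaller depth $k'$) applies and yields $\m{exp}^{\Delta_c}\depth{k'}(T_i) = \m{exp}^{\Delta_1,\Delta_i'}\depth{k'}(N_i)$, closing the case.

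Essentially everything is forced once the definitions of the translation and of $\m{exp}$ are unfolded in lockstep, so I do not anticipate real conceptual difficulty. The part that I expect to take the most care is the context bookkeeping: tracking the freshness of the recursion constants introduced by Rule~(\ref{rule:fo_trans_into_rec}), using the weakening property to discard definitions that are never consulted, and checking that the measure $(k,\text{derivation height})$ strictly decreases in every recursive call --- in particular that the one within-$k$ call, from the Rule~(\ref{rule:fo_trans_into_rec}) case of statement (2) into statement (1), is on a strictly shorter derivation.
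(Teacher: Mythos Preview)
Your proposal is correct and follows essentially the same approach as the paper: a simultaneous induction on $(k, T)$ (the paper's phrasing) or equivalently $(k,\text{derivation height})$ (yours), with the Rule~(\ref{rule:fo_trans_into_rec}) case of (2) appealing to (1) at the same $k$ on the strictly smaller subderivation, and the Rule~(\ref{rule:fo_trans_into_contra}) case of (1) appealing to (2) at $k-1$. You are slightly more explicit than the paper in isolating the weakening and truncation lemmas for $\m{exp}$; the paper uses both implicitly (``because $\recvar r$ is fresh'' and invoking the IH at $k-1$ without comment), so your added care is well placed but does not constitute a different route.
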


\begin{proof}
  Simultaneous induction on $(k, T)$, where (2) may appeal to (1) without a decrease in size.

  We show the case for rules (\ref*{rule:fo_trans_into_rec})(\ref*{rule:fo_trans_into_contra}) as examples. 

  Rule (\ref*{rule:fo_trans_into_rec}):
  The premise $c\, T_1\, \dots\,T_n \rhd\hcontra U \diamond \Delta$ implies, by the induction hypothesis, that
  $\m{exp}^{\Delta_c}\depth k (c\, T_1\, \dots\,T_n ) = \m{exp}^{\Delta_1, \Delta}\depth k(U)$. 
  We have
  
  \begin{tabular}{ll}
$\m{exp}^{\Delta_c}\depth k (c\, T_1\, \dots\,T_n ) $ & \\
$ = \m{exp}^{\Delta_1, \Delta}\depth k ( U) $ & (by the induction hypothesis)\\
$ = \m{exp}^{\Delta_1, \Delta, \recvar r=_dU}\depth k ( U) $ & (because $\recvar r$ is fresh)\\
$ = \m{exp}^{\Delta_1, \Delta, \recvar r=_d U}\depth k( \recvar r) $ &  (by definition)\\
  \end{tabular}

  Rule (\ref*{rule:fo_trans_into_contra}): when $k = 0$, the result is trivial. When $k > 0$, 
  The premise $T_i \rhd\hrec N_i \diamond \Delta_i'$ implies, by the induction hypothesis, that
  $\m{exp}^{\Delta_c}\depth {k-1} (T_i) = \m{exp}^{\Delta_1, \Delta_i'}\depth {k-1}(N_i)$. 
  Let $\Delta' = \Delta_1', \dots, \Delta_n'$, we have
  $\m{exp}^{\Delta_c}\depth {k-1} (N_i) = \m{exp}^{\Delta_1, \Delta_i'}\depth {k-1}(N_i) = \m{exp}^{\Delta_1, \Delta'}\depth {k-1}(N_i)$, 
  because each $\Delta_i'$ may only contain fresh recursion constants.
  We have
  
  \begin{tabular}{ll}
$\m{exp}^{\Delta_c}\depth {k} (c\, T_1\, \dots\,T_n )$ & \\
$= c\, (\m{exp}^{\Delta_c}\depth {k-1}(T_1))\, \dots\, (\m{exp}^{\Delta_c}\depth {k-1}(T_n))$  & (by definition) \\
$= c\, (\m{exp}^{\Delta_1, \Delta'}\depth {k-1}(N_1))\, \dots\, (\m{exp}^{\Delta_1, \Delta'}\depth {k-1}(N_n))$  & (by the induction hypothesis) \\
$= \m{exp}^{\Delta_1, \Delta'}\depth k(c\, N_1\, \dots\, N_n)$  & (by definition) \\
  \end{tabular}
\end{proof}

\begin{corollary}
  If $\Delta_c \rhd \Delta$, then every equation in $\Delta_c$ corresponds to an equation in $\Delta$ with equal 
  definitional denotation, and every recursive definition in $\Delta_c$ corresponds to a recursive definition in $\Delta$.
\end{corollary}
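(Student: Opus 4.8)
The plan is to derive this from the preceding theorem by induction on the derivation of $\Delta_c \rhd \Delta$, after isolating the right auxiliary fact about recursion constants. The syntactic half of the correspondence is immediate from the shape of the translation rules: an equation $T_1 \doteq T_2$ of $\Delta_c$ is consumed by a single rule and produces an equation $N_1 \doteq N_2$ of $\Delta$ (together with some fresh recursive definitions), and a recursive definition $\recvar r =_d c\,T_1\,\dots\,T_n$ of $\Delta_c$ produces the recursive definition $\recvar r =_d U$ of $\Delta$, where $c\,T_1\,\dots\,T_n \rhd\hcontra U \diamond \Delta_2$. What needs work is that the associated definitional denotations agree.

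First I would prove, as a lemma, that if $\Delta_c \rhd \Delta$ then $\m{exp}^{\Delta_c}\depth{k}(\recvar s) = \m{exp}^{\Delta}\depth{k}(\recvar s)$ for every $k$ and every recursion constant $\recvar s$ declared in $\Delta_c$, by induction on $k$. For $k = 0$ both sides are $\bot$. For $k+1$, let $\recvar s =_d c\,T_1\,\dots\,T_n$ be the definition in $\Delta_c$; the translation rules force a corresponding definition $\recvar s =_d c\,N_1\,\dots\,N_n$ in $\Delta$ with $T_i \rhd\hrec N_i \diamond \Delta_i$ and $\Delta_i \subseteq \Delta$. Unfolding both definitions one step reduces the goal to $\m{exp}^{\Delta_c}\depth{k}(T_i) = \m{exp}^{\Delta}\depth{k}(N_i)$ for each $i$, which is an instance of part (2) of the preceding theorem; its hypothesis — preservation at depth $k$ of the expansions of the recursion constants occurring in $T_i$, all of which are declared in $\Delta_c$ — is an instance of the lemma at depth $k$ and so is available from the induction hypothesis on $k$, after harmlessly extending the relevant sub-context to $\Delta$ using the freshness of the constants it introduces.

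Given the lemma, the corollary follows by induction on the derivation of $\Delta_c \rhd \Delta$. The base case $[]$ is vacuous. For the equation rule, the sub-derivation $\Delta_c \rhd \Delta_1$ provides the correspondence for everything already in $\Delta_c$ — extending $\Delta_1$ by the fresh constants of $\Delta_2$ and $\Delta_3$ leaves their expansions unchanged — while the lemma applied to $\Delta_c \rhd \Delta_1$ discharges the hypothesis needed to apply part (2) of the theorem to $T_1 \rhd\hrec N_1 \diamond \Delta_2$ and to $T_2 \rhd\hrec N_2 \diamond \Delta_3$, yielding $\m{exp}^{\Delta_c}\depth{k}(T_i) = \m{exp}^{\Delta}\depth{k}(N_i)$ for all $k$, so $N_1 \doteq N_2$ is the promised corresponding equation. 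For the recursive-definition rule, the lemma already records that $\recvar r$ has the same expansion in $\Delta$ as in $\Delta_c$, which is exactly the required correspondence.

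The step I expect to be the main obstacle is the lemma, because of the apparent circularity in the hypothesis of the theorem: it transfers expansions only under the assumption that the recursion constants occurring in the translated term already have preserved expansions, and the body of a recursive definition may mention the constant it defines, so the assumption cannot be obtained from an induction on the derivation alone. The resolution is that unfolding a recursive definition one level decreases the observation depth at which such self-references are inspected, so the induction on $k$ in the lemma breaks the cycle; what remains is routine bookkeeping about the freshness of newly introduced recursion constants.
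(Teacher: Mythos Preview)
Your proposal is correct and follows the same skeleton as the paper, which dispatches the corollary in one line: ``Directly by structural induction over $\Delta_c \rhd \Delta$.'' You have simply been more explicit: you isolate the auxiliary fact that $\m{exp}^{\Delta_c}\depth{k}(\recvar s) = \m{exp}^{\Delta}\depth{k}(\recvar s)$ for all declared $\recvar s$, and you note that this must be proved by induction on $k$ (not on the derivation) in order to break the circularity in the hypothesis of the preceding theorem when a recursive body mentions its own constant. The paper leaves that step implicit; your write-up makes clear why the structural induction alone does not directly discharge the theorem's side condition, and how the depth induction does.
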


\begin{proof}
  Directly by structural induction over $\Delta_c \rhd \Delta$.
\end{proof}

It is worth noting that we have assumed all concrete unification metavariables $H$ are recursive, in the sense that 
they may unify with a recursion constant. In practice, implementations may want to use the preprocessed forms 
directly. The concrete form and the translation procedure merely serve as a mechanism to parse the user's input and as a 
formal explanation of the flattened definitions. 

We take the flattened unification context as the ``canonical representation'' for a unification problem 
from now on, and we may use the syntax category $M$ for either $U$ or $N$.
We use $\m{defs}(\Delta)$ and $\m{eqs}(\Delta)$ to denote the list of recursive definitions and equations of $\Delta$ respectively.
Definitional expansion $\m{exp}$ does not depend 
on unification equations but only on recursive definitions, and thus, we have 
$\m{exp}^{\Delta}\depth k (M)= \m{exp}^{\m{defs}(\Delta)}\depth{k}(M)$, for all $\Delta$, $k$, and $M$.

\subsection{Term Equality and Unifiers}
\label{sec:fo_term_equality_and_unifiers}

Two terms are equal in a unification context if they have the same definitional expansion, i.e., given $M \doteq M'$ in 
$\Delta$, we say that $M$ is equal to $M'$ (and thus the equation $M \doteq M'$ \emph{holds}) if $\m{exp}^\Delta\depth{k}(M) = \m{exp}^\Delta\depth k(M')$ for all $k$.
We say that a unification context is \emph{contradiction-free} if $\m{contra}$ is not present in the context.

A (simultaneous) substitution is usually understood as a mapping from unification metavariables to terms.
In the case of circular terms, the substitutions may carry recursive definitions. We choose to define substitutions 
as unification contexts of special forms, where the left-hand sides of all unification equations are unification metavariables, and 
the corresponding right-hand sides are their values. We write $\Gamma$ for substitutions and $\Delta$ for ordinary unification contexts.
A \emph{substitution}  is a contradiction-free unification context
where the left-hand side of each unification equation is a unique unification metavariable.
The set of unification metavariables that occur on the left-hand sides 
of a substitution $\Gamma$ is called the \emph{domain} of the substitution and is written $\m{dom}(\Gamma)$.
If a substitution contains an equation $H^m \doteq M$, 
we say that $M$ is the \emph{value} of $H^m$ in $\Gamma$.
Two substitutions are equal if they have the same domain, and 
the definitional expansions of the values of each unification metavariable in their domain are equal, i.e. 
$\Gamma = \Gamma'$ if 
$\m{dom}(\Gamma) = \m{dom}(\Gamma')$, 
and for all $H^m \in \m{dom}(\Gamma)$, $\m{exp}^\Gamma\depth{k}(H^m[\Gamma]) = \m{exp}^{\Gamma'}\depth{k}(H^m[\Gamma])$, 
where $H^m[\Gamma]$ is the value of $H^m$ in $\Gamma$, obtained by the substitution operation that will be defined.

As an example, $\Gamma$ and $\Gamma'$ below are substitutions with the domain $\{H\hrec\}$.

\begin{tabular}{lll}
$\Gamma$&$ = \{H\hrec \doteq \recvar{omega}, $&$\recvar{omega} =_d \mi{cosucc}\, \recvar{omega}\}$\\
$\Gamma'$&$ = \{H\hrec \doteq \recvar{s}, $&$\recvar{omega} =_d \mi{cosucc}\, \recvar{omega}, \recvar{s} =_d \mi{cosucc} \recvar{omega}\}$\\
\end{tabular}

Moreover, $\Gamma = \Gamma'$ because the expansions of every unification metavariable in the domain are equal: $H\hrec$ expands to $\mi{cosucc}\,  (\mi{cosucc} \dots)$.

We emphasize that in a substitution, unification metavariables occurring on the right-hand sides of unification equations and in 
recursive definitions are free. Thus, the substitution $\Gamma''$ below has $H\hrec$ in the recursive definition free, and 
$\Gamma''$ is not equal to $\Gamma$ defined above.

\begin{tabular}{ll}
$\Gamma'' = \{H\hrec \doteq \recvar{omega}, \recvar{omega} =_d \mi{cosucc}\, H\hrec\}$\\
\end{tabular}

 We write $U[\Gamma]$ and $N[\Gamma]$ for applying the substitution to terms. They are defined in obvious ways.

 \begin{tabular}{ll}
  $(c\, N_1\, \dots\, N_n)[\Gamma] = c\, (N_1[\Gamma])\, \dots\,(N_n[\Gamma])$ \\
  $(H\hcontra)[\Gamma] = \begin{cases}
    U'            & \text{ if } H \hcontra \doteq U' \in \m{eqs}(\Gamma)\\
    H\hcontra  & \text{ otherwise} \\
  \end{cases}$ \\[1ex]
  $(\recvar r)[\Gamma] = \recvar r$ \\
  $(H\hrec)[\Gamma] = \begin{cases}
    N'    & \text{ if } H \hrec \doteq N' \in \m{eqs}(\Gamma)\\
    H\hrec  & \text{ otherwise} \\
  \end{cases}$ \\
 \end{tabular}

The application of substitution $\Gamma$ to a \emph{unification context} $\Delta$ is denoted $\Delta[\Gamma]$, which replaces occurrences of 
 unification metavariables in $\Gamma$ by their values in $\Delta$, while combining all recursive definitions and performing recursion constant renaming as necessary.
 \[\Delta[\Gamma] = \m{defs}(\Gamma), \{M[\Gamma] \doteq M'[\Gamma] \mid M \doteq M' \in \m{eqs}(\Delta)\},  \{\recvar r =_d U[\Gamma] \mid \recvar r =_d U \in \m{defs}(\Delta)\}\]
The application of a substitution $\Gamma_2$ to another \emph{substitution} is $\Gamma_1$  is denoted $ \Gamma_1[\Gamma_2]$, 
and it replaces the occurrences of unification metavariables in the right-hand sides and recursive definitions of $\Gamma_1$ by 
their values in $\Gamma_2$ and combine all recursive definitions, performing recursion constant renaming as necessary.
\[\Gamma_1[\Gamma_2] = \m{defs}(\Gamma_2),  \{H^m \doteq M'[\Gamma_2] \mid H^m \doteq M' \in \m{eqs}(\Gamma_1)\},  \{\recvar r =_d U[\Gamma_2] \mid \recvar r =_d U \in \m{defs}(\Gamma_1)\}\]
The composition of substitutions is denoted $\Gamma_1 \circ \Gamma_2$ (applying $\Gamma_1$ and then applying $\Gamma_2$), 
is defined to be $\Gamma_1[\Gamma_2]$ plus any additional substitutions in $\Gamma_2$.
\[ \Gamma_1 \circ \Gamma_2 = (\Gamma_1[\Gamma_2]), \{H^m \doteq M \mid H^m \doteq M \in \m{eqs}(\Gamma_2) \land H^m \notin \m{dom}(\Gamma_1)\} \]

Let $UV(\Delta)$ denote the set of all unification metavariables that occur in $\Delta$, 
a \emph{unifier} for a contradiction-free unification context $\Delta$ is a substitution $\Gamma$ such that $UV(\Delta) = \m{dom}(\Gamma)$, and every equation in $\Delta[\Gamma]$ holds. 
A unification context $\Delta$ with $\m{contra} \in \Delta$ has no unifiers.
A unifier $\Gamma_1$ is \emph{more general} than another unifier $\Gamma_2$  if
there is a substitution $\Gamma'$ such that $\Gamma_1 \circ \Gamma' = \Gamma_2$.

As an example, given $\Gamma$ and $\Delta$ defined below, $\Gamma$ is a unifier of $\Delta$, because every 
equation in $\Delta[\Gamma]$ holds. Notice that when carrying out the substitution, the duplicate recursion constant $\recvar{omega}$ in $\Gamma$ 
is renamed to $\recvar{t}$. The major changes are highlighted in \blue{blue}.

\begin{multicols}{3}
\begin{tabular}{ll}
 $\Gamma $ $= \{ $ \\
$\  H\hrec \doteq \recvar{omega}, $\\
\\
\\
$\  \recvar{omega} =_d \mi{cosucc}\, \recvar{omega}$ \\
$\}$\\
\end{tabular}

\columnbreak
\begin{tabular}{ll}
$\Delta = \{ $\\
$\ \recvar{omega} =_d \mi{cosucc}\, \recvar{omega}, $\\
$\ \recvar{s} =_d \mi{cosucc} \recvar{r}, $\\
$\ \recvar{r} =_d\mi{cosucc} H\hrec, $\\
\\
$\ \recvar{s} \doteq \recvar{omega}$ \\
$\}$\\
\end{tabular}

\columnbreak
\begin{tabular}{ll}
$\Delta[\Gamma] =  \{$\\
$ \ \recvar{omega} =_d \mi{cosucc}\, \recvar{omega}, $\\
$ \ \recvar{s} =_d \mi{cosucc} \recvar{r},$\\ 
$ \ \recvar{r} =_d\mi{cosucc}\, \blue{\recvar{t}}, $\\
$ \ \blue{\recvar{t}} =_d \mi{cosucc}\, \blue{\recvar{t}}, $\\
$ \ \recvar{s} \doteq \recvar{omega}$ \\
$ \}$\\
\end{tabular}
\end{multicols}
  
\subsection{The Unification Algorithm}
\label{sec:fo_algorithm}

We saturate the unification context $\Delta$ using the rules defined below.
If all the premises of a rule are present in the context, we add the rule's conclusion to the context.
The algorithm terminates when no new equations or recursive definitions can be added to the context.
The goal of the rules is to ensure that in a saturated unification context, 
either $\m{contra}$ is present, indicating there is no unifier, 
or there is an equation between each unification metavariable and its value in a unifier.

\resetRN
\noindent\rule{\textwidth}{1pt}
Structural Rules:
\begin{mathpar}
\inferrule[\RNwithLabel{SIMP-F}\label{rule:fo_constructor_clash}]{
  c\,  N_1\, \dots\, N_n \doteq d\, N_1' \, \dots\, N_n'
}{\m{contra}}(c \ne d)

\inferrule[\RNwithLabel{SIMP}\label{rule:fo_structural}]{
  c\, N_1\, \dots N_n \doteq
  c\, N_1'\, \dots N_n'
}{
  N_1 \doteq N_1', \dots, N_n \doteq N_n'
}

\end{mathpar}

\noindent\rule{\textwidth}{1pt}
Expansion, Symmetry, and Transitivity
\begin{mathpar}
  \inferrule[\RNwithLabel{R-EXP}\label{rule:fo_rec_expand}]{
    \recvar{r}\doteq \recvar{s}\\
    \recvar{r} =_d   U_1 \\
    \recvar{s} =_d   U_2 \\
  }{
     U_1\doteq U_2
  }

\inferrule[\RNwithLabel{U-SYM}\label{rule:fo_sym_contra}]{
  U \doteq U'
}{U' \doteq U}

\inferrule[\RNwithLabel{U-TRANS}\label{rule:fo_trans_contra}]{
  U_1 \doteq U_2 \\ U_2 \doteq U_3 
}{U_1 \doteq U_3}

\inferrule[\RNwithLabel{N-SYM}\label{rule:fo_sym_rec}]{
  N \doteq N'
}{N' \doteq N}

\inferrule[\RNwithLabel{N-TRANS}\label{rule:fo_trans_rec}]{
  N_1 \doteq N_2 \\ N_2 \doteq N_3 
}{N_1 \doteq N_3}

\end{mathpar}
\noindent\rule{\textwidth}{1pt}

We give an example of the ways the algorithm operates on our previous example.
We label each equation with a number and use $\Delta_i$ to refer to the set of equations and definitions $(1) - (i)$.
For example, our example $\Delta$ is denoted $\Delta_4$, consisting of equations and definitions $(1) - (4)$.
At each step, we show some additional equations and definitions
and the ways they are obtained. We only show the first few important steps and the rest will be only symmetry and transitivity.

\begin{tabular}{ll}
$(1)\, \recvar{omega} =_d \mi{cosucc}\, \recvar{omega}$ & given \\
$(2)\, \recvar{s} =_d \mi{cosucc} \recvar{r}$ \\
$(3)\, \recvar{r} =_d\mi{cosucc} H\hrec$ \\
$(4)\, \recvar{s} \doteq \recvar{omega} $\\
$(5)\,  \mi{cosucc}\, \recvar r \doteq \mi{cosucc}\, \recvar{omega}$ & by Rule (\ref*{rule:fo_rec_expand}) on $(4)$, $(2)$ and $(1)$ \\
$(6)\, \recvar{r} \doteq \recvar{omega}$ &by Rule (\ref*{rule:fo_structural}) on $(5)$ \\
$(7)\, \mi{cosucc}\, H\hrec \doteq \mi{cosucc}\, \recvar{omega}$ &by Rule (\ref*{rule:fo_rec_expand}) on $(6)$, $(3)$ and $(1)$ \\
$(8)\, \mi{H\hrec} \doteq \recvar{omega}$ &by Rule (\ref*{rule:fo_structural}) on $(7)$ \\
$(9)\, \dots$ & by Rules (\ref*{rule:fo_sym_contra})(\ref*{rule:fo_trans_contra})(\ref*{rule:fo_sym_rec})(\ref*{rule:fo_trans_rec})
\end{tabular}

\subsection{Saturated Unification Contexts}
We now describe how a unifier may be constructed from a saturated contradiction-free unification context.
Given a unification context $\Delta$, we say that a unification metavariable $H\hcontra$ 
is \emph{resolved} if there is an equation of the form $H \hcontra \doteq c\, N_1\, \dots\, N_n$ or $c\, N_1\, \dots\, N_n \doteq H\hcontra$, and 
$c\, N_1\, \dots \, N_n$ is called a resolution of $H\hcontra$.
Similarly,  we say that a unification metavariable $H\hrec$ 
is \emph{resolved} if there is an equation of the form $H\hrec \doteq \recvar r$ or $\recvar r \doteq H\hrec$, and 
$\recvar r$ is called a resolution of $H\hrec$.
In a unification context, every unification metavariable is either resolved or unresolved. 
There may be multiple resolutions for each resolved unification metavariable, we pick a unique resolution for each unification metavariable. 
The choice of resolution is not important, because  every resolution will be equal modulo definitional expansion in a saturated contradiction-free context.
Unresolved unification metavariables 
form an equivalence class equated by $\doteq$, and we pick a unique representative variable for each class.
We construct the substitution $\Gamma = \m{unif}(\Delta)$ for a contradiction-free context $\Delta$ as follows.
\begin{enumerate}
  \item Start with $\Gamma$ containing all recursive definitions of $\Delta$.
  \item For each resolved unification metavariable in $UV(\Delta)$, add to $\Gamma$ the unification metavariable and its resolution.
  \item For each unresolved unification metavariable in $UV(\Delta)$, add to $\Gamma$ the unification metavariable and the representative unification metavariable for its equivalence class.
  \item Replace the occurrences of resolved unification metavariables in the right-hand sides and recursive definitions of $\Gamma$ with their resolutions, 
  and replace the occurrences of unresolved unification metavariables in the right-hand sides and recursive definitions of $\Gamma$ with their representative unification metavariables.
  Repeat this step until all unification metavariables in the right-hand sides and recursive definitions are representative unification metavariables for some 
  equivalence class of unresolved unification metavariables. 
\end{enumerate}
We will later show that if $\Delta$ is a saturated contradiction-free unification context, then $\Gamma = \m{unif}(\Delta)$ is a unifier for $\Delta$.
As an example, we show how the unifier for $\Delta_8$ (equations and definitions $(1)-(8)$ defined above) can be constructed.
The main differences in each step are highlighted in \blue{blue}.

\begin{enumerate}
  \item Initialize $\Gamma_1$ to all recursive definitions of $\Delta_8$.

  \item Since $H\hrec$ is resolved, we add its resolution to get $\Gamma_2$.

\item There is no unresolved unification metavariable, we skip step (3).

\item Replace occurrences of resolved unification metavariables with their resolutions to get $\Gamma_3$.

\begin{multicols}{3}
\begin{tabular}{ll}
$\Gamma_1 = \{$ \\
$\ \recvar{omega} =_d \mi{cosucc}\, \recvar{omega}, $\\
$\ \recvar{s} =_d \mi{cosucc} \recvar{r}, $ \\
$\ \recvar{r} =_d\mi{cosucc} H\hrec $\\
$\}$ \\
\end{tabular}

\columnbreak
\begin{tabular}{ll}
$\Gamma_2 = \{  $\\
$\ \recvar{omega} =_d \mi{cosucc}\, \recvar{omega}, $\\
$\ \recvar{s} =_d \mi{cosucc} \recvar{r}, $\\
$\ \recvar{r} =_d\mi{cosucc} H\hrec, $\\
$\ \blue{H\hrec \doteq \recvar{omega}}$\\
$\}$ \\
\end{tabular}

\columnbreak
\begin{tabular}{ll}
$\Gamma_3 = \{ $\\
$\ \recvar{omega} =_d \mi{cosucc}\, \recvar{omega}, $\\
$\ \recvar{s} =_d \mi{cosucc} \recvar{r}, $\\
$\ \recvar{r} =_d\mi{cosucc}\, \blue{\recvar{omega}}, $\\
$\ H\hrec~\doteq~\recvar{omega}$\\
$\}$ \\
\end{tabular}

\end{multicols}

\item Note that we may remove unused recursive definitions ($s, r$) to get an equivalent substitution $\Gamma_4$.

$\Gamma_4 = \{ \recvar{omega} =_d \mi{cosucc}\, \recvar{omega},  H\hrec~\doteq~\recvar{omega}\}$

\end{enumerate}

\subsection{Correctness of the Algorithm}
\label{sec:fo_metatheory}

We want to show that
given a unification context $\Delta_1$, it has a finitary saturation sequence 
\[\Delta_1 \to \Delta_2 \to \dots \to \Delta_n\]where
$\Delta_n$ is a saturated unification context that has $\m{unif}(\Delta_n)$ as its most general unifier.
Moreover,
unifiers are preserved between
$\Delta_{i}$ and $\Delta_{i+1}$.
  Then, the most general unifiers of $\Delta_n$ are 
  the most general unifiers of $\Delta_1$.
Concretely, we want to show three things: 
\begin{enumerate}[(1)]
  \item \textbf{Correspondence}. At each step of the algorithm, the most general unifier of the context before corresponds to the most general unifier of the context after (Theorem~\ref{thm:fo_correspondence}).
  \item \textbf{Termination}. Any unification context always saturates in a finite number of steps (Theorem~\ref{thm:fo_termination}).
  \item \textbf{Correctness}. The unifier for a saturated unification context $\m{unif}(\Delta_n)$ is actually the most general unifier (Theorem~\ref{thm:fo_correctness_of_unifiers}).
\end{enumerate}

\begin{lemma}
  \label{thm:fo_unifier_subset}
  Let $\Delta'$ be a unification context, and let $\Delta$  have the  same set of recursive definitions and 
  unification metavariables, but fewer unification equations than $\Delta'$, i.e.,
  $\m{eqs}(\Delta) \subseteq \m{eqs}(\Delta')$, $\m{defs}(\Delta) = \m{defs} (\Delta')$, $UV(\Delta) = UV(\Delta')$,
   then any unifier of $\Delta'$ is 
  a unifier of $\Delta$.
\end{lemma}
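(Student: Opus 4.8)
The statement is intuitively obvious — having fewer equations to satisfy can only make it easier to be a unifier — but I need to be careful about what "unifier" means under the definitions in this paper, since unifiers are substitutions whose domain must equal the set of unification metavariables of the context, and "every equation in $\Delta[\Gamma]$ holds" is the key condition. The plan is to unfold the definition of unifier for both $\Delta$ and $\Delta'$ and check the two clauses: the domain condition and the equation-satisfaction condition.

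First I would fix an arbitrary unifier $\Gamma$ of $\Delta'$. By definition, $\Gamma$ is a substitution (hence contradiction-free, with unique metavariables on left-hand sides), $UV(\Delta') = \m{dom}(\Gamma)$, and every equation in $\Delta'[\Gamma]$ holds. Since we are given $UV(\Delta) = UV(\Delta')$, the domain condition $UV(\Delta) = \m{dom}(\Gamma)$ is immediate. It remains to show every equation in $\Delta[\Gamma]$ holds. Recall $\Delta[\Gamma] = \m{defs}(\Gamma), \{M[\Gamma] \doteq M'[\Gamma] \mid M \doteq M' \in \m{eqs}(\Delta)\}, \{\recvar r =_d U[\Gamma] \mid \recvar r =_d U \in \m{defs}(\Delta)\}$, and likewise for $\Delta'[\Gamma]$. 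The next step is to observe that because $\m{defs}(\Delta) = \m{defs}(\Delta')$, the recursive-definition part of $\Delta[\Gamma]$ coincides with that of $\Delta'[\Gamma]$, so the two contexts have the same set of recursive definitions, hence $\m{exp}^{\Delta[\Gamma]}\depth{k}(-) = \m{exp}^{\Delta'[\Gamma]}\depth{k}(-)$ for every $k$ and every term (using the remark that $\m{exp}$ depends only on $\m{defs}$).

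Then, given any equation in $\m{eqs}(\Delta[\Gamma])$, it is of the form $M[\Gamma] \doteq M'[\Gamma]$ for some $M \doteq M' \in \m{eqs}(\Delta) \subseteq \m{eqs}(\Delta')$; hence $M[\Gamma] \doteq M'[\Gamma] \in \m{eqs}(\Delta'[\Gamma])$ as well. Since $\Gamma$ unifies $\Delta'$, this equation holds in $\Delta'[\Gamma]$, i.e., $\m{exp}^{\Delta'[\Gamma]}\depth{k}(M[\Gamma]) = \m{exp}^{\Delta'[\Gamma]}\depth{k}(M'[\Gamma])$ for all $k$; by the coincidence of expansions noted above, the same equality holds in $\Delta[\Gamma]$, so the equation holds there too. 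This establishes that $\Gamma$ is a unifier of $\Delta$.

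There is essentially no hard part here — the only thing requiring a little attention is the bookkeeping that the recursive-definition components of $\Delta[\Gamma]$ and $\Delta'[\Gamma]$ agree (which relies on $\m{defs}(\Delta)=\m{defs}(\Delta')$ together with the fact that $\m{defs}(\Gamma)$ is added identically in both cases), so that term equality in one context transfers verbatim to the other. A minor subtlety worth a sentence is that we should also note $\Delta$ is contradiction-free whenever $\Delta'$ is not relevant here, since the lemma only claims that unifiers of $\Delta'$ are unifiers of $\Delta$ and says nothing about $\m{contra}$; indeed $\m{contra} \notin \Delta'$ is needed only implicitly through $\Delta'$ having a unifier at all, and $\m{eqs}(\Delta)\subseteq\m{eqs}(\Delta')$ with $\m{defs}(\Delta)=\m{defs}(\Delta')$ forbids $\m{contra}\in\Delta$ as well. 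I would close with the one-line summary that the domain condition is inherited from $UV(\Delta)=UV(\Delta')$ and the equation condition is inherited because $\m{eqs}(\Delta)\subseteq\m{eqs}(\Delta')$ and the expansions agree.
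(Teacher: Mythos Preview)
Your proposal is correct and follows essentially the same approach as the paper: fix a unifier $\Gamma$ of $\Delta'$, note that the domain condition transfers via $UV(\Delta)=UV(\Delta')$, and then argue that since $\m{defs}(\Delta)=\m{defs}(\Delta')$ the contexts $\Delta[\Gamma]$ and $\Delta'[\Gamma]$ have the same recursive definitions, so definitional expansions coincide and any equation of $\Delta[\Gamma]$ (which already appears in $\Delta'[\Gamma]$) holds. Your additional remarks on $\m{contra}$ and the bookkeeping of $\m{defs}(\Gamma)$ are fine but not needed beyond what the paper does.
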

\begin{proof}
  
  Let $\Gamma$ be a unifier of $\Delta'$,  all unification equations of $\Delta'[\Gamma] $ hold.
  Take any $M \doteq M' \in \Delta$, we know $\m{exp}^{\Delta' [\Gamma]}\depth k (M[\Gamma])= \m{exp}^{\Delta' [\Gamma]}\depth k(M'[\Gamma])$, 
  we have $UV(\Delta) = UV(\Delta')$, and it suffices to show $\m{exp}^{\Delta [\Gamma]}\depth k (M[\Gamma])= \m{exp}^{\Delta [\Gamma]}\depth k(M'[\Gamma])$ 
  by showing $\m{exp}^{\Delta [\Gamma]}\depth k (M[\Gamma])= \m{exp}^{\Delta' [\Gamma]}\depth k(M[\Gamma])$.  
  But since definitional expansions only depend on recursive definitions, we have 
%   \[
%  \m{exp}^{\Delta [\Gamma]}\depth k (M[\Gamma])
%  = \m{exp}^{\m{defs}(\Delta [\Gamma])}\depth k (M[\Gamma])
%  = \m{exp}^{\m{defs}(\Delta' [\Gamma])}\depth k(M[\Gamma])
%  = \m{exp}^{\Delta' [\Gamma]}\depth k(M[\Gamma])
% \]

  \begin{tabular}{ll}
 $\m{exp}^{\Delta [\Gamma]}\depth k (M[\Gamma])$ \\
 $= \m{exp}^{\m{defs}(\Delta [\Gamma])}\depth k(M[\Gamma])$ & (expansion only definition only depends on definitions of $\Delta[\Gamma]$) \\
 $= \m{exp}^{\m{defs}(\Delta' [\Gamma])}\depth k(M[\Gamma])$ & ($M$ can only depend on recursion constants occurring in $\Delta$)\\
 $= \m{exp}^{\Delta' [\Gamma]}\depth k(M[\Gamma]) $ &(expansion only definition only depends on definitions of $\Delta'[\Gamma]$) \\
  \end{tabular}

\end{proof}

\begin{lemma}
  \label{thm:fo_unifier_superset}
  If $\Gamma$ is a unifier for $\Delta$, then $\Gamma$ is a unifier for $\Delta'$
  where $\Delta'$ has all recursive definitions of $\Delta$ and additional true equations 
  $M \doteq M'$  in the sense that $\m{exp}^{\Delta[\Gamma]}\depth k (M[\Gamma]) = \m{exp}^{\Delta[\Gamma]}\depth k (M'[\Gamma])$.
\end{lemma}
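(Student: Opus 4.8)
The plan is to mirror the proof of Lemma~\ref{thm:fo_unifier_subset}, exploiting the fact that definitional expansion depends only on the recursive definitions of a context and not on its equations. Unfolding the definition of ``unifier'', there are two things to verify: that $UV(\Delta') = \m{dom}(\Gamma)$, and that every equation of $\Delta'[\Gamma]$ holds. The first is essentially bookkeeping, and the second splits into the equations inherited from $\Delta$ and the freshly added ``true'' equations.

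For the domain condition, I would first note that $\Delta'$ has the same recursive definitions as $\Delta$ and that its additional equations $M \doteq M'$ mention only unification metavariables already present (this is implicit in the hypothesis, which phrases the truth condition in terms of $M[\Gamma]$ and $M'[\Gamma]$; in the intended applications to the saturation rules it holds because no rule introduces a fresh metavariable), so $UV(\Delta') = UV(\Delta) = \m{dom}(\Gamma)$. The key technical observation is then that applying $\Gamma$ does not change the recursive definitions in play: since $\m{defs}(\Delta') = \m{defs}(\Delta)$, and since $\Delta'[\Gamma]$ collects $\m{defs}(\Gamma)$ together with $\{\recvar r =_d U[\Gamma] \mid \recvar r =_d U \in \m{defs}(\Delta')\}$, we get $\m{defs}(\Delta'[\Gamma]) = \m{defs}(\Delta[\Gamma])$. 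Because $\m{exp}$ factors through $\m{defs}$, i.e.\ $\m{exp}^{\Delta''}\depth k(P) = \m{exp}^{\m{defs}(\Delta'')}\depth k(P)$ for every context $\Delta''$, term $P$, and depth $k$, this yields the transport identity $\m{exp}^{\Delta'[\Gamma]}\depth k(P) = \m{exp}^{\Delta[\Gamma]}\depth k(P)$ for all $P$ and $k$.

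With this in hand, take any equation $M \doteq M'$ in $\m{eqs}(\Delta')$. If it already belongs to $\m{eqs}(\Delta)$, then since $\Gamma$ is a unifier for $\Delta$ we have $\m{exp}^{\Delta[\Gamma]}\depth k(M[\Gamma]) = \m{exp}^{\Delta[\Gamma]}\depth k(M'[\Gamma])$ for all $k$; if instead $M \doteq M'$ is one of the added equations, that same chain of equalities is precisely the hypothesis of the lemma. In either case, rewriting both sides via the transport identity gives $\m{exp}^{\Delta'[\Gamma]}\depth k(M[\Gamma]) = \m{exp}^{\Delta'[\Gamma]}\depth k(M'[\Gamma])$ for all $k$, i.e.\ the equation holds in $\Delta'[\Gamma]$. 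Since this covers every equation of $\Delta'$, $\Gamma$ is a unifier for $\Delta'$.

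I do not expect a genuine obstacle here. The only point needing care is the domain condition together with the observation that $\Delta$ and $\Delta'$ share all recursive definitions — which is exactly what makes $\m{exp}$ transferable between $\Delta[\Gamma]$ and $\Delta'[\Gamma]$. If one wants full rigor on the domain side, the cleanest fix is to add the explicit hypothesis $UV(\Delta') = UV(\Delta)$ to the statement, after which nothing else is delicate.
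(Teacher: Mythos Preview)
Your proposal is correct and follows the same approach as the paper: the key observation is that definitional expansion depends only on recursive definitions, so $\m{exp}^{\Delta[\Gamma]}\depth k(M) = \m{exp}^{\Delta'[\Gamma]}\depth k(M)$ for all $k$ and $M$. Your version is considerably more detailed than the paper's one-line proof—in particular you explicitly address the domain condition $UV(\Delta') = \m{dom}(\Gamma)$, which the paper glosses over.
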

\begin{proof}
  Because definitional expansion depends only on recursive definitions but not unification equations, 
  we have 
  $\m{exp}^{\Delta [\Gamma]} \depth k (M) = \m{exp}^{\Delta' [\Gamma]} \depth k (M)$ for all $k$ and $M$.
\end{proof}

\begin{theorem}
  [Correspondence]
  \label{thm:fo_correspondence}
  If $\Delta'$ is obtained from $\Delta$ by applying one of the rules, 
  then the unifiers of $\Delta'$
  and the unifiers of $\Delta$ coincide.
\end{theorem}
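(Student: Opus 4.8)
The plan is to prove the two inclusions separately, and to organize the argument by cases on which rule was applied. One direction is essentially free from the lemmas already proved, and the other requires a small rule-by-rule verification.

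First, I would establish that every unifier of $\Delta'$ is a unifier of $\Delta$. Each rule either adds equations while leaving $\m{defs}$ and $UV$ unchanged (all the structural, symmetry, transitivity, and expansion rules add only equations; note $\Delta$ and $\Delta'$ then satisfy $\m{eqs}(\Delta)\subseteq\m{eqs}(\Delta')$, $\m{defs}(\Delta)=\m{defs}(\Delta')$, $UV(\Delta)=UV(\Delta')$), or it adds $\m{contra}$ (rule~\ref{rule:fo_constructor_clash}), in which case $\Delta'$ has no unifiers and the inclusion is vacuous. In the first case the claim is exactly Lemma~\ref{thm:fo_unifier_subset}.

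Second, I would show that every unifier $\Gamma$ of $\Delta$ is a unifier of $\Delta'$. Since $\m{defs}(\Delta')=\m{defs}(\Delta)$ and $UV(\Delta')=UV(\Delta)$ for every rule that does not produce $\m{contra}$, by Lemma~\ref{thm:fo_unifier_superset} it suffices to check, for each rule, that the newly added equation $M\doteq M'$ is \emph{true under $\Gamma$}, i.e.\ $\m{exp}^{\Delta[\Gamma]}\depth k(M[\Gamma])=\m{exp}^{\Delta[\Gamma]}\depth k(M'[\Gamma])$ for all $k$, using that the premise equations already hold under $\Gamma$. For rule~\ref{rule:fo_structural}: if $c\,N_1\cdots N_n[\Gamma]$ and $c\,N_1'\cdots N_n'[\Gamma]$ have equal expansions at all depths, then peeling off the constructor at depth $k+1$ gives $\m{exp}\depth k(N_i[\Gamma])=\m{exp}\depth k(N_i'[\Gamma])$ for each $i$ and each $k$, which is exactly truth of $N_i\doteq N_i'$. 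For rule~\ref{rule:fo_rec_expand}: from $\recvar r =_d U_1$ and $\recvar s =_d U_2$ in $\m{defs}(\Delta)$ (hence in $\m{defs}(\Delta[\Gamma])$ after renaming), the definition of $\m{exp}$ gives $\m{exp}\depth{k+1}(\recvar r)=\m{exp}\depth{k+1}(U_1[\Gamma])$ and likewise for $\recvar s$, so equality of the expansions of $\recvar r$ and $\recvar s$ at every depth transfers to $U_1[\Gamma]$ and $U_2[\Gamma]$ (the depth-$0$ case being trivial). The symmetry and transitivity rules (\ref{rule:fo_sym_contra}), (\ref{rule:fo_trans_contra}), (\ref{rule:fo_sym_rec}), (\ref{rule:fo_trans_rec}) follow because pointwise equality of the depth-$k$ expansions is itself a symmetric and transitive relation. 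Finally, for rule~\ref{rule:fo_constructor_clash} with $c\neq d$: if $\Gamma$ were a unifier of $\Delta$ containing $c\,\overline N\doteq d\,\overline{N'}$, then at depth $1$ the expansions would have heads $c$ and $d$ respectively, a contradiction; hence $\Delta$ has no unifier, and the statement holds vacuously (the ``coincide'' claim is about the — empty — sets of unifiers on both sides once $\m{contra}\in\Delta'$).

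The only mild subtlety, and the step I expect to require the most care, is the bookkeeping around recursion-constant renaming when forming $\Delta[\Gamma]$: one must check that the recursive definitions referenced in a rule premise (e.g.\ $\recvar r=_d U_1$) still witness the same unfolding inside $\Delta[\Gamma]$, so that the computation of $\m{exp}^{\Delta[\Gamma]}$ can invoke them. This is routine given the conventions fixed earlier (definitions of $\Delta$ are copied verbatim into $\Delta[\Gamma]$, and $\m{exp}$ depends only on $\m{defs}$), but it is the place where the argument could go wrong if stated carelessly. Everything else reduces to unwinding the definition of $\m{exp}\depth k$ one level.
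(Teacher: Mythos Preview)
Your proposal is correct and follows essentially the same approach as the paper: a rule-by-rule case analysis, invoking Lemma~\ref{thm:fo_unifier_subset} for the direction ``unifier of $\Delta'$ $\Rightarrow$ unifier of $\Delta$'' and Lemma~\ref{thm:fo_unifier_superset} for the converse after checking that the freshly added equations hold under any unifier $\Gamma$ of $\Delta$. You spell out more of the per-rule verifications (in particular for \textsc{R-Exp} and the symmetry/transitivity rules) and flag the renaming bookkeeping explicitly, whereas the paper treats only \textsc{Simp-F} and \textsc{Simp} in detail and dismisses the remaining cases as similar; but the underlying argument is the same.
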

\begin{proof}
  We analyze each rule.

  Case (\ref*{rule:fo_constructor_clash}), both $\Delta$ and $\Delta'$ have no unifiers. 

  Case (\ref*{rule:fo_structural}), it's easy to check that any unifier $\Gamma$ of $\Delta'$ is a unifier of $\Delta$ by Lemma \ref{thm:fo_unifier_subset}.
  Now suppose $\Gamma$ is a unifier of $\Delta$, we want to show that $\Gamma$ is a unifier for $\Delta'$. 
  The additional equations $M_i \doteq M_i'$ in $\Delta'$ satisfy $\m{exp}^{\Delta [\Gamma]}\depth k(M_i[\Gamma]) = \m{exp}^{\Delta [\Gamma]}\depth k (M_i'[\Gamma])$, 
  and the rest follows by Lemma~\ref{thm:fo_unifier_superset}.

  The rest of the cases are similar to Case (\ref*{rule:fo_structural}).

\end{proof}

\begin{theorem}
  [Termination]
  \label{thm:fo_termination}
  The saturation algorithm always terminates.
\end{theorem}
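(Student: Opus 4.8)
The plan is to exhibit a finite set of "atoms" — recursive definitions and equations — that bounds all of the syntactic material the saturation rules can ever produce, so that the algorithm must reach a fixed point after finitely many additions. First I would observe that none of the saturation rules introduces a fresh recursion constant or a fresh unification metavariable: rules \ref*{rule:fo_constructor_clash}, \ref*{rule:fo_structural}, \ref*{rule:fo_rec_expand}, \ref*{rule:fo_sym_contra}, \ref*{rule:fo_trans_contra}, \ref*{rule:fo_sym_rec}, \ref*{rule:fo_trans_rec} only rearrange or decompose terms already present, and Rule \ref*{rule:fo_rec_expand} merely unfolds a recursion constant to a body $U$ that is already written down in $\m{defs}(\Delta_1)$. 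Hence every recursive definition stays exactly the set $\m{defs}(\Delta_1)$ throughout, and every term appearing on either side of any equation is drawn from a fixed finite pool.

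The key step is to pin down that pool precisely. Because the contexts are in flattened (shallow) form, every recursive term occurring anywhere is either a recursion constant $\recvar r$ with $\recvar r =_d U \in \m{defs}(\Delta_1)$ or a recursive unification metavariable $H\hrec$ from $UV(\Delta_1)$; call this finite set $\mathcal{N}$. Every contractive term occurring anywhere is either a contractive unification metavariable $H\hcontra$ or a term $c\,N_1\,\dots\,N_n$ that is the body of some definition in $\m{defs}(\Delta_1)$ (these are the only contractive terms that rules \ref*{rule:fo_rec_expand} and \ref*{rule:fo_structural} can expose, since \ref*{rule:fo_structural} only ever decomposes such a body into equations between its immediate $N_i$ arguments, which lie in $\mathcal{N}$); call this finite set $\mathcal{U}$. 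I would verify by inspection of each rule that its conclusion is always either $\m{contra}$, an equation $N \doteq N'$ with $N,N' \in \mathcal{N}$, or an equation $U \doteq U'$ with $U,U' \in \mathcal{U}$. Therefore the set of all possible equations is a subset of the finite set $(\mathcal{N}\times\mathcal{N}) \cup (\mathcal{U}\times\mathcal{U}) \cup \{\m{contra}\}$.

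With that invariant established, termination is immediate: the context $\Delta_i$ is determined by $\m{defs}(\Delta_1)$ together with its set of equations, this set grows monotonically under $\to$ (each rule application adds at least one new equation, otherwise the step is not taken), and it is bounded above by the finite set identified in the previous paragraph. A strictly increasing sequence of subsets of a finite set has finite length, so the saturation sequence $\Delta_1 \to \Delta_2 \to \dots$ must stop, i.e., reach a $\Delta_n$ to which no rule adds anything new.

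The main obstacle is the bookkeeping in the second paragraph: one must be careful that the "closure under the rules" of the initial set of terms is genuinely finite, and in particular that Rule \ref*{rule:fo_structural} cannot generate a term of strictly greater depth. This is exactly where the preprocessing into shallow form pays off — the arguments $N_i$ of a flattened constructor application are already recursive terms (recursion constants or recursive metavariables), never nested constructor applications, so decomposing $c\,N_1\,\dots\,N_n \doteq c\,N_1'\,\dots\,N_n'$ produces equations strictly within $\mathcal{N}$ and never escapes the finite pool. I would state this shallowness invariant explicitly as the crux of the argument and then let the finite-ascending-chain observation conclude.
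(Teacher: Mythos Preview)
Your proof is correct and follows essentially the same approach as the paper: the paper's argument is a terser version of yours, observing that the shallow grammar bounds term depth, the rules preserve maximum arity and introduce no new recursion constants, constructors, or unification metavariables, so only finitely many equations can ever be formed. Your explicit construction of the finite pools $\mathcal{N}$ and $\mathcal{U}$ and rule-by-rule verification simply makes that counting concrete (one small bookkeeping point: $\mathcal{U}$ should also include any contractive terms already occurring in initial equations of $\Delta_1$, not only the bodies of recursive definitions, in case $\Delta_1$ is not the output of the preprocessing translation).
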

\begin{proof}
  We observe that all terms in an equation are built up from recursion constants, unification metavariables, and 
  constants, and all terms have finite depth (due to the grammar) and finite width (the maximum width is preserved 
  by the algorithm). There can be only finitely many equations 
  given a bounded number of recursion constants, constructors, and unification metavariables,
  and there are no rules that create additional recursion constants, constructors, or unification metavariables.
  
\end{proof}

\begin{theorem}[Correctness of Unifiers]
  \label{thm:fo_correctness_of_unifiers}
  Given any saturated contradiction-free unification context $\Delta$, let $\Gamma = \m{unif}(\Delta)$, then 
  $\Gamma$ is a unifier for $\Delta$. Moreover, it is the most general unifier.
\end{theorem}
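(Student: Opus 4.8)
The plan is to show two things about $\Gamma = \m{unif}(\Delta)$: first, that $\Gamma$ is a well-formed substitution with $\m{dom}(\Gamma) = UV(\Delta)$ and that every equation of $\Delta[\Gamma]$ holds (so $\Gamma$ is a unifier), and second, that for an arbitrary unifier $\Gamma_2$ of $\Delta$ there is a substitution $\Gamma'$ with $\Gamma \circ \Gamma' = \Gamma_2$ (so $\Gamma$ is most general). For the first part, the key observation provided by the construction of $\m{unif}$ is that after step (4) terminates, every unification metavariable in $UV(\Delta)$ is mapped either to a resolution (a term $c\,N_1\,\dots\,N_n$ or $\recvar r$ whose own arguments/definition bodies, after substitution, mention only representative metavariables) or to its representative, so the domain is exactly $UV(\Delta)$ and $\Gamma$ is contradiction-free. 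To see that $\Gamma$ is a unifier I would take an arbitrary equation $M \doteq M' \in \m{eqs}(\Delta)$ and show $\m{exp}^{\Delta[\Gamma]}\depth k (M[\Gamma]) = \m{exp}^{\Delta[\Gamma]}\depth k (M'[\Gamma])$ for all $k$, by induction on $k$. The equations generated/closed by saturation are precisely what makes this go through: for an equation $U_1 \doteq U_2$ between contractive terms, saturation under rule~(\ref*{rule:fo_structural}) (and clash-freeness via rule~(\ref*{rule:fo_constructor_clash})) guarantees the heads agree and the argument equations are already present; for an equation $\recvar r \doteq \recvar s$ between recursion constants, rule~(\ref*{rule:fo_rec_expand}) guarantees the unfoldings are related; and rules~(\ref*{rule:fo_sym_contra})--(\ref*{rule:fo_trans_rec}) let me follow chains of metavariable resolutions so that a resolved $H^m$ and its resolution, or two metavariables in the same unresolved class, expand identically. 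The substitution $[\Gamma]$ then commutes with one level of $\m{exp}$ (this is a routine lemma about $\m{exp}$ and substitution that I would state separately), so the depth-$k$ equality reduces to depth-$(k-1)$ equalities of the arguments/bodies, which are again equations forced present by saturation, and the induction closes.

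For the most-general part, given any unifier $\Gamma_2$ of $\Delta$ I would define $\Gamma'$ to act on the representative metavariables only: for each equivalence class of unresolved metavariables with representative $K^m$, set $\Gamma'(K^m) = K^m[\Gamma_2]$ (the value $\Gamma_2$ assigns to any member of the class — well-defined up to $\doteq$ because all members are equated in $\Delta$ and $\Gamma_2$ respects that), and let $\Gamma'$ carry along the recursive definitions of $\Gamma_2$. I then need $\Gamma \circ \Gamma' = \Gamma_2$, i.e.\ for every $H^m \in UV(\Delta)$ the expansions of $H^m[\Gamma][\Gamma']$ and $H^m[\Gamma_2]$ agree at all depths. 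If $H^m$ is unresolved this is immediate from the definition of $\Gamma'$. If $H^m$ is resolved, say $H^m[\Gamma] = c\,N_1\,\dots\,N_n$ (or $\recvar r$), then because $\Gamma_2$ is a unifier and $\Delta$ contains the equation witnessing the resolution, $H^m[\Gamma_2]$ must have the same head/definition as the resolution, and an induction on depth — using again that $[\Gamma']$ commutes with $\m{exp}$ and that the arguments of the resolution are either representatives (handled by the base definition of $\Gamma'$) or recursion constants whose definitions were copied into $\Gamma$ — reduces to the claim at smaller depth.

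The main obstacle I expect is the resolved-metavariable case of both parts, specifically the bookkeeping around recursion constants: $\m{unif}$ copies $\m{defs}(\Delta)$ into $\Gamma$ and step~(4) rewrites their bodies, so I must verify that $\m{defs}(\Delta[\Gamma])$ (after the renaming that $\Delta[\Gamma]$ performs) still expands the recursion constants occurring in resolutions to the intended infinite trees, and that this matches what $\Gamma_2$ does after its own renaming. Making the two depth-inductions precise will require a clean auxiliary lemma stating that $\m{exp}\depth{k+1}(M[\Gamma]) = \m{exp}\depth{k+1}(M)$ computed with definitions $\m{defs}(\Gamma)$ merged in, plus care that the ``representative'' and ``resolution'' choices in $\m{unif}$ are consistent with $\doteq$-equality across a saturated context — a fact that itself leans on rules~(\ref*{rule:fo_sym_contra})--(\ref*{rule:fo_trans_rec}) and on the observation (stated earlier) that every resolution of a resolved metavariable is equal modulo definitional expansion. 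Once that lemma is in hand, both directions are straightforward simultaneous inductions on observation depth.
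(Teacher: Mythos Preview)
Your proposal is correct and follows essentially the same approach as the paper: Part~1 is a simultaneous induction on the observation depth $k$ over all equations $U_1 \doteq U_2$ and $N_1 \doteq N_2$ in $\Delta$, with case analysis on the shapes of the two sides using saturation under rules (\ref*{rule:fo_constructor_clash})--(\ref*{rule:fo_trans_rec}) exactly as you outline; Part~2 constructs the mediating substitution $\Gamma_1$ by recording only how $\Gamma_2$ maps the representative of each unresolved equivalence class. The paper does not isolate your ``$\m{exp}$ commutes with $[\Gamma]$'' lemma but instead unfolds the definitions inline in each case, and it is somewhat breezier than you are about the recursion-constant bookkeeping you flag as the main obstacle---but these are presentational differences, not a different argument.
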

\begin{proof}
  The proof is broken into two parts. The first part is
  to show that $\Gamma$ is a unifier, and the second part is to show that 
  $\Gamma$ is the most general unifier.

  \textbf{(Part 1)}
  To show $\Gamma$ is a unifier, 
  we need to show that $\m{dom}(\Gamma) = UV(\Delta)$, which is true by definition, and that every equation in $\Delta[\Gamma]$  holds.
  We show the following two claims simultaneously by induction on $k$, where claim (2) may refer to claim (1)
  without decreasing $k$.
  \begin{enumerate}
    \item For all $U_1 \doteq U_2$ in $\Delta$, $\m{exp}^{\Delta[\Gamma]}\depth{k}(U_1[\Gamma]) = \m{exp}^{\Delta[\Gamma]}\depth{k}(U_2[\Gamma])$.
    \item For all $N_1 \doteq N_2$ in $\Delta$, $\m{exp}^{\Delta[\Gamma]}\depth{k}(N_1[\Gamma]) = \m{exp}^{\Delta[\Gamma]}\depth{k}(N_2[\Gamma])$.
  \end{enumerate}
  Both claims are trivial when $k = 0$. Consider the case when $k > 0$, we show (1) and (2) by case 
  analysis on the structure of $U_1 \doteq U_2$ and $N_1 \doteq N_2$. 

  \begin{enumerate}[(a)]
  \item Both $U_1$ and $U_2$ have constructors as their heads. Since $\m{contra} \notin \Delta$, they must 
   have identical constructor heads. Now let $U_1 = c\, N_1\, \dots \, N_n$ and $U_2 = c\, N_1'\, \dots\, N_n'$.
    Since $\Delta$ is saturated, we have $N_i \doteq N_i'$ for all $1 \le i \le n$. The result 
    then follows from the fact that each $N_i$ and $N_i'$ 
    have equal definitional expansion up to depth $k-1$ by induction hypothesis. 

    \item Both $U_1$ and $U_2$ are contractive unification metavariables. Due to saturation, either both are unresolved,
    and the result follows because they would be in the same equivalence class and thus have the same representative unification metavariable,
     or both are resolved. 
    If they have a unique resolution $U$, then we have $\m{exp}\depth{k}(U_1)  = \m{exp}\depth{k}(U) = \m{exp}\depth{k}(U_2)$.
    If they have multiple resolutions and one of the resolutions is $U$, saturation 
    guarantees that there is an equation between every resolution. 
    Rule (\ref*{rule:fo_structural}) ensures that the equations between children of the head constructors are in $\Delta$,
    and the two terms would be equal by IH, using a similar technique as the case (a).

    \item One of $U_1$ and $U_2$ is a unification metavariable, and the other has a constructor as its head. 
    Obviously this is a resolution equation, and it suffices to show that all other resolutions have equal definitional expansions up to depth $k$, 
    which follows from saturation and the case (a).

    \item Both $N_1$ and $N_2$ are recursion constants. Let $N_1 = \recvar r$, where $\recvar r =_d U_1 \in \Delta$ and $N_2 = \recvar s$, where $\recvar s =_d U_2 \in \Delta$.
    Since $\Delta$ is saturated, $U_1 \doteq U_2 \in \Delta$, and by IH (i.e. case (a) above), $\m{exp}\depth{k}^{\Delta[\Gamma]}(U_1[\Gamma]) = \m{exp}\depth{k}^{\Delta[\Gamma]}(U_2[\Gamma])$, and

    \begin{tabular}{ll}
    $\m{exp}\depth{k}^{\Delta[\Gamma]}(\recvar r[\Gamma])$ & \\
     $= \m{exp}\depth{k}^{\Delta[\Gamma]}(\recvar r)$ & (since $\recvar{r}[\Gamma] = \recvar{r}$)\\
      $= \m{exp}\depth{k}^{\Delta[\Gamma]}(U_1[\Gamma])$ & (by the definition of $\Delta[\Gamma]$ and definitional expansion)\\
    $= \m{exp}\depth{k}^{\Delta[\Gamma]}(U_2[\Gamma])$ & (shown)\\
    $= \m{exp}\depth{k}^{\Delta[\Gamma]}(\recvar s) $ & (by the definition of $\Delta[\Gamma]$ and definitional expansion)\\
    $= \m{exp}\depth{k}^{\Delta[\Gamma]}(\recvar s[\Gamma]) $ &(since $\recvar{s}[\Gamma] = \recvar{s}$) \\
    \end{tabular}

    \item The case when either or both of $N_1$ and $N_2$ are recursive unification metavariables are exactly analogous 
    to cases (b) and (c).
    
  \end{enumerate}

  \textbf{(Part 2)}
  To show $\Gamma$ is the most general unifier, given any other unifier $\Gamma_2$ of $\Delta$, it suffices to 
  construct a unifier $\Gamma_1$ such that $\Gamma \circ \Gamma_1 =\Gamma_2$. But 
  the construction of $\Gamma_1$ is easy: $\Gamma_2$ must map resolved unification metavariables analogously as 
  $\Gamma$ (otherwise a contradiction will arise), and it may choose to map equivalence classes of unresolved unification metavariables freely. $\Gamma_1$ simply records how unresolved unification metavariables are mapped in $\Gamma_2$.

\end{proof}

\section{Higher-Order Pattern Unification}
\label{sec:ho_case}

In this section, we extend the algorithm to a higher-order setting. In particular, both recursive definitions and 
unification metavariables are of higher-order types: they may be applied to pattern variables. 
The main technical challenge of the higher-order case is to handle the scoping of variables in the presence of recursive 
definitions.  We copy Miller's higher-order pattern unification algorithm \citet{Miller91jlc} for non-recursive cases 
and handle the recursive definitions by delegating the scoping to unification metavariables (i.e. Rule (\ref*{rule:ho_rec_pruning}) in Section~\ref{sec:algorithm}).
In terms of presentation, recursion constants and unification metavariables are always applied to pattern variables.
We need to update the definitions to take variable renaming into account.

\subsection{Problem Formulation}
\label{sec:problem_formulation}

We now give a similar development by allowing recursion constants and unification metavariables to carry patterns \cite{Miller91jlc}.
A \emph{pattern} is a list of pairwise distinct bound variables (written $x$, $y$, or $z$), and 
the \emph{pattern restriction} ensures that
a recursion constant or a unification metavariable may only be applied to
a pattern. 
Here's an example of a higher-order pattern unification problem (without recursive definitions).
\[\lambda x. \, \lambda y.\, \lambda z.\, c \, (F\, x\, y) \doteq 
\lambda x. \, \lambda y.\, \lambda z.\, c\, (G\, y\, z) 
\]
A variable may not appear free in a unifier.
For instance, the substitution $F \, x\, y=x, G \, y\, z=  x$ (i.e. $F = \lambda x.\, \lambda y.\, x, G = \lambda y.\, \lambda z.\, x$) is not a unifier because $x$ is free in the substitution of $G$
but the substitution  $F  \, x\, y=  d, G \, y\, z= d$ is a unifier.

Regular B\"ohm trees \cite{Huet98mscs}, subsequently termed higher-order rational terms, provide a natural model 
for higher-order terms.
As with the first-order case, our use of a context containing recursive definitions for recursion constants
follows the design of CoLF \cite{Chen23fossacs}. While CoLF allows repetitions of bound variables in 
the arguments to recursion constants, we disallow them in the setting of unification to ensure that most general unifiers exist.
This is not a restriction in practice, because 
 any recursive definition with repetitive arguments 
  can be rewritten to definitions within the pattern fragment,  as observed by \citet{Huet98mscs}.
  For example, if we have non-pattern appears as arguments to a recursion constant $\recvar{r}\, x\, x$, with $\recvar{r} =_d \lambda y.\, \lambda z.\, T$, 
  we can always create a fresh recursion constant $\recvar{t}$ that mimics $\recvar{r}$, i.e., $\recvar{t}\, x$ with $\recvar{t} =_d \lambda w.\, [w,w/y,z]T$.
We assume that every unification metavariable, recursion constant, constructor, or variable is assigned a simple type, and 
terms are always written in $\beta$-normal-$\eta$-long forms, except that 
arguments to recursion constants and unification metavariables are written in non-expanded forms.
We also assume that $\lambda$-bound variables may undergo $\alpha$-renaming.
Here's the grammar for the unification problem in concrete syntax.
\begin{center}
  \begin{tabular}{lll}
    Concrete Unification Contexts & $\Delta_c ::= []\mid \Delta_c, T_1 \doteq T_2 \mid \Delta_c, \recvar r =_d \lambda x_1.\dots\lambda x_l.\,  h\, T_1\, \dots\, T_n $\\
    Terms & $ T ::=  \lambda x_1. \dots. \lambda x_l. \, h \, T_1\, \dots\, T_n \mid \lambda x_1. \dots. \lambda x_l. \, H \, y_1\, \dots\, y_n $ \\
     & $\qquad\mid  \lambda x_1. \dots. \lambda x_l. \, \recvar{r} \, y_1\, \dots\, y_n $
    \\
    Constructor or Variable Heads & $ h ::=  c \mid x $ \\
  \end{tabular}
  \end{center}

  To avoid visual clutter when writing down a list of terms, we adopt the following conventions of using overlines to represent a list of terms.
  \begin{enumerate}
    \item A list of variables
  $\bar{x}$ means $x_1, \dots, x_l$ that are pairwise distinct.
  \item A list of variables appearing in a binder position means iterative abstractions. For example,
  $\lambda\bar{x}$ means $\lambda x_1. \dots. \lambda x_l.$.
    \item A list of variables in an application means iterative applications.
  For example, $c\, \bar{x}$ means $c\, x_1 \,\dots\, x_n$. Similarly, a list of terms in an application position means iterative applications.
   For example, $h\, \overline{N}$ means $h\, N_1\,\dots\, N_n$.
  
   \item The notation $[\bar y/\bar x]M$ denotes the simultaneous renaming of variables, substituting $\bar y$ for $\bar x$ in $M$.
  \end{enumerate}

  With the new abbreviation notation, the grammar for the concrete syntax for a unification problem may be written as the following.
\begin{center}
  \begin{tabular}{lll}
    Concrete Unification Contexts & $\Delta_c ::= []\mid \Delta_c, T_1 \doteq T_2 \mid \Delta_c, \recvar r =_d \lambda \bar x.\,  h\,  \overline{T} $\\
    Terms & $ T ::=  \lambda \bar x.\, h\, \overline T \mid \lambda \bar x. \, H \, \bar y \mid \lambda \bar x.\, \recvar r\, \bar y$ \\
    Constructor or Variable Heads & $ h ::=  c \mid x $ \\
  \end{tabular}
  \end{center}

The grammar enforces that the definition for a recursion constant is required to 
be contractive: it has a variable or a constructor for its head. We use $FV(T)$ to denote 
the set of free variables in $T$.
We require all recursive definitions to be closed in the sense that $r =_d \lambda \bar x.\,  h\,  \overline{T}  \in \Delta_c$ implies that $FV(h\, \overline{T}) \subseteq \bar x$.

As with the first-order case, we define the infinitary denotation of $T$ in a context $\Delta_c$ by depth $k$ observations of $M$. 
Now $M_\bot$ includes $\lambda$-bindings and variables. 
\[M_\bot ::=\lambda \bar x.\, y\, \overline{M_\bot} \mid\lambda \bar x.\, c\, \overline{M_\bot} \mid \lambda \bar x.\, H\hcontra\, \overline{M_\bot}  \mid \lambda \bar x.\, H\hrec\, \overline{M_\bot}  \mid \bot\]

We define \emph{definitional expansion} up to depth $k$ of a term $T$ into $M_\bot$ as the function $\m{exp}^{\Delta_c}\depth k(T) =M_\bot$, defined by lexicographic induction on $(k, T)$.
We omit $\Delta_c$ to reduce visual clutter if it is not referenced.

\begin{center}
  \begin{tabular}{ll}
  $\m{exp}\depth{0}(T) = \bot$ \\
  $\m{exp}\depth{k+1}(\lambda \bar x.\,h\, T_1 \, \dots\, T_n) =\lambda \bar x.\, h\, (\m{exp}\depth{k}(T_1))\, \dots \, (\m{exp}\depth{k}(T_n))$ \\
  $\m{exp}\depth{k+1}(\lambda \bar x.\,H\, \bar y)  = \lambda \bar x.\,H\hrec\bar y$ \\
  $\m{exp}^{\Delta_c}\depth{k+1}(\lambda \bar x.\,\recvar r\, \bar y)  = \m{exp}^{\Delta_c}\depth{k+1}(\lambda \bar x.\,  [\bar y/ \bar z] (h\, \overline{T}))$ if $\recvar r =_d \lambda \bar z.\,h\, \overline{T} \in \Delta_c$\\
  \end{tabular}
\end{center}

As an example, we use an encoding of stream processors $\mt{sp}$ \cite{Ghani09lmcs,Danielsson10mpc,Abel16jfp}. 
At each step, a stream processor 
may choose to consume an input element (\verb$get$) or produce an output element (\verb$put$) and may do so indefinitely.
\footnote{Stream processors were used to illustrate the semantics of mixed-induction and coinduction, 
but here we consider stream processors to be purely coinductive. Thus, we are happy to accept stream 
processors that keep consuming inputs without producing an output. }
The use of $\lambda$-bindings due to the typing of \verb$get$ ensures that 
a stream can only produce elements that it has consumed.
\begin{verbatim}
sp : cotype.
element : type.
get: (element -> sp) -> sp.
put: element -> sp -> sp.
\end{verbatim}
We may define stream processors $\mt{odd}$ or $\mt{even}$ that return only the odd-indexed or even-indexed elements, where the index starts from 0.
We write $\lambda$-abstractions in square brackets, following the convention of CoLF \cite{Chen23fossacs}.
\begin{verbatim}
odd : sp = get ([x] even).
even : sp = get ([x] put x odd).
\end{verbatim}
We may use unification to determine what is the behavior of the stream processor $\mt{S}$
after reading two elements of the input, if it behaves the same as $\mt{odd}$.
\begin{verbatim}
?- get ([x] get ([y] S x y)) = odd.
\end{verbatim}
The problem may be posed as the following concrete unification context $\Delta_c$, which will be used as a 
running example.
\[\Delta_c = \{\mi{get}\, (\lambda x.\, \mi{get}\,(\lambda y. \, S\, x\, y)) \doteq \recvar{odd},
 \recvar{odd}=_d \mi{get}\, (\lambda x.\, \recvar{even}),
 \recvar{even}=_d \mi{get}\, (\lambda x.\, \mi{put}\, x\, \recvar{odd})
 \}
 \]
 Eventually, we will find the following most general unifier, written in the concrete syntax.
\[\Gamma_c  =\{
 S\,x\,y \doteq \recvar{r_3}\, y,
  \recvar{odd}=_d \mi{get}\, (\lambda x.\,\mi{get}\, (\lambda y.\, \recvar{r_3}\, y)), 
  \recvar{r_3} =_d \lambda w.\, \mi{put}\, w\, \recvar{odd},
\}\]

\subsection{Preprocessing}
\label{sec:preprocessing}
As with the first-order case,
we preprocess the unification problem $\Delta_c$ such that every recursive definition will only 
be one level deep. In the higher-order case, this processing is similar to Huet's treatment of regular B\"ohm trees  \citeyearpar{Huet98mscs}. 
Terms are divided into contractive terms $U$, which have constructors ($c, d, e$), bound variables ($x, y, z$), or contractive unification metavariables ($E\hcontra, F\hcontra, G\hcontra, H\hcontra$) as their heads,
and recursive terms $N$, which have either recursion constants ($\recvar{r},\recvar{s},\recvar{t}$) or recursive unification metavariables ($E\hrec, F\hrec, G\hrec, H\hrec$) as their heads.
It is still the case that the terms are always written in $\beta$-normal-$\eta$-long forms, with the exception that 
arguments to recursion constants and unification metavariables are written in non-expanded forms.
The grammar for terms in their preprocessed form is summarized as follows:

\begin{center}
\begin{tabular}{lll}
  Unification Contexts & $\Delta ::= []\mid \Delta, U_1 \doteq U_2 \mid \Delta, N_1 \doteq N_2 \mid \Delta, \recvar r =_d U \mid \Delta, \m{contra}$\\
  Contractive  Terms & $ U ::= \lambda \bar x.\,y\, \overline N \mid \lambda \bar x.\,c \, \overline{N} \mid \lambda \bar x.\,H\hcontra \, \bar y$ \\
  Recursive Terms & $N ::= \lambda \bar x. \, \recvar r\, \bar y \mid \lambda \bar x.\, H\hrec\, \bar y$ \\
\end{tabular}
\end{center}

We use the letter $h$ to denote either constructors $c$ or variables $x$, but not unification metavariables.
We use $FV(U)$ or $FV(N)$ to denote the set of free variables in $U$ or $N$, and  
may use the syntax category $M$ to denote either $U$ or $N$. 
We also require all recursive definitions to be closed in the sense that $\recvar r =_d U \in \Delta$ implies $FV(U) = \emptyset$.
As with the first-order case,  $UV(\Delta)$ denotes 
the set of unification metavariables in $\Delta$. $\m{defs}(\Delta)$ and $\m{eqs}(\Delta)$ denote the list of recursive definitions and equations of $\Delta$ respectively.
$\Delta_1,\Delta_2$ denotes the union of two contexts $\Delta_1$ and $\Delta_2$, consistently renaming recursion constants in $\Delta_2$ if necessary.
$\Delta$ is \emph{contradiction-free} if $\m{contra}\notin\Delta$.

As with first-order terms, we use the judgments
$\Delta_c \rhd \Delta$,
$T \rhd\hcontra U \diamond \Delta$,
$T \rhd\hrec N \diamond \Delta$ for translating from concrete syntax into unification contexts, contractive terms, 
and recursive terms. They are defined as follows.

\resetRN
\boxed{\Delta_c \rhd \Delta}
\begin{mathpar}
  \inferrule{ }{[] \rhd []}\RN

\inferrule{
\Delta_c \rhd \Delta_1 \\
 T_1 \rhd\hrec N_1 \diamond \Delta_2 \\
  T_2 \rhd\hrec N_2 \diamond \Delta_3}
  {\Delta_c, T_1 \doteq T_2 \rhd \Delta_1,  \Delta_2, \Delta_3, N_1 \doteq N_2}
  \RN \label{rule:_xxx_fefef}

  \inferrule{
\Delta_c \rhd \Delta_1 \\  \lambda \bar x.\, h\, \overline{T} \rhd\hcontra U \diamond \Delta_2
  }{
\Delta_c, \recvar r =_d  \lambda \bar x.\, h\, \overline T \rhd \Delta_1, \Delta_2, \recvar r =_d U
  }\RN
\end{mathpar}

\boxed{T \rhd\hrec N \diamond \Delta}
\begin{mathpar}
  \inferrule{
h\, \overline{T} \rhd\hcontra U \diamond \Delta 
\\ \bar z = FV(h\, \overline{T})
  }{
\lambda \bar x.\, h\, \overline{T} \rhd\hrec \lambda \bar x.\, \recvar r\, \bar z \diamond (\Delta, \recvar r =_d\lambda \bar z.\, U)
  }(\recvar r \text{ fresh})\RN
  \label{rule:ho_rec_exp_head_const}
  \label{rule:ho_trans_into_rec}

  \inferrule{
  }{
    \lambda \bar x.\, H\,\bar y \rhd 
    \lambda \bar x.\, H\hrec\,\bar y 
     \diamond []
  }\RN

  \inferrule{
  }{
    \lambda \bar x.\, \recvar{r}\,\bar y \rhd 
    \lambda \bar x.\, \recvar{r}\,\bar y 
     \diamond []
  }\RN
\end{mathpar}

\boxed{T \rhd\hcontra U \diamond \Delta}
\begin{mathpar}
  \inferrule{
    \forall_{i, 1 \le i \le n}. T_i \rhd\hrec N_i \diamond \Delta_i
  }{
\lambda \bar x.\, h\, T_1\, \dots\, T_n \rhd\hcontra 
\lambda \bar x.\, h\, N_1\, \dots\, N_n \diamond (\Delta_1, \dots, \Delta_n)
  }\RN

  (\text{No rules for $T = \lambda \bar x.\, H\,\bar y$ or $T = \lambda \bar x.\, \recvar r\, \bar y$})
\end{mathpar}

In Rule~\ref{rule:ho_trans_into_rec}, we ensure that the body of a recursive definition is always closed by 
abstracting over all free variables when creating a recursive definition.

As an example, we show how the unification problem $\Delta_c$ in the previous section is translated into $\Delta$.
Notice that the left-hand-side of the unification equation $\mi{get}\, (\lambda x.\, \mi{get}\, (\lambda y. \, \mi{S}\, x\, y))$ 
is moved into a recursive definition $\recvar{r_1}$ according to the definition, and the body of $\mi{even}$ is extracted to $\recvar{r_3}$.

\begin{multicols}{2}
\begin{tabular}{ll}
$\Delta_c = \{ $\\
$\ \mi{get}\, (\lambda x.\, \mi{get}\,(\lambda y. \, S\, x\, y)) \doteq \recvar{odd}, $\\
\\
\\
$\ \recvar{odd}=_d \mi{get}\, (\lambda x.\, \recvar{even}), $\\
$\ \recvar{even}=_d \mi{get}\, (\lambda x.\, \mi{put}\, x\, \recvar{odd}) $\\
$\}$\\
\end{tabular}
 
\columnbreak
\begin{tabular}{ll}
$\Delta = \{ $\\
$\ \recvar{r_1} \doteq \recvar{odd}, $\\
$\ \recvar{r_1} =_d \mi{get}\, (\lambda  x.\, \recvar{r_2}\,  x), $\\
$\ \recvar{r_2} =_d \lambda x.\, \mi{get}\, (\lambda  y.\, S\hrec\, x\,y), $\\
$\ \recvar{odd}=_d \mi{get}\, (\lambda x.\, \recvar{even}), $\\
$\ \recvar{even}=_d \mi{get}\, (\lambda x.\, \recvar{r_3}\, x), $\\
$\ \recvar{r_3} =_d \lambda x.\, \mi{put}\, (\recvar{r_4}\, x)\, \recvar{odd},$\\
$\ \recvar{r_4} =_d \lambda x.\, x$\\
$\}$\\
\end{tabular}
\end{multicols}

As with the first-order case, we define the definitional expansion at depth $k$ for a recursive or a contractive term mutually recursively,
 $\m{exp}^{\Delta}\depth k(U) = M_\bot$ and $\m{exp}^{\Delta}\depth k(N) = M_\bot$. We also take the liberty 
 to omit writing $\Delta$ if it 
 remains unchanged throughout and is not referenced.

 \begin{center}
 \begin{tabular}{ll}
 $\m{exp}\depth 0(U) = \bot$ \\
 $\m{exp}\depth{k+1}(\lambda \bar x.\,H\hcontra\, \bar y)  = \lambda \bar x.\,H\hcontra\bar y$ \\
 $\m{exp}\depth{k+1}(\lambda \bar x.\,h\, T_1 \, \dots\, T_n) =\lambda \bar x.\, h\, (\m{exp}\depth{k}(T_1))\, \dots \, (\m{exp}\depth{k}(T_n))$ \\[1ex]
 $\m{exp}\depth 0(N) = \bot$ \\
 $\m{exp}\depth{k+1}(\lambda \bar x.\,H\hrec\, \bar y)  = \lambda \bar x.\,H\hrec\bar y$ \\
 $\m{exp}^{\Delta}\depth{k+1}(\lambda \bar x.\,\recvar r\, \bar y)  = \m{exp}^{\Delta}\depth{k+1}(\lambda \bar x.\,  [\bar y/ \bar z] (h\, \overline{T}))$
 if $\recvar{r} =_d \lambda \bar z.\,h\, \overline{T} \in \Delta$\\
 \end{tabular}
\end{center}

We show that 
the translation preserves the definitional expansion of arbitrary depth.

\begin{theorem}
  We have
  \begin{enumerate}
    \item 
If $T \rhd \hcontra U \diamond \Delta_2$
  and $\m{exp}^{\Delta_c}\depth k (\lambda \bar x.\, \recvar{s}\, \bar y) = \m{exp}^{\Delta_1}\depth k (\lambda \bar x.\, \recvar s \, \bar y)$ for all $\recvar s$ occurring in $T$,  
  then $\m{exp}^{\Delta_c}\depth k (T) = \m{exp}^{\Delta_1, \Delta_2}\depth k(U)$.
    \item 
If $T \rhd \hrec N \diamond \Delta_2$
  and $\m{exp}^{\Delta_c}\depth k (\lambda \bar x.\, \recvar s\, \bar y) = \m{exp}^{\Delta_1}\depth k (\lambda \bar x.\, \recvar s\, \bar y)$ for all $\recvar s$ occurring in $T$,  
  then $\m{exp}^{\Delta_c}\depth k (T) = \m{exp}^{\Delta_1, \Delta_2}\depth k(N)$.
  \end{enumerate}
\end{theorem}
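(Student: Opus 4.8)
The plan is to transcribe the first-order proof, adding only the bookkeeping for binders and free variables. I would argue by simultaneous induction on the lexicographically ordered pair $(k,T)$, proving (1) and (2) together; in the step for (2) we are allowed to appeal to (1) at the \emph{same} pair $(k,T)$, because in Rule~\ref*{rule:ho_trans_into_rec} the contractive translation in the premise is on $h\,\overline{T}$ (viewed as a term with empty binder), which is no larger than $\lambda\bar x.\, h\,\overline{T}$. As in the first-order case the argument rests on \emph{monotonicity of truncation}: there is an evident truncation map $\tau$ on $M_\bot$-terms with $\m{exp}\depth{k-1}(M) = \tau_{k-1}(\m{exp}\depth{k}(M))$, so any hypothesis of the form $\m{exp}^{\Delta_c}\depth{k}(\lambda\bar x.\,\recvar s\,\bar y) = \m{exp}^{\Delta_1}\depth{k}(\lambda\bar x.\,\recvar s\,\bar y)$ automatically descends to all smaller depths, which is what the inductive step actually consumes.

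Before the main induction I would record two routine auxiliary lemmas, each by a parallel induction on the translation derivations: \textbf{(i)} translation preserves free variables, i.e. $FV(U)=FV(T)$ when $T \rhd\hcontra U \diamond \Delta_2$ and $FV(N)=FV(T)$ when $T \rhd\hrec N \diamond \Delta_2$ --- this is exactly what makes the definition $\recvar r =_d \lambda\bar z.\, U$ created in Rule~\ref*{rule:ho_trans_into_rec} closed, since there $\bar z = FV(h\,\overline{T}) = FV(U)$; and \textbf{(ii)} every recursion constant declared in $\Delta_2$ is fresh, hence it may be added to or deleted from a context without changing any expansion that does not mention it. With these in hand the induction splits on the last translation rule. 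For part (1) the only rule is the contractive one: when $k=0$ both sides are $\bot$; when $k=k'+1$, unfold $\m{exp}\depth{k'+1}(\lambda\bar x.\, h\,T_1\,\dots\,T_n) = \lambda\bar x.\, h\,(\m{exp}\depth{k'}(T_1))\,\dots\,(\m{exp}\depth{k'}(T_n))$, apply the induction hypothesis (part (2), on the strictly smaller term $T_i$, at depth $k'$) to each premise $T_i \rhd\hrec N_i \diamond \Delta_i$, using monotonicity to transport the $\recvar s$-hypotheses down to depth $k'$ and lemma (ii) to merge the freshly introduced per-argument subcontexts into one, and reassemble --- exactly the computation displayed for Rule~\ref*{rule:fo_trans_into_contra} in the first-order proof, decorated with the binder prefix $\lambda\bar x$.

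For part (2) the two cases $T=\lambda\bar x.\, H\,\bar y$ and $T=\lambda\bar x.\,\recvar r\,\bar y$ are immediate ($\Delta_2=[]$, and the two expansions are literally identical, invoking the $\recvar s$-hypothesis in the recursion-constant case). The case of Rule~\ref*{rule:ho_trans_into_rec} is the crux: with $\recvar r$ fresh, $\recvar r =_d \lambda\bar z.\, U$, and $U = h\,\overline{N}$, the expansion clause for recursion constants gives $\m{exp}\depth{k}(\lambda\bar x.\,\recvar r\,\bar z) = \m{exp}\depth{k}(\lambda\bar x.\,[\bar z/\bar z](h\,\overline{N})) = \m{exp}\depth{k}(\lambda\bar x.\, h\,\overline{N})$, where the instantiating renaming is the identity precisely because $\recvar r$ is applied to the very variables it abstracts; by lemma (ii) this equals $\m{exp}^{\Delta_1,\Delta}\depth{k}(\lambda\bar x.\, U)$, and by part (1) invoked at the same pair $(k,h\,\overline{T})$ this equals $\m{exp}^{\Delta_c}\depth{k}(\lambda\bar x.\, h\,\overline{T}) = \m{exp}^{\Delta_c}\depth{k}(T)$, yielding the same chain of equalities displayed for Rule~\ref*{rule:fo_trans_into_rec}. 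I expect the only genuine obstacle relative to the first-order argument to be this $\alpha$-conversion and free-variable bookkeeping around Rule~\ref*{rule:ho_trans_into_rec}: checking closedness of $\lambda\bar z.\, U$, triviality of the instantiating renaming, and that prefixing the binders $\lambda\bar x$ commutes with $\m{exp}\depth{k}$ throughout (trivially at $k=0$) --- none of it deep, but all of it absent from the first-order proof, so worth stating explicitly.
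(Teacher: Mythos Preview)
Your proposal is correct and follows essentially the same route as the paper: simultaneous induction on $(k,T)$ with (2) allowed to invoke (1) without a decrease, and the key case being Rule~\ref*{rule:ho_trans_into_rec}, handled by the same chain of equalities (freshness of $\recvar r$, commutation of $\m{exp}\depth{k}$ with the binder prefix $\lambda\bar x$, and triviality of the renaming $[\bar z/\bar z]$). Your auxiliary lemmas on free-variable preservation and freshness, and your explicit appeal to monotonicity of truncation to push the $\recvar s$-hypothesis from depth $k$ down to $k-1$, are details the paper leaves implicit but which are indeed needed; making them explicit is an improvement, not a divergence.
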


\begin{proof}
  Simultaneous induction on $(k, T)$, where (2) may appeal to (1) without a decrease in size.

  We show the case for rule (\ref*{rule:ho_rec_exp_head_const}) as an example. 
  The premise $h\, \overline{T} \rhd\hcontra U \diamond \Delta$ implies, by the induction hypothesis, that
  $\m{exp}^{\Delta_c}\depth k (h\, \overline T) = \m{exp}^{\Delta_1, \Delta}\depth k(U)$. 
  Let $\bar z = FV(h\, \overline T)$, 
  we want to show that $\m{exp}^{\Delta_c}\depth k (\lambda \bar x.\, h\, \overline T) = \m{exp}^{\Delta_1, \Delta, \recvar r=_d\lambda \bar z.\, U}\depth k(\lambda \bar x.\, \recvar r\, \bar z)$.
  Observe that both $\m{exp}^{\Delta_c}\depth k$ and $\m{exp}^{\Delta}\depth k$ commute with $\lambda$-abstractions, 
  and that $\m{exp}^\Delta\depth k$ is fixed by 
  the recursive definitions for recursion constants that occur in the argument, i.e. $\m{exp}^{\Delta_1, \Delta}\depth k (U) = \m{exp}^{\Delta_1, \Delta, \recvar{r}=_d\lambda \bar z.\, U}\depth k(U)$.
  We have

  \begin{tabular}{ll}
$\m{exp}^{\Delta_c}\depth k (\lambda \bar x.\, h\, \overline T) $ & \\
$= \lambda \bar x.\,\m{exp}^{\Delta_c}\depth k ( h\, \overline T) $ &(by definition) \\
$= \lambda \bar x.\,\m{exp}^{\Delta_1, \Delta}\depth k ( U) $ & (by the induction hypothesis)\\
$= \lambda \bar x.\,\m{exp}^{\Delta_1, \Delta, \recvar r=_d\lambda \bar z.U}\depth k ( U) $ & (because $\recvar r$ is fresh) \\
$=  \lambda \bar x.\,\m{exp}^{\Delta_1, \Delta, \recvar r=_d\lambda \bar z.\, U}\depth k( \recvar r\, \bar z) $ & (by definition)\\
$=  \m{exp}^{\Delta_1, \Delta, \recvar r=_d\lambda \bar z.\, U}\depth k(\lambda \bar x.\, \recvar r\, \bar z)$ & (by definition)\\
  \end{tabular}

\end{proof}

\begin{corollary}
  If $\Delta_c \rhd \Delta$, then every equation in $\Delta_c$ corresponds to an equation in $\Delta$ with an equal 
  definitional denotation, and every recursive definition in $\Delta_c$ corresponds to a recursive definition in $\Delta$.
\end{corollary}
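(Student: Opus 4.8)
The plan is to imitate the proof of the first-order corollary: structural induction on the derivation of $\Delta_c \rhd \Delta$, invoking the preceding theorem on preservation of definitional expansion in the two nontrivial rules. In the base case $[] \rhd []$ there is nothing to show. Because the theorem's hypothesis mentions the recursion constants occurring in the term being translated, it is convenient to carry a slightly strengthened invariant through the induction: whenever $\Delta_c \rhd \Delta$, (i) for every recursion constant $\recvar r$ defined in $\Delta_c$ and all $\bar x, \bar y, k$ we have $\m{exp}^{\Delta_c}\depth k(\lambda\bar x.\,\recvar r\,\bar y) = \m{exp}^{\Delta}\depth k(\lambda\bar x.\,\recvar r\,\bar y)$, and (ii) each equation $T_1 \doteq T_2$ of $\Delta_c$ has a counterpart $N_1 \doteq N_2$ in $\Delta$ with $\m{exp}^{\Delta_c}\depth k(T_i) = \m{exp}^{\Delta}\depth k(N_i)$ for all $k$. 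The statement of the corollary follows from (i) and (ii) together with the purely syntactic fact, read off the rules, that every concrete definition is translated to exactly one one-level recursive definition.

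For the equation rule we have $\Delta_c, T_1 \doteq T_2 \rhd \Delta_1, \Delta_2, \Delta_3, N_1 \doteq N_2$ from $\Delta_c \rhd \Delta_1$ and $T_i \rhd\hrec N_i \diamond \Delta_{i+1}$ ($i = 1,2$). Clause (i) of the induction hypothesis applied to $\Delta_c \rhd \Delta_1$ says precisely that the recursion constants occurring in each $T_i$ (all already defined in $\Delta_c$) have equal expansion in $\Delta_c$ and in $\Delta_1$, which is the hypothesis needed to apply the preceding theorem; it yields $\m{exp}^{\Delta_c}\depth k(T_i) = \m{exp}^{\Delta_1,\Delta_{i+1}}\depth k(N_i)$. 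Since definitional expansion depends only on the recursive definitions of a context, since $\Delta_2,\Delta_3$ introduce only fresh recursion constants, and since the added equation is irrelevant to $\m{exp}$, the right-hand side is unchanged when recomputed over the full context $\Delta_1,\Delta_2,\Delta_3, N_1\doteq N_2$; this gives (ii) for the new equation, and (i) is inherited unchanged because no definition of an old constant is touched. The definition rule $\Delta_c, \recvar r =_d \lambda\bar x.\,h\,\overline T \rhd \Delta_1,\Delta_2,\recvar r =_d U$ is handled symmetrically: apply the contractive variant of the theorem to $\lambda\bar x.\,h\,\overline T \rhd\hcontra U \diamond \Delta_2$ and then unfold $\recvar r$ once on each side to recover (i) for the newly added constant.

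I expect the only delicate point to be the bookkeeping around recursion-constant scope: one must be careful that the disjoint-union notation $\Delta_1,\Delta_2,\Delta_3$ renames apart the fresh constants produced by the two subderivations so that $\m{exp}$ over the merged context restricts correctly to $\m{exp}$ over each piece, and one must notice that the hypothesis demanded by the invoked theorem is exactly clause (i) of the strengthened induction hypothesis rather than something one has to reprove. Everything else is a routine unwinding of the clauses defining $\m{exp}$, using that it commutes with $\lambda$-abstraction and discards unification equations.
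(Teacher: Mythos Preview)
Your proposal matches the paper's stated approach---the paper's proof is literally one line, ``Directly by structural induction over $\Delta_c \rhd \Delta$,'' and you are spelling out what that induction must carry.

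There is, however, a genuine gap in your treatment of the definition case. When $\recvar r =_d \lambda\bar x.\,h\,\overline T$ is self-referential (or part of a mutually recursive cluster, as in the paper's $\recvar{odd}/\recvar{even}$ example), the body $h\,\overline T$ contains $\recvar r$ itself. To apply the contractive clause of the preceding theorem you must supply its hypothesis for every recursion constant occurring in $h\,\overline T$; for $\recvar r$ this is precisely clause~(i) at $\recvar r$ in the \emph{extended} contexts---the very fact you are in the middle of establishing. The induction hypothesis from the subderivation $\Delta_c \rhd \Delta_1$ only delivers (i) for constants already defined in the shorter $\Delta_c$, with expansions computed in that shorter context (where $\recvar r$ has no definition at all), so it cannot discharge the theorem's hypothesis. ``Unfold $\recvar r$ once on each side'' does not break this circularity: after one unfolding you are back to needing (i) for $\recvar r$ at the same depth.

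The clean fix is to decouple the two halves of your invariant. Use the structural induction purely syntactically, to record that each concrete definition $\recvar r =_d \lambda\bar x.\,h\,\overline T$ has a counterpart $\recvar r =_d U$ in $\Delta$ arising from some $\lambda\bar x.\,h\,\overline T \rhd\hcontra U \diamond \Delta_2$ with $\Delta_2 \subseteq \Delta$, and likewise that each equation corresponds. Then prove clause~(i) for the \emph{full} contexts $\Delta_c$ and $\Delta$ by an outer induction on $k$, simultaneously for all recursion constants: at step $k{+}1$, unfolding $\recvar r$ once on each side exposes the head $h$ and reduces the question to equality of the arguments at depth $k$; the preceding theorem applied at depth $k$ then only needs (i) at depths $\le k$, which the outer induction hypothesis supplies. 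With (i) available globally, clause~(ii) for each equation follows from a single invocation of the theorem, taking $\Delta_1$ to be (the relevant part of) the full $\Delta$.
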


\begin{proof}
  Directly by structural induction over $\Delta_c \rhd \Delta$.
\end{proof}

\subsection{Term Equality and Unifiers}
\label{sec:term_equality_and_unifiers}

The core ideas for term equality, substitution, and unifiers for the higher-order case 
are similar to the first-order case.
The main technical difference from the first-order case is that $H^m\, \bar x \doteq U$ is to be interpreted as $H^m$ standing for $\lambda \bar x.\, U$.
We will only repeat the most essential definitions.

Informally two terms are equal if they have the same definitional expansion.
Formally, $M$ is \emph{equal} to $M'$ in a context $\Delta$ (i.e. $M \doteq M'$ holds) if 
for all $k$, $\m{exp}^{\Delta}\depth k (M) = \m{exp}^{\Delta}\depth k (M')$.

There are two kinds of substitutions in the higher-order case, substitutions for ordinary variables
and substitutions for unification metavariables. Due to the pattern restriction, the only substitutions 
for ordinary variables are simultaneous variable renaming that we have seen, and are written in the notation $[\bar y /\bar x]M$.
Substitutions for unification metavariables remain a special form of unification contexts, but they are now higher-order.

A \emph{substitution}  is a contradiction-free unification context
where the left-hand side of each unification equation is a unique unification metavariable followed by a list of bound variables, 
which is a superset of the free variables occurring on the right-hand side of that equation. 
Intuitively, the variables following a unification metavariable serve as $\lambda$-binders for its value (on the right-hand side).
All unification equations in a substitution are of the form $H\hcontra\, \bar x \doteq U$ or $H\hrec\, \bar x \doteq N$, 
where $FV(U) \subseteq \bar x$ and $FV(N) \subseteq \bar x$. The equation $H\hcontra\, \bar x \doteq U$ or $H\hrec\, \bar x \doteq N$ 
is called a \emph{substitution equation} for $H\hrec$ or $H\hcontra$ in $\Gamma$. The intuitive meaning of a substitution equation $H^m\, \bar x\doteq M$ is that $H^m$ ``stands for''  $\lambda \bar x.\, M$.
Since terms are all written in their $\eta$-long-form, $U$ or $N$ on the right-hand side of the unification equation should not contain top-level $\lambda$-bindings.
The set of unification metavariables that occur on the left-hand sides of the unification equations in a substitution $\Gamma$ is 
called the \emph{domain} of the substitution and is denoted $\m{dom}(\Gamma)$.
Two substitutions $\Gamma$ and $\Gamma'$ are equal 
if $\m{dom}(\Gamma) = \m{dom}(\Gamma')$, 
and for all $H^m \in \m{dom}(\Gamma)$, $\m{exp}^\Gamma\depth{k}((\lambda \bar x.\, H^m\, \bar x)[\Gamma]) = \m{exp}^{\Gamma'}\depth{k}((\lambda \bar x.\, H^m \bar x)[\Gamma])$, for all $k$, 
where $\lambda \bar x.\, H^m \, \bar x$ is the $\eta$-long-form of $H^m$ according to its simple type.

As an example, the $\Gamma$, $\Gamma'$ and $\Gamma''$ below are all equal substitutions with the domain $\{S\hrec\}$. The main differences are highlighted in \blue{blue}.

\begin{multicols}{3}
\begin{tabular}{ll}
$\Gamma =\{$ \\
$\ \blue{S\hrec\,z\,w \doteq \mi{r_3}\, w}, $\\
$\ \recvar{odd}=_d \mi{get}\, (\lambda x.\, \recvar{even}), $\\
$\ \recvar{even}=_d \mi{get}\, (\lambda x.\, \recvar{r_3}\, x), $\\
$\ \recvar{r_3} =_d \lambda x.\, \mi{put}\, (\recvar{r_4}\, x)\, \recvar{odd}, $\\
$\ \recvar{r_4} =_d \lambda x.\, x $\\
$\}$\\
\end{tabular}

\columnbreak
\begin{tabular}{ll}
$\Gamma' = \{ $\\
$\ \blue{S\hrec\,y\,u \doteq \recvar{r_3}\, u}, $\\
$\ \recvar{odd}=_d \mi{get}\, (\lambda x.\, \recvar{even}), $\\
$\ \recvar{even}=_d \mi{get}\, (\lambda x.\, \recvar{r_3}\, x),$\\
$\ \recvar{r_3} =_d \lambda x.\, \mi{put}\, (\recvar{r_4}\, x)\, \recvar{odd}, $\\
$\ \recvar{r_4} =_d \lambda x.\, x$\\
$\}$\\
\end{tabular}

\columnbreak
\begin{tabular}{ll}
$\Gamma'' = \{ $\\
$\ \blue{S\hrec\,w\,z \doteq \recvar{r_1}\, z}, $\\
$\ \recvar{odd}=_d \mi{get}\, (\lambda x.\, \recvar{even}), $\\
$\ \recvar{even}=_d \mi{get}\, (\lambda x.\, \blue{\recvar{r_1}}\, x),$\\
$\ \blue{\recvar{r_1}} =_d \lambda x.\, \mi{put}\, (\blue{\recvar{r_5}}\, x)\, \recvar{odd}, $\\
$\ \blue{\recvar{r_5}} =_d \lambda x.\, x $\\
$\}$
\end{tabular}
\end{multicols}

We write $U[\Gamma]$ and $N[\Gamma]$ for applying the substitution (for unification metavariables) to terms. They are defined as follows.

\begin{tabular}{ll}
 $(\lambda \bar x.h\,  N_1\, \dots\,N_n)[\Gamma] = \lambda \bar x.\, h\, (N_1[\Gamma])\, \dots\, (N_n[\Gamma])$ \\
 $(\lambda \bar x.\, H\hcontra\, \bar y)[\Gamma] = \begin{cases}
   \lambda \bar x.\, [\bar y/\bar z]U' .           & \text{ if } H\hcontra\,\bar z\doteq U' \in \m{eqs}(\Gamma)\\
   \lambda \bar x.\, H\hcontra\, \bar y  & \text{ otherwise} \\
 \end{cases}$ \\[1ex]
 $(\lambda \bar x.\,\recvar r\, \bar y )[\Gamma] = \lambda \bar x.\,\recvar r\, \bar y $ \\
 $(\lambda \bar x.\, H\hrec\, \bar y)[\Gamma] = \begin{cases}
   \lambda \bar x.\, [\bar y/\bar z]N'    & \text{ if } H\hrec\, \bar z \doteq N' \in \m{eqs}(\Gamma)\\
   \lambda \bar x.\, H\hrec\, \bar y  & \text{ otherwise} \\
 \end{cases}$ \\
\end{tabular}

It is worth noting that the substitution commutes with $\lambda$-abstractions so that $(\lambda \bar x.\, M)[\Gamma] = \lambda \bar x.\, (M[\Gamma])$.
Substitution also commutes with simultaneous variable renaming so that $([\bar y/\bar x]M)[\Gamma] = [\bar y/\bar x](M[\Gamma])$. 
We can show both claims by induction on the structure of $M$, and the intuition is that substitutions are ``closed'' substitutions for unification metavariables.

The application of a substitution $\Gamma$ to a unification context $\Delta$, 
the application of a substitution $\Gamma_2$ to another substitution $\Gamma_1$, 
and the composition of substitutions $\Gamma_1 \circ \Gamma_2$ (apply $\Gamma_1$ and then $\Gamma_2$) 
are analogous counterparts of their first-order definitions.

 \begin{tabular}{ll}
$\Delta[\Gamma] $ & $= \m{defs}(\Gamma), \{M[\Gamma] \doteq M'[\Gamma] \mid M \doteq M' \in \m{eqs}(\Delta)\},  \{\recvar r =_d U[\Gamma] \mid \recvar r =_d U \in \m{defs}(\Delta)\}$ \\
$\Gamma_1[\Gamma_2]$ & $ = \m{defs}(\Gamma_2),  \{H^m\, \bar x \doteq M'[\Gamma_2] \mid H^m\, \bar x \doteq M' \in \m{eqs}(\Gamma_1)\},  \{\recvar r =_d U[\Gamma_2] \mid \recvar r =_d U \in \m{defs}(\Gamma_1)\}$ \\
$ \Gamma_1 \circ \Gamma_2$ &$  = (\Gamma_1[\Gamma_2]), \{H^m\, \bar x \doteq M \mid H^m\, \bar x \doteq M \in \m{eqs}(\Gamma_2) \land H^m \notin \m{dom}(\Gamma_1)\} $ \\
 \end{tabular}

We repeat the definition of unifiers and the most general unifier for the higher-order case.
A \emph{unifier} for a contradiction-free unification context $\Delta$ is a substitution $\Gamma$ such that $UV(\Delta) = \m{dom}(\Gamma)$, and every equation in $\Delta[\Gamma]$ holds. 
A unification context $\Delta$ with $\m{contra} \in \Delta$ has no unifiers.
A unifier $\Gamma_1$ is \emph{more general} than another unifier $\Gamma_2$  if
there is a substitution $\Gamma'$ such that $\Gamma_1 \circ \Gamma' = \Gamma_2$.

As an example, given $\Gamma$ and $\Delta$ defined below, $\Gamma$ is a unifier of $\Delta$, because every 
equation in $\Delta[\Gamma]$ holds. The main changes are highlighted in \blue{blue}. Notice that when carrying out the substitution, the duplicate recursion constants
in $\Gamma$ are renamed by adding a prime ($'$) sign.

\newpage % arxiv only due to some unknown bug
\begin{multicols}{3}
\begin{tabular}{ll}
$\Gamma  =\{$ \\
$\ \blue{S\hrec\,z\,w} \doteq \recvar{r_3}\, w,$ \\
\\
\\
\\
\\
\\
\\
$\ \recvar{odd}=_d \mi{get}\, (\lambda x.\, \recvar{even}),$ \\
$\ \recvar{even}=_d \mi{get}\, (\lambda x.\, \recvar{r_3}\, x),$ \\
$\ \recvar{r_3} =_d \lambda x.\, \mi{put}\, (\recvar{r_4}\, x)\, \recvar{odd},$ \\
$\ \recvar{r_4} =_d \lambda x.\, x$ \\
$\}$\\
\end{tabular}

\columnbreak
\begin{tabular}{ll}
$\Delta = \{$ \\
$\ \recvar{r_1} \doteq \recvar{odd}, $ \\
$\ \recvar{r_1} =_d \mi{get}\, (\lambda  x.\, \recvar{r_2}\,  x), $ \\
$\ \recvar{r_2} =_d \lambda x.\, \mi{get}\, (\lambda  y.\, \blue{S\hrec\, x\,y}), $\\
$\ \recvar{odd}=_d \mi{get}\, (\lambda x.\, \recvar{even}),$ \\
$\ \recvar{even}=_d \mi{get}\, (\lambda x.\, \recvar{r_3}\, x),$ \\
$\ \recvar{r_3} =_d \lambda x.\, \mi{put}\, (\recvar{r_4}\, x)\, \recvar{odd},$ \\
$\ \recvar{r_4} =_d \lambda x.\, x$ \\
$\}$\\
\end{tabular}

\columnbreak
\begin{tabular}{ll}
$\Delta[\Gamma] = \{$ \\
$\ \recvar{r_1} \doteq \recvar{odd}, $ \\
$\ \recvar{r_1} =_d \mi{get}\, (\lambda  x.\, \recvar{r_2}\,  x), $ \\
$\ \recvar{r_2} =_d \lambda x.\, \mi{get}\, (\lambda  y.\, \blue{\recvar{r_3}'\,y}), $\\
$\ \recvar{odd}=_d \mi{get}\, (\lambda x.\, \recvar{even}), $\\
$\ \recvar{even}=_d \mi{get}\, (\lambda x.\, \recvar{r_3}\, x), $\\
$\ \recvar{r_3} =_d \lambda x.\, \mi{put}\, (\recvar{r_4}\, x)\, \recvar{odd}, $\\
$\ \recvar{r_4} =_d \lambda x.\, x$ \\
$\ \blue{\recvar{odd}'}=_d \mi{get}\, (\lambda x.\, \blue{\recvar{even}'}), $\\
$\ \blue{\recvar{even}'}=_d \mi{get}\, (\lambda x.\, \blue{\recvar{r_3}'}\, x), $\\
$\ \blue{\recvar{r_3}'} =_d \lambda x.\, \mi{put}\, (\blue{\recvar{r_4}'}\, x)\, \blue{\recvar{odd}'}, $\\
$\ \blue{\recvar{r_4}'} =_d \lambda x.\, x $\\
$\}$\\
\end{tabular}
\end{multicols}

\subsection{The Algorithm}
\label{sec:algorithm}

We now present the saturation-based algorithm for higher-order rational terms.

In the first-order case, a unification metavariable is resolved if there is an equation between the unification metavariable and 
either a recursion constant or a term with a constructor head. In the higher-order case, we have to consider 
the binding structure, the resolution must only have free variables that appear in the arguments to the unification metavariable. 
Also, a unification metavariable may be resolved by another unification metavariable that has strictly fewer arguments.
A contractive unification metavariable $H\hcontra$ is \emph{resolved} if (a) there exists a unification equation $H\hcontra\, \bar y \doteq h\, \bar z$ (or its symmetry) and $\bar z \subseteq \bar y$, 
or (b) there exists a unification equation $H\hcontra \, \bar y \doteq G\hcontra\, \bar w$ (or its symmetry)
with $\bar w \subsetneq \bar y$.
\footnote{$\bar w\subsetneq\bar y$ means $\bar w$ is a proper subset of $\bar y$. 
For example, $H\hcontra \,y \doteq G\hcontra \,y$ is not a resolution equation, but $H\hcontra y \doteq G\hcontra$ is, while either equation may appear as a substitution equation for $H\hcontra$ in a unifier.
 } 
 $H\hcontra$ is unresolved otherwise, and is denoted by the judgment $H\hcontra \m{unresolved}$. 
Similarly, a contractive unification metavariable $H\hrec$ is \emph{resolved} if there exists an equation $H\hrec \, \bar y \doteq \recvar r\, \bar z$ (or its symmetry) and $\bar z \subseteq \bar y$, 
or there exists an equation $H\hrec \, \bar y \doteq G\hrec\, \bar w$ (or its symmetry) with $\bar w \subsetneq \bar y$.
 $H\hrec$ is unresolved otherwise, and is denoted by the judgment $H\hrec \m{unresolved}$.

We saturate the unification context $\Delta$ using the rules defined in \figurename~\ref{fig:unification_rules}. The saturation rules preserve the definition of all recursion constants.
Once saturated, a unifier can be constructed easily.
The presence of the symbol $\m{contra}$ in a 
unification context indicates that the unification context has no unifiers.

We use the concept of a parameter to ensure the termination of the saturation-based algorithm. The parameters are indicated by bracketed existential 
quantifiers $(\exists X)$ where $X$ is a parameter that stands for 
a variable, a recursion constant, a unification metavariable, or a list of those.
The new equations or definitions under the existential quantification subsume
any instantiation of the equation or definition \cite{McLaughlin09}.
The parameters ensure freshness and non-redundancy:
when new variables, recursion constants, or unification metavariables are introduced by one of the rules (freshness), 
the existential quantification 
ensures that the rule applies (thus the conclusion equations or definitions are created) only if there does not exist any instantiation of 
the conclusion equations or definitions in the unification context (non-redundancy).
For example, $(\exists x, y) \, c\, x \doteq d\, y$ means that we pick globally fresh variables $x, y$, ensure that 
there is no equation of the form $c\, z \doteq d\, w$ in the unification context for any variable $z$ and $w$, as this is a renaming of $c\, x= d\, y$, and 
then add the equation $c\, x \doteq d\, y$ to the unification context.

We extend the notation of using overlines to denote lists of unification metavariables (possibly with arguments) and operations on them.
A list of unification metavariables is written $\overline{G\hrec}$ or $\overline{G\hcontra}$.
 Also, we write $\overline{G\hrec\, \overline{x}}$ to denote the list of applications where each unification metavariable is applied to $\overline{x}$.
 For example, $c\, \overline{G\hrec\, \overline{x}}$ denotes $c\, (G\hrec_1\, x_1\, \dots\, x_m)\, \dots (G\hrec_n\, x_1\, \dots\, x_m)$. 
 $h\,\overline{\m{\eta exp}(\overline{G\hrec\, \overline{x}})}$ denotes a term with head $h$ whose arguments are 
 the result of applying top-level $\eta$-expansions to all terms in $\overline{G\hrec\, \overline{x}}$ according to the simple type of $h$.
 For example, $c\, \overline{\m{\eta exp}(\overline{G\hrec\, \overline{x}})}$
 \footnote{We should remark that the non-$\eta$-expanded version $c\,\overline{G\hrec\, \overline{x}}$ should not appear in the unification context, since we write everything in the $\eta$-long form (except the arguments to 
 recursion constants and unification metavariables). Thus, $h\,\overline{\m{\eta exp}(\overline{G\hrec\, \overline{x}})}$ always appears in a conclusion where $\overline{G\hrec}$ is fresh (bound by the existential quantifier $(\exists \overline{G\hrec})$).}
 denotes a term of the form 
 $c\, (\lambda \overline{w}_1.\, G\hrec_1\, \overline{x}\, \overline{w}_1)\, \dots\, (\lambda \overline{w}_n.\, G\hrec_n\, \overline{x}\, \overline{w}_n)$, i.e.,
 \[c\, (\lambda w_{1, 1}.\dots\lambda w_{1,l_1}.\, G\hrec_1\, x_1\,\dots\,x_m\,w_{1,1}\, \dots\, w_{1,l_1})\, \dots\, (\lambda w_{n, 1}.\dots\lambda w_{n,l_n}.\, G\hrec_n\, x_1\,\dots\,x_m\,w_{n,1}\, \dots\, w_{n,l_n})\]

\begin{figure}
\resetRN
\noindent\rule{\textwidth}{1pt}
\textbf{Structural Rules}
\begin{mathpar}
\inferrule[\RNwithLabel{U-INST}\label{rule:ho_inst_contra}]{
  \lambda \bar x.\, U \doteq \lambda \bar x.\, U'
}{
    (\exists \bar{x})\,U \doteq  U'
}

\inferrule[\RNwithLabel{N-INST}\label{rule:ho_inst_rec}]{
  \lambda \bar x.\, N \doteq \lambda \bar x.\, N'
}{
  (\exists \bar{x})\, N \doteq  N'
}

\inferrule[\RNwithLabel{SIMP-F1}\label{rule:ho_structural_var_const}]{
  x\, \overline N \doteq c\, \overline{N'}
}{\m{contra}}

\inferrule[\RNwithLabel{SIMP-F2}\label{rule:ho_structural_var_var_clash}]{
  x\, \overline N \doteq y\, \overline{N'}
}{\m{contra}}(x \ne y)

\inferrule[\RNwithLabel{SIMP-F3}\label{rule:ho_structural_const_const_clash}]{
  c\, \overline N \doteq d\, \overline{N'}
}{\m{contra}}(c \ne d)

\inferrule[\RNwithLabel{SIMP}\label{rule:ho_structural}]{
  h\, N_1\, \dots N_n \doteq
  h\, N_1'\, \dots N_n'
}{
  N_1 \doteq N_1', \dots, N_n \doteq N_n'
}

  \inferrule[\RNwithLabel{PROJ-F}\label{rule:ho_projection_fail}]{
    H\hcontra\, \bar y \doteq z_i\, \overline{N}\\ z_i \notin \bar y
  }{
    \m{contra}
  }
\end{mathpar}

\noindent\rule{\textwidth}{1pt}
\textbf{Resolution Rules}
\begin{mathpar}
  \inferrule[\RNwithLabel{IMIT}\label{rule:ho_immitation}]{
    H\hcontra\, \bar y \doteq c\, \overline{N}
    \\ H\hcontra \, \m{unresolved}
  }{
    (\exists \overline{G\hrec})\,
    H\hcontra\, \bar y \doteq 
c\, \overline{\m{\eta exp}(\overline{G\hrec\, \bar{y}})}\, 
  }

  \inferrule[\RNwithLabel{PROJ}\label{rule:ho_projection}]{
    H\hcontra\, \bar y \doteq y_i\, \overline{N}\\ y_i \in \bar y
    \\ H\hcontra \, \m{unresolved}
  }{
    (\exists \overline{G\hrec}) \, 
    H\hcontra\, \bar y \doteq 
    y_i\, \overline{\m{\eta exp}(\overline{G\hrec\, \bar{y}})}\, 
  } 

  \inferrule[\RNwithLabel{PRUNE}\label{rule:ho_rec_pruning}]{
H\hrec\, \bar y \doteq  \recvar r\, \bar x 
     \\  \bar x \not\subseteq \bar y 
    \\ \bar w = \bar x \cap \bar y
    \\ H\hrec \m{unresolved}
  }{
    (\exists t)(\exists G\hcontra) \, H\hrec\, \bar y \doteq \recvar t\,  \bar w,
    \recvar{r}\, \bar x \doteq \recvar t\, \bar w, \recvar t=_d \lambda \bar w. G\hcontra\,  \bar w
  }

  \inferrule[\RNwithLabel{FF-D}\label{rule:ho_flexible_flexible_different_vars}]{
    G^m \, \bar x \doteq H^m \, \bar y
    \\ G^m \ne H^m
    \\ \bar z = \bar x \cap \bar y 
    \\ \bar x \not\subseteq \bar y  \land \bar y \not\subseteq \bar x
    \\ H^m \m{unresolved} \lor G^m\m{unresolved}
  }{
    (\exists F^m)\, G^m \, \bar x \doteq F^m \, \bar z,
    H^m \, \bar y \doteq F^m \, \bar z
  }(m~\in~\{\symbcontra, \symbrec\})

  \inferrule[\RNwithLabel{FF-S}\label{rule:ho_flexible_flexible_same_var}]{
    G^m \, \bar x \doteq G^m \, \bar y 
    \\ \bar x = x_1\, \dots\, x_n 
    \\ \bar y = y_1\, \dots\, y_n 
    \\ \bar z = \cup_i \{x_i \mid x_i = y_i\} 
    \\ \bar x \neq \bar y 
    \\ G^m \m{unresolved}
  }{
    (\exists F^m)\, G^m \, \bar x \doteq F^m \, \bar z,
    G^m \, \bar y \doteq F^m \, \bar z
  }(m~\in~\{\symbcontra, \symbrec\})
\end{mathpar}

\noindent\rule{\textwidth}{1pt}
\textbf{Expansion, Consistency, Symmetry, and Transitivity}%
\begin{mathpar}
  \inferrule[\RNwithLabel{REC-EXP}\label{rule:ho_rec_expansion}]{
    \recvar r\, \bar x\doteq \recvar s\, \bar y\\
    \recvar r =_d \lambda \bar z.\, U_1 \\
    \recvar s =_d \lambda \bar w.\, U_2 \\
  }{
    [\bar x/\bar z] U_1\doteq [\bar y / \bar w]U_2
  }

  \inferrule[\RNwithLabel{U-AGREE}\label{rule:ho_unif_consistency_contra}]{
    H\hcontra\,\bar x \doteq  U_1 \\
    H\hcontra\,\bar y\doteq   U_2 
    \\  FV(U_1) \subseteq  \bar x
    \\  FV(U_2) \subseteq  \bar y
  }{
     U_1\doteq [\bar x / \bar y] U_2
  }

  \inferrule[\RNwithLabel{N-AGREE}\label{rule:ho_unif_consistency_rec}]{
    H\hrec\,\bar x \doteq  N_1 \\
    H\hrec\,\bar y\doteq   N_2
    \\  FV(N_1) \subseteq \bar x 
    \\  FV(N_2) \subseteq \bar y 
  }{
     N_1\doteq [\bar x / \bar y] N_2
  }

\inferrule[\RNwithLabel{U-SYM}\label{rule:ho_symmetry_contra}]{
  U \doteq U'
}{U' \doteq U}

\inferrule[\RNwithLabel{N-SYM}\label{rule:ho_symmetry_rec}]{
  N \doteq N'
}{N' \doteq N}

\inferrule[\RNwithLabel{U-TRANS}\label{rule:ho_transitivity_contra}]{
  U_1 \doteq U_2 \\ U_2 \doteq U_3 
}{U_1 \doteq U_3}

\inferrule[\RNwithLabel{N-TRANS}\label{rule:ho_transitivity_rec}]{
  N_1 \doteq N_2 \\ N_2 \doteq N_3 
}{N_1 \doteq N_3}
\end{mathpar}
\noindent\rule{\textwidth}{1pt}
\caption{Unification Rules}
\label{fig:unification_rules}
\end{figure}

Rule (\ref*{rule:ho_inst_contra})(\ref*{rule:ho_inst_rec}) instantiate $\lambda$-abstractions, but only when 
there are no instantiations that are currently present in the context.
Rule (\ref*{rule:ho_immitation}) is called imitation by \citet{Huet75tcs} because $H\hcontra$ mimics the behavior 
of the term $c\, \overline{N}$ on the right. Rules (\ref*{rule:ho_projection_fail})(\ref*{rule:ho_projection}) are 
called projections because $H\hcontra$ projects its $i$th argument its head.
Rule (\ref*{rule:ho_rec_pruning})
 prunes the variables that are not in common from both $H\hrec$ and $\recvar r$, by creating 
 a recursive definition $\recvar t$, whose body $G\hcontra\, \bar w$ can only use variables $\bar w$ that 
 are common to $H\hrec$ and $\recvar r$.
Rules (\ref*{rule:ho_flexible_flexible_different_vars})(\ref*{rule:ho_flexible_flexible_same_var})
\footnote{(\ref*{rule:ho_flexible_flexible_different_vars}) suggests flexible-flexible pairs with different unification metavariables and (\ref*{rule:ho_flexible_flexible_same_var}) suggests flexible-flexible pairs with the same unification metavariable. See \cite{Huet75tcs} for the distinction between flexible and rigid terms.}
 also serve a similar effect of pruning by 
removing the extra variables from the arguments to unification metavariables that cannot be used. It resolves $H^m$, or $G^m$, or both, by equating them
to a common unification metavariable $F^m$ whose arguments $\bar z$ is a strict subset of both $\bar x$ and $\bar y$.
Rule (\ref*{rule:ho_rec_expansion}) unfolds recursive definitions and compares them for equality.
Rules (\ref*{rule:ho_unif_consistency_contra})(\ref*{rule:ho_unif_consistency_rec}) ensure that any two resolutions of a unification metavariable are consistent.

We give an example of how the algorithm operates on the stream processor unification context, now denoted $\Delta_7$ (equations and definitions $(1)-(7)$). 
At each step, we show some additional equations or definitions
and the ways they are obtained. We omit the final uninteresting steps when only symmetry and transitivity rules can be applied.

\begin{longtable}{ll}
$(1)\, \recvar{r_1} \doteq \recvar{odd} $ & given \\
$(2)\, \recvar{r_1} =_d \mi{get}\, (\lambda  x.\, \recvar{r_2}\,  x) $ \\
$(3)\, \recvar{r_2} =_d \lambda x.\, \mi{get}\, (\lambda  y.\, S\hrec\, x\,y)$ \\
 $(4)\, \recvar{odd}=_d \mi{get}\, (\lambda x.\, \recvar{even})$ \\
 $(5)\, \recvar{even}=_d \mi{get}\, (\lambda x.\, \recvar{r_3}\, x)$ \\
 $(6)\, \recvar{r_3} =_d \lambda x.\, \mi{put}\, (\recvar{r_4}\, x)\, \recvar{odd}$ \\
 $(7)\, \recvar{r_4} =_d \lambda x.\, x$ \\
 $(8)\, \mi{get}\, (\lambda x.\, \recvar{r_2}\, x) \doteq \mi{get}\, (\lambda x.\, \recvar{even})$ & by Rule (\ref*{rule:ho_rec_expansion}) on $(1)$, $(2)$ and $(4)$ \\
 $(9)\,\lambda x.\, \recvar{r_2}\, x \doteq \lambda x.\, \recvar{even}$ & by Rule (\ref*{rule:ho_structural}) on $(8)$ \\
 $(10)\, \recvar{r_2}\, z \doteq \recvar{even}$ & by Rule (\ref*{rule:ho_inst_rec}) on $(9)$, and we verify that 
  there does not \\ & exist any equation  $(\exists x)\recvar{r_2}\, x \doteq \recvar{even}$ in the context $\Delta_9$. 
\\
 $(11)\, \mi{get}\, (\lambda y.\, S\hrec\, z\, y) \doteq \mi{get}\, (\lambda x.\, \recvar{r_3}\, x)$ & by Rule (\ref*{rule:ho_rec_expansion}) on $(10)$, $(3)$ and $(5)$ \\
 $(12)\,\lambda y.\, S\hrec\, z\, y\doteq\lambda x.\, \recvar{r_3}\, x$ & by Rule (\ref*{rule:ho_structural}) on $(11)$ \\
 $(13) \,S\hrec\,z\,w \doteq \recvar{r_3}\, w$ & by Rule (\ref*{rule:ho_inst_rec}) on $(12)$, and we verify that 
  there does not \\ &exist any equation  $(\exists x)\,S\hrec\,z\,x \doteq \recvar{r_3}\, x$ in the context $\Delta_{12}$. 
\\
$(14)\, \dots$ & by Rules (\ref*{rule:ho_symmetry_contra})(\ref*{rule:ho_symmetry_rec})(\ref*{rule:ho_transitivity_contra})(\ref*{rule:ho_transitivity_rec})$\dots$

\end{longtable}

The above example used only the structural rules and expansion rules. 
More examples will be given in Section~\ref{sec:ho_additional_examples}.

\subsection{Saturated Unification Contexts}
We now describe how a substitution $\Gamma = \m{unif}(\Delta)$ can be constructed from a contradiction-free unification context $\Delta$. 
The construction in the higher-order case takes the pattern variables following a unification metavariable into account.
We will later show that if $\Delta$ is a saturated unification context, then $\m{unif}(\Delta)$ is the most general unifier for $\Delta$.

Given any unification context, every unification metavariable $H\hcontra$ or $H\hrec$ is either resolved or unresolved.
Suppose $H\hcontra$ or $H\hrec$ is resolved.
\begin{enumerate}
  \item 
If $H\hcontra$ is resolved, then there exists an equation $H\hcontra\, \bar y \doteq h\, \bar z$ or $H\hcontra\, \bar y \doteq G\hcontra\, \bar w$, with $ FV(h\, \bar z)\subseteq \bar y$ and $ \bar w \subsetneq \bar y$.
The equation $H\hcontra \, \bar y \doteq h\, \bar z$ or $H\hcontra\, \bar y \doteq G\hcontra\, \bar w$ is called a \emph{resolution equation} for $H\hcontra$. 
\item
If $H\hrec$ is resolved, then there exists an equation $H\hrec\, \bar y \doteq \recvar r\, \bar z$ or $H\hrec\, \bar y \doteq G\hrec\, \bar w$, with $ FV(\recvar r\, \bar z)\subseteq \bar y$ and $ \bar w \subsetneq \bar y$.
The equation $H\hrec \, \bar y \doteq \recvar r\, \bar z$ or $H\hrec \, \bar y \doteq G\hrec\, \bar w$ is called a \emph{resolution equation} for $H\hrec$.
\end{enumerate}
There may be multiple such equations, but we consistently pick a resolution equation (called \emph{the} resolution equation, or simply the \emph{resolution}\footnote{In the higher-order case, 
resolutions for unification metavariables are in the form of equations, whereas in the first-order case, the resolutions are terms.}) for each unification metavariable.
\begin{enumerate}
\item
Further,
the operation of \emph{replacing occurrences of a (resolved) contractive unification metavariable} $H\hcontra$  \emph{by its resolution in a term} $M$ is defined to be 
$M$ with occurrences of $H\hcontra \, \bar x$ replaced by $[\bar x/\bar y](h\, \bar z)$ or $[\bar x/\bar y](G\hcontra\, \bar w)$ according to the resolution equation $H\hcontra \, \bar y \doteq h\, \bar z$ or $H\hcontra\, \bar y \doteq G\hcontra\, \bar w$.
\item
Similarly, 
the operation of \emph{replacing occurrences of a (resolved) recursive unification metavariable} $H\hrec$  \emph{by its resolution in a term} $M$ is defined to be
$M$ with occurrences of $H\hrec \, \bar x$ replaced by $[\bar x/\bar y](\recvar r\, \bar z)$ or $[\bar x/\bar y](G\hrec\, \bar w)$ according to the resolution equation $H\hrec\, \bar y \doteq \recvar r\, \bar z$ or $H\hrec\, \bar y \doteq G\hrec\, \bar w$.
\end{enumerate}

Now suppose $H^m$ ($m \in \{\symbcontra, \symbrec\}$) is unresolved. Unresolved 
unification metavariables form equivalence classes related by $\doteq$. We pick a \emph{representative unification metavariable} 
for each equivalence class, such that if 
$G^m$ is the representative unification metavariable for the equivalence class of $H^m$,  there is an equation $H^m \, \bar y \doteq G^m \, \bar z$, 
 where $\bar z$ is a permutation of $\bar y$. This equation is called the \emph{representative equation} for $H^m$.
 We pick the representative equation in such a way that 
 the right-hand sides $G^m \, \bar z$ of all representative equations are equal for all unresolved unification metavariables in the same equivalence class, i.e.
 for any other unification metavariable $F^m$ in the same equivalence class as $H^m$, $F^m \, \bar w \doteq G^m \, \bar z$ is picked (and $F^m \, \bar w' \doteq G^m \, \bar z'$ with $\bar z' \ne \bar z$ is not picked).
The operation of \emph{replacing occurrences of a (unresolved) unification metavariable} $H^m$ \emph{by its representative unification metavariable} $G^m$ \emph{in a term} $M$ is defined to be 
$M$ with occurrences of $H^m \, \bar x$ replaced by $[\bar x/\bar y](G^m \, \bar z)$.

We construct the substitution $\Gamma = \m{unif}(\Delta)$ for a contradiction-free context $\Delta$ as follows. The construction is 
very similar to the first-order case, except that now the replacement of unification metavariables uses variable renaming.

\begin{enumerate}
  \item Start with $\Gamma$ containing all recursive definitions of $\Delta$.
  \item For each resolved unification metavariable in $UV(\Delta)$, add to $\Gamma$ the resolution equation of that unification metavariable.
  \item For each unresolved unification metavariable in $UV(\Delta)$, add to $\Gamma$ the representative equation of that unification metavariable.
  \item Replace the occurrences of resolved unification metavariables in the right-hand sides and recursive definitions of $\Gamma$ with their resolutions, 
  and replace the occurrences of unresolved unification metavariables in the right-hand sides and recursive definitions of $\Gamma$ with their representatives.
  Repeat this step until all unification metavariables in the right-hand sides and recursive definitions are representative unification metavariables for some 
  equivalence class of unresolved unification metavariables. 
\end{enumerate}

As an example, we show how the unifier for $\Delta_{13}$ (equations and definitions $(1)-(13)$ defined above in Section~\ref{sec:algorithm}), $\Gamma = \m{unif}(\Delta_{13})$ can be constructed.
Major changes in each step are highlighted in \blue{blue}.

\begin{enumerate}
  \item Initialize $\Gamma_1$ to all recursive definitions of $\Delta_{13}$.

  \item Since $H\hrec$ is resolved, we add its resolution equation to get $\Gamma_2$.

\item There are no unresolved unification metavariables, we skip step (3).

\item Replace occurrences of resolved unification metavariables with their resolutions to get $\Gamma_3$.

\begin{minipage}{\textwidth}
\begin{multicols}{3}
\begin{tabular}{ll}
$\Gamma_1 = \{ $\\
$\  \recvar{r_1} =_d \mi{get}\, (\lambda  x.\, \recvar{r_2}\,  x), $ \\
$\  \recvar{r_2} =_d \lambda x.\, \mi{get}\, (\lambda  y.\, S\hrec\, x\,y),$ \\
$\   \recvar{odd}=_d \mi{get}\, (\lambda x.\, \recvar{even}),$ \\
$\  \recvar{even}=_d \mi{get}\, (\lambda x.\, \recvar{r_3}\, x),$ \\
$\  \recvar{r_3} =_d \lambda x.\, \mi{put}\, (\recvar{r_4}\, x)\, \recvar{odd},$ \\
$\  \recvar{r_4} =_d \lambda x.\, x$ \\
$\}$ 
\end{tabular}

\columnbreak
\begin{tabular}{ll}
$\Gamma_2 = \{ $\\
$\   \blue{S\hrec \, z\, w \doteq \recvar{r_3} \, w}, $\\
$\  \recvar{r_1} =_d \mi{get}\, (\lambda  x.\, \recvar{r_2}\,  x),  $\\
$\  \recvar{r_2} =_d \lambda x.\, \mi{get}\, (\lambda  y.\, S\hrec\, x\,y), $\\
$\   \recvar{odd}=_d \mi{get}\, (\lambda x.\, \recvar{even}), $\\
$\   \recvar{even}=_d \mi{get}\, (\lambda x.\, \recvar{r_3}\, x), $\\
$\   \recvar{r_3} =_d \lambda x.\, \mi{put}\, (\recvar{r_4}\, x)\, \recvar{odd}, $\\
$\   \recvar{r_4} =_d \lambda x.\, x $\\
$\}$ 
\end{tabular}

\columnbreak
\begin{tabular}{ll}
$\Gamma_3 = \{ $\\
$\  S\hrec \, z\, w \doteq \recvar{r_3} \, w, $\\
$\  \recvar{r_1} =_d \mi{get}\, (\lambda  x.\, \recvar{r_2}\,  x),  $\\
$\  \recvar{r_2} =_d \lambda x.\, \mi{get}\, (\lambda  y.\, \blue{\recvar{r_3}\,y}), $\\
$\   \recvar{odd}=_d \mi{get}\, (\lambda x.\, \recvar{even}),$\\
$\   \recvar{even}=_d \mi{get}\, (\lambda x.\, \recvar{r_3}\, x), $\\
$\   \recvar{r_3} =_d \lambda x.\, \mi{put}\, (\recvar{r_4}\, x)\, \recvar{odd}, $\\
$\   \recvar{r_4} =_d \lambda x.\, x $\\
$\}$ 
\end{tabular}
\end{multicols}
\end{minipage}

\item Note that we may remove unused recursive definitions ($r_1, r_2$) to get an equivalent substitution $\Gamma_4$.

$\Gamma_4  =\{
 S\hrec\,z\,w \doteq \recvar{r_3}\, w,
  \recvar{odd}=_d \mi{get}\, (\lambda x.\, \recvar{even}), \\
  \recvar{even}=_d \mi{get}\, (\lambda x.\, \recvar{r_3}\, x), 
  \recvar{r_3} =_d \lambda x.\, \mi{put}\, (\recvar{r_4}\, x)\, \recvar{odd},
  \recvar{r_4} =_d \lambda x.\, x
\}$

\end{enumerate}

Note that the final substitution $\Gamma_4$ is equivalent to the following substitution $\Gamma_c$, written in the concrete syntax 
without flattened definitions. We can easily check that this is a unifier for the concrete unification context $\Delta_c$ in Section~\ref{sec:problem_formulation}.

$\Gamma_c  =\{
 S\hrec\,z\,w \doteq \recvar{r_3}\, w,
  \recvar{odd}=_d \mi{get}\, (\lambda x.\,\mi{get}\, (\lambda y.\, \recvar{r_3}\, y)), 
  \recvar{r_3} =_d \lambda w.\, \mi{put}\, w\, \recvar{odd},
\}$

\subsection{Additional Examples}
\label{sec:ho_additional_examples}

Recall the definition for $\mt{odd}$ and $\mt{even}$ in Section~\ref{sec:problem_formulation}.
\begin{verbatim}
odd : sp = get ([x] even).
even : sp = get ([x] put x odd).
\end{verbatim}

We have seen an example of how unification can figure out
 the behavior of the stream processor $\mt{S}$
after reading two elements of the input, if it behaves the same as $\mt{odd}$.
\begin{verbatim}
?- get ([x] get ([y] S x y)) = odd.
\end{verbatim}
\[\Delta_c = \{\mi{get}\, (\lambda x.\, \mi{get}\, (\lambda y. \, S\, x\, y)) \doteq \recvar{odd},
 \recvar{odd}=_d \mi{get}\, (\lambda x.\, \recvar{even}),
 \recvar{even}=_d \mi{get}\, (\lambda x.\, \mi{put}\, x\, \recvar{odd})
 \}
 \]

 \subsubsection{An Example with no Solution}
 In the above example, $S$ may depend on both numbers $x$ and $y$ that read from the input stream. 
 We may restrict $x$ to only use the number at index $0$, by omitting $y$ from the argument of $S$.
\begin{verbatim}
?- get ([x] get ([y] S x)) = odd.
\end{verbatim}
\[\Delta_c = \{\mi{get}\, (\lambda x.\, \mi{get}\, (\lambda y. \, S\, x)) \doteq \recvar{odd},
 \recvar{odd}=_d \mi{get}\, (\lambda x.\, \recvar{even}),
 \recvar{even}=_d \mi{get}\, (\lambda x.\, \mi{put}\, x\, \recvar{odd})
 \}
 \]
 The $\m{odd}$ stream processor outputs an element at index $1$,  but $S$ doesn't have access 
 to $y$. This unification problem does not have a solution, and the algorithm eventually adds the symbol $\m{contra}$ to the unification context.
 The first six steps (equations and definitions (1) - (13)) are 
 similar to the previous example with changes marked in \blue{blue}, but now the equation $(13)$ is no longer a resolution equation for $S\hrec$.
 Note that $\Delta_c \rhd \Delta_{7}$ (equations and definitions $(1)-(7)$).

\begin{longtable}{ll}
$(1)\, \recvar{r_1} \doteq \recvar{odd} $ & given\\
$(2)\, \recvar{r_1} =_d \mi{get}\, (\lambda  x.\, \recvar{r_2}\,  x) $ \\
$(3)\, \recvar{r_2} =_d \lambda x.\, \mi{get}\, (\lambda  y.\, \blue{S\hrec\, x}) $ \\
$(4)\, \recvar{odd}=_d \mi{get}\, (\lambda x.\, \recvar{even}) $ \\
$(5)\, \recvar{even}=_d \mi{get}\, (\lambda x.\, \recvar{r_3}\, x) $ \\
$(6)\, \recvar{r_3} =_d \lambda x.\, \mi{put}\, (\recvar{r_4}\, x)\, \recvar{odd}  $ \\
$(7)\, \recvar{r_4} =_d \lambda x.\, x $ \\
$(8)\, \mi{get}\, (\lambda x.\, \recvar{r_2}\, x) \doteq \mi{get}\, (\lambda x.\, \recvar{even})$ & by Rule (\ref*{rule:ho_rec_expansion}) on $(1)$, $(2)$ and $(4)$ \\
$(9)\,\lambda x.\, \recvar{r_2}\, x \doteq \lambda x.\, \recvar{even}$ & by Rule (\ref*{rule:ho_structural}) on $(8)$ \\
$(10)\, \recvar{r_2}\, z \doteq \recvar{even}$ & by Rule (\ref*{rule:ho_inst_rec}) on $(9)$, and we verify that there does not \\ &exist any equation $(\exists x)\recvar{r_2}\, x \doteq \recvar{even}$ in the context $\Delta_9$.  \\
$(11)\, \mi{get}\, (\lambda y.\, \blue{S\hrec\, z}) \doteq \mi{get}\, (\lambda x.\, \recvar{r_3}\, x)$ & by Rule (\ref*{rule:ho_rec_expansion}) on $(10)$, $(3)$ and $(5)$ \\
$(12)\,\lambda y.\, \blue{S\hrec\, z}\doteq\lambda x.\, \recvar{r_3}\, x$ & by Rule (\ref*{rule:ho_structural}) on $(11)$ \\
$(13) \,\blue{S\hrec\,z} \doteq \recvar{r_3}\, w$ & by Rule (\ref*{rule:ho_inst_rec}) on $(12)$, and we verify that there does not \\ &exist any equation $(\exists x)\,S\hrec\,z \doteq \recvar{r_3}\, x$ in the context $\Delta_{12}$. \\
$(14)\, S\hrec \, z \doteq \recvar{t}$ & by Rule (\ref*{rule:ho_rec_pruning}) on $(13)$ \\
 $(15)\, \recvar{r_3} \, w \doteq \recvar{t}$  \\
 $ (16)\, \recvar{t}=_d G\hcontra$ \\
$(17)\, \mi{put}\, (\recvar{r_4}\, w)\, \recvar{odd} \doteq G\hcontra $ & by Rule (\ref*{rule:ho_rec_expansion}) on $(15)$, $(6)$, and $(16)$ \\
$(18)\, G\hcontra \doteq \mi{put}\, (\recvar{r_4}\, w)\, \recvar{odd}   $ & by Rule (\ref*{rule:ho_symmetry_contra}) on $(17)$ \\
$ (19) \, G\hcontra \doteq \mi{put} F\hrec H\hrec $ & by Rule (\ref*{rule:ho_immitation}) on $(17)$ \\
$ (20) \, \mi{put}\, (\recvar{r_4}\, w)\, \recvar{odd} \doteq \mi{put} F\hrec H\hrec $ & by Rule (\ref*{rule:ho_transitivity_contra}) on $(17)$ and $(19)$  \\
$ (21)\, \recvar{r_4}\, w\, \doteq F\hrec$ & by Rule (\ref*{rule:ho_structural}) on $(20)$  \\
$ (22)\,\recvar{odd} \doteq H\hrec $ \\
$ (23)\, F\hrec \doteq \recvar{r_4}\, w$ & by Rule (\ref*{rule:ho_symmetry_rec}) on $(21)$  \\
$ (24)\, F\hrec \doteq \recvar{s}$ & by Rule (\ref*{rule:ho_rec_pruning}) on $(23)$  \\
$ (25) \,\recvar{r_4}\, w \doteq \recvar{s}$ \\
$ (26)\,\recvar{s}=_d F\hcontra$ \\
$ (27)\, w \doteq F\hcontra $ & by Rule (\ref*{rule:ho_rec_expansion}) on $(25)$, $(7)$, and $(26)$  \\
$ (28)\,  F\hcontra \doteq w $ & by Rule (\ref*{rule:ho_symmetry_rec}) on $(27)$  \\
$ (29)\, \m{contra}$ & by Rule (\ref*{rule:ho_projection_fail}) on $(28)$\\
\end{longtable}

We now consider some more problems that do not involve $\mi{odd}$ or $\mi{even}$.

\subsubsection{A Stream Processor that Keeps Producing Elements}

For example, we may ask, what is a stream $H$ that outputs the given element and continues as itself.
\begin{verbatim}
?- [x] put x (H x) = [x] H x.
\end{verbatim}

We should have the following result, which says that \verb$H$ produces an argument and continues as itself.
\begin{verbatim}
H = [x] put x (H x).
\end{verbatim}

Indeed, the algorithm is able to find this solution. 
Our unification algorithm correctly finds a recursive definition for $H\hrec$, as seen below, with $\Delta_c \rhd \Delta_3$
(equations and definitions $(1) -(3)$).

\[\Delta_c = \{ \lambda x.\, \mi{put}\, x\, (H\, x) \doteq \lambda x.\, H\, x\}\]

\begin{tabular}{ll}
 $ (1)\, \lambda x.\, \recvar{r_1}\, x \doteq \lambda x.\, H\hrec \, x$ & given\\ 
 $ (2)\, \recvar{r_1} =_d \lambda x.\, \mi{put}\, (\recvar{r_2}\, x)\, (H\hrec\, x)$\\
 $ (3)\, \recvar{r_2} =_d \lambda x.\, x$ \\
  $(4)\, \recvar{r_1}\, z \doteq H\hrec z$ & by Rule (\ref*{rule:ho_inst_rec}) on $(1)$ \\
  $(5)\, \dots$  & $\dots$
  \end{tabular}

   \[\m{unif}(\Delta_4) = \{H\hrec z \doteq \recvar{r_1}\, z, \recvar{r_1} =_d \lambda x.\, \mi{put}\, (\recvar{r_2}\, x)\, (\recvar{r_1}\, x), 
    \recvar{r_2} =_d \lambda x.\, x \}\]

Note that this problem has no unifier under Miller's higher-order pattern unification algorithm \cite{Miller91jlc} due to the failure of the 
occurs check on $\mt{H}$. 

\subsubsection{A Stream Processor that Keeps Consuming Elements}
    Dually, consider a stream processor $S$ that reads an element and continues as itself, with $\Delta_c\rhd\Delta_2$.
\begin{verbatim}
?- [x] get ([y] S y) = [x] S x.
\end{verbatim}

The intuitive solution is that \verb$S$ reads an element and continues as itself.
\begin{verbatim}
S = [x] get ([y] S y).
\end{verbatim}

The algorithm is able to find this solution.
\[\Delta_c = \{ \lambda x.\, \mi{get}\, (\lambda y.\, S\, y) \doteq \lambda x.\, S\, x\}\]

    \begin{tabular}{ll}
      $(1)\, \lambda x.\, \recvar{r_1}\, x \doteq \lambda x.\, S\hrec \, x$ & given\\
      $(2)\, \recvar{r_1} =_d \lambda x.\, \mi{get}\, (\lambda y.\, S\hrec\, y)$ \\
      $(3)\, \recvar{r_1}\, z \doteq S\hrec z$ & by Rule (\ref*{rule:ho_inst_rec}) on $(1)$ \\
      $(4)\, \dots$  & $\dots$
      \end{tabular}
       \[\m{unif}(\Delta_3) = \{S\hrec\, z \doteq \recvar{r_1}\, z, \recvar{r_1} =_d \lambda x.\, \mi{get}\, (\lambda y.\, \recvar{r_1}\, y) \}\]
        Here, the definition $r_1$ never uses its argument. 
        Our unification algorithm will not actively prune the arguments to recursion constants unless triggered by a unification equation like in Rule (\ref*{rule:ho_rec_pruning}).

\subsubsection{Another Stream Processor that Keeps Consuming Elements}

In the previous example, \verb$S$ discards its arguments but passes the read element to itself. One interesting question 
to ask would be, what if it discards the read element and passes the input argument to itself. Will the two stream processors be equal?

\begin{verbatim}
?- [x] get ([y] S y) = [x] S x. [x] get ([y] S x) = [x] S x.
\end{verbatim}

We expect \verb$S$ not to be able to use any arguments and instead should discard all arguments. Informally, we expect \verb$S$ to be equivalent to \verb$H$, 
which discards all arguments and just keeps reading the input stream.

\begin{verbatim}
S = [x] H. H = get ([y] H).
\end{verbatim}

% Indeed, the algorithm is able to discover this fact.
The algorithm is able to find this solution. This example highlights the use of Rules (\ref*{rule:ho_rec_pruning})(\ref*{rule:ho_unif_consistency_rec}), 
during steps $(9)(15)(16)$.

\[\Delta_c = \{
  \lambda x.\, \mi{get}\, (\lambda y.\, S\, y) \doteq \lambda x.\, S\, x,
  \lambda x.\, \mi{get}\, (\lambda y.\, S\, x) \doteq \lambda x.\, S\, x
\}\]

    \begin{tabular}{ll}
      $(1)\, \lambda x.\, \recvar{r_1}\, x \doteq \lambda x.\, S\hrec \, x$ & given\\
      $(2)\, \recvar{r_1} =_d \lambda x.\, \mi{get}\, (\lambda y.\, S\hrec\, y)$ \\
      $(3)\, \lambda x.\, \recvar{r_2}\, x \doteq \lambda x.\, S\hrec \, x$  \\
      $(4)\, \recvar{r_2} =_d \lambda x.\, \mi{get}\, (\lambda y.\, S\hrec\, x)$ \\
      $(5)\, \recvar{r_1}\, z \doteq S\hrec\, z$ & by Rule (\ref*{rule:ho_inst_rec}) on $(1)$ \\
      $(6)\, \recvar{r_2}\, w \doteq S\hrec\, w$ & by Rule (\ref*{rule:ho_inst_rec}) on $(3)$ \\
      $(7)\,  S\hrec\, z\doteq \recvar{r_1}\, z $ & by Rule (\ref*{rule:ho_symmetry_rec}) on $(5)$ \\
      $(8)\,  S\hrec\, w \doteq \recvar{r_2}\, w $ & by Rule (\ref*{rule:ho_symmetry_rec}) on $(6)$ \\
      $(9)\,  \recvar{r_1}\,z \doteq \recvar{r_2}\, z $ & by Rule (\ref*{rule:ho_unif_consistency_rec}) on $(7)$ \\
      $(10)\, \mi{get}\, (\lambda y.\, S\hrec\, y) \doteq \mi{get}\, (\lambda y.\, S\hrec\, z)$ & by Rule (\ref*{rule:ho_rec_expansion}) on $(9)$, $(2)$ and $(4)$ \\
      $(11)\, \lambda y.\, S\hrec\, y \doteq \lambda y.\, S\hrec\, z$ & by Rule (\ref*{rule:ho_structural}) on $(10)$ \\
      $(12)\, S\hrec\, y \doteq S\hrec\, z$ & by Rule (\ref*{rule:ho_inst_rec}) on $(11)$ \\
      $(13)\, S\hrec\, y \doteq H\hrec$ & by Rule (\ref*{rule:ho_flexible_flexible_same_var}) on $(12)$ \\
      $(14)\, S\hrec\, z \doteq H\hrec$ \\
      $(15)\, \recvar{r_1}\, z \doteq H\hrec$ & by Rule (\ref*{rule:ho_unif_consistency_rec}) on $(7)$ and $(14)$ \\
      $(16)\, \recvar{r_1}\, z \doteq \recvar{t}$ & by Rule (\ref*{rule:ho_rec_pruning}) on $(15)$ \\
      $(17)\, H\hrec \doteq \recvar{t}$\\
      $(18)\, \recvar{t} =_d G\hcontra$\\
      $(19)\, \mi{get}\, (\lambda y.\, S\hrec\, y) \doteq G\hcontra$ & by Rule (\ref*{rule:ho_rec_expansion}) on $(16)$, $(2)$ and $(18)$ \\
      $(20)\, G\hcontra \doteq \mi{get}\, (\lambda y.\, S\hrec\, y)$ & by Rule (\ref*{rule:ho_symmetry_contra}) on $(19)$ \\
      $(21)\, S\hrec\, y \doteq \recvar{t}$ & by Rule (\ref*{rule:ho_transitivity_rec}) on $(13)$ and $(17)$ \\
      $(22)\, \dots$  & $\dots$
      \end{tabular}
      We have 
       \[\m{unif}(\Delta_{20}) = \{S\hrec\, z \doteq \recvar{t}, \recvar{t} =_d  \mi{get}\, (\lambda y.\, \recvar{t}) \}\]

       We explain the construction of $\m{unif}(\Delta_{20})$, and only show the most relevant equations and definitions.

       \begin{tabular}{ll}
         $\Gamma_1 = \{\recvar{t} =_d G\hcontra, \dots\}$ &  Start with all the recursive definitions of $\Delta_{20}$. \\
         $\Gamma_1 = \{ \blue{S\hrec\, z \doteq \recvar{t}},\recvar{t} =_d  G\hcontra,  \dots\}$ &  Add the resolution equation for $S\hrec$. \\
         $\Gamma_1 = \{S\hrec\, z \doteq \recvar{t}, \recvar{t} =_d \blue{ \mi{get}\, (\lambda y.\, S\hrec\, z)}, \dots\}$ &  Replace resolved unification metavariable $G\hrec$ \\ & with its resolution. \\
         $\Gamma_1 = \{S\hrec\, z \doteq \recvar{t},\recvar{t} =_d  \mi{get}\, (\lambda y.\, \blue{t}),  \dots\}$ &  Replace resolved unification metavariable $S\hrec$  \\ &  with its resolution. \\
       \end{tabular}

\subsubsection{Variable Dependency}
  Finally, consider the following unification problem which illustrates how pattern variables following different unification metavariables restrict 
  how the variables can be used.
\begin{verbatim}
?- get ([x] get ([y] H x)) = get ([x] get ([y] S y)).
\end{verbatim}
After reading two input elements, continuation \verb$H$ may use the first element, and continuation \verb$S$ may use the second element, 
but \verb$H$ and \verb$S$ have to be equal. Intuitively, \verb$H$ and \verb$S$ will be equal to some stream processor \verb$F$ that does not use any arguments.

\begin{verbatim}
H = [x] F, S = [x] F 
\end{verbatim}

Indeed, unification correctly finds that neither \verb$H$ nor \verb$S$ can use their argument.
\[\Delta_c = \{\mi{get}\, (\lambda x.\, \mi{get}\, (\lambda y. \, H\, x)) \doteq \mi{get}\, (\lambda x.\, \mi{get}\, (\lambda y. \, S\, y))\}
 \]

\begin{tabular}{ll}
$(1)\, \recvar{r_1} \doteq \recvar{r_3} $ & given \\
$(2)\, \recvar{r_1} =_d \mi{get}\, (\lambda  x.\, \recvar{r_2}\,  x) $\\
$(3)\, \recvar{r_2} =_d \lambda x.\, \mi{get}\, (\lambda  y.\, H\hrec\, x) $\\
$(4)\, \recvar{r_3} =_d \mi{get}\, (\lambda  x.\, \recvar{r_4}\,  x) $\\
$(5)\, \recvar{r_4} =_d \lambda x.\, \mi{get}\, (\lambda  y.\, S\hrec\, y) $\\
$ (6)\, \mi{get}\, (\lambda x.\, \recvar{r_2}\, x) \doteq \mi{get}\, (\lambda x.\, \recvar{r_4}\, x)$ & by Rule (\ref*{rule:ho_rec_expansion}) on $(1)$, $(2)$ and $(4)$ \\
$ (7)\,\lambda x.\, \recvar{r_2}\, x \doteq \lambda x.\, \recvar{r_4}\, x$ & by Rule (\ref*{rule:ho_structural}) on $(6)$ \\
$ (8)\, \recvar{r_2}\, z \doteq \recvar{r_4}\, z$ & by Rule (\ref*{rule:ho_inst_rec}) on $(7)$ \\
$ (9)\, \mi{get}\, (\lambda y.\, H\hrec\, z) \doteq \mi{get}\, (\lambda y.\, S\hrec\, y)$ & by Rule (\ref*{rule:ho_rec_expansion}) on $(8)$, $(3)$ and $(5)$ \\
$ (10)\,\lambda y.\, H\hrec\, z\doteq\lambda y.\, S\hrec\, y$ & by Rule (\ref*{rule:ho_structural}) on $(9)$ \\
$ (11) \,H\hrec\,z \doteq S\hrec\, w$ & by Rule (\ref*{rule:ho_inst_rec}) on $(10)$ \\
$ (12)\, H\hrec \, z \doteq F\hrec$ & by Rule (\ref*{rule:ho_flexible_flexible_different_vars}) on $(11)$ \\
$(13)\, S\hrec\, w \doteq F\hrec $\\
$(14)\, \dots$ & $\dots$ \\
\end{tabular}

Now the unifier for $\Delta_{13}$ is
\[\m{unif}(\Delta_{13}) = \{H\hrec \, z \doteq F\hrec, S\hrec\, w \doteq F\hrec, F\hrec \doteq F\hrec\}\]

\subsection{Correctness of the Algorithm}
\label{sec:metatheory}

% We prove the termination, soundness, and completeness of the algorithm, by a similar strategy as 
% the first-order case. Any context will saturate in a finite number of steps, 
% and the unifiers are preserved in each step modulo domain restriction, to be defined.

As with the first-order case, we show the following three properties of the algorithm to establish its correctness.
\begin{enumerate}[(1)]
  \item \textbf{Correspondence}. At each step of the algorithm, the most general unifier of the context before corresponds to the most general unifier of the context after (Theorem~\ref{thm:ho_correspondence}).
  Some steps in the algorithm create additional unification metavariables, and as a result, this correspondence needs to take the difference in the domains of the unifiers into account.
 The correspondence proof shares the essential ideas of \citet{Miller91jlc} and \citet{Huet75tcs}'s proofs in that terms are inspected one level at a time, and the dependencies of unification metavariables on pattern arguments are implicitly kept track of.
  \item \textbf{Termination}. Any unification context always saturates in a finite number of steps (Theorem~\ref{thm:ho_termination}).
  This proof is much more involved due to the presence of pattern variables. 
  \item \textbf{Correctness}. The unifier for a saturated unification context $\m{unif}(\Delta)$ is actually the most general unifier for $\Delta$ (Theorem~\ref{thm:ho_correctness_of_unifiers}).
  The main complexity in this case is to handle the pattern variables following a unification metavariable.
\end{enumerate}

\begin{lemma}
  \label{thm:ho_unifier_subset}
  Given unification contexts $\Delta$ and $\Delta'$,
  $\m{eqs}(\Delta) \subseteq \m{eqs}(\Delta')$, $\m{defs}(\Delta) = \m{defs} (\Delta')$, $UV(\Delta) = UV(\Delta')$,
   then any unifier of $\Delta'$ is a unifier of $\Delta$.
\end{lemma}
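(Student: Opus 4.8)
The plan is to follow the first-order argument of Lemma~\ref{thm:fo_unifier_subset} essentially verbatim. The only genuinely new ingredients in the higher-order setting — the $\lambda$-binders and the simultaneous variable renamings carried by recursion constants and unification metavariables — interact trivially with definitional expansion, since (as recorded just before the substitution definitions) $\m{exp}$ commutes with $\lambda$-abstraction and with variable renaming, and $\m{exp}^{\Delta}\depth k(M)$ depends only on $\m{defs}(\Delta)$, not on $\m{eqs}(\Delta)$.

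First I would fix a unifier $\Gamma$ of $\Delta'$, so that $\m{dom}(\Gamma) = UV(\Delta')$ and every equation of $\Delta'[\Gamma]$ holds, i.e. $\m{exp}^{\Delta'[\Gamma]}\depth k(M[\Gamma]) = \m{exp}^{\Delta'[\Gamma]}\depth k(M'[\Gamma])$ for all $k$ and all $M \doteq M' \in \m{eqs}(\Delta')$. Since $UV(\Delta) = UV(\Delta')$, the substitution $\Gamma$ already has the right domain to be a unifier of $\Delta$, so it remains only to verify that every equation of $\Delta[\Gamma]$ holds. Take $M \doteq M' \in \m{eqs}(\Delta)$; by the hypothesis $\m{eqs}(\Delta) \subseteq \m{eqs}(\Delta')$ this equation also lies in $\m{eqs}(\Delta')$, hence holds under $\Gamma$ with respect to $\Delta'[\Gamma]$. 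It therefore suffices to show $\m{exp}^{\Delta[\Gamma]}\depth k(P) = \m{exp}^{\Delta'[\Gamma]}\depth k(P)$ for $P \in \{M[\Gamma], M'[\Gamma]\}$ and all $k$, after which the two desired expansions of $M[\Gamma]$ and $M'[\Gamma]$ coincide by transitivity through $\Delta'[\Gamma]$.

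For that last equality I would chain three steps, exactly as in the first-order proof: pass from $\Delta[\Gamma]$ to $\m{defs}(\Delta[\Gamma])$ (expansion depends only on definitions), rewrite $\m{defs}(\Delta[\Gamma])$ to $\m{defs}(\Delta'[\Gamma])$, and pass back. The middle step is the crux: $\Delta[\Gamma]$ and $\Delta'[\Gamma]$ are formed by adjoining $\m{defs}(\Gamma)$ to $\{\recvar r =_d U[\Gamma] \mid \recvar r =_d U \in \m{defs}(\Delta)\}$ and to $\{\recvar r =_d U[\Gamma] \mid \recvar r =_d U \in \m{defs}(\Delta')\}$ respectively, and since $\m{defs}(\Delta) = \m{defs}(\Delta')$ these two sets are identical, so $\m{defs}(\Delta[\Gamma]) = \m{defs}(\Delta'[\Gamma])$ (up to the consistent recursion-constant renaming built into the $\Delta_1,\Delta_2$ notation); moreover $P$ mentions only recursion constants occurring among these shared definitions.

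I do not expect a real obstacle: the statement is a direct transcription of Lemma~\ref{thm:fo_unifier_subset}, and the higher-order features never bear on the reasoning because expansion ignores unification equations and commutes with binders and renamings. The only point requiring mild care is the bookkeeping of recursion-constant renaming when $\m{defs}(\Gamma)$ is merged in, which the standing convention on $\Delta_1,\Delta_2$ already resolves.
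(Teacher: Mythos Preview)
Your proposal is correct and follows exactly the same approach as the paper, which simply states that the proof is essentially the same as Lemma~\ref{thm:fo_unifier_subset} by observing that definitional expansion depends only on recursive definitions and not on unification equations. You have in fact spelled out that argument in more detail than the paper does.
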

\begin{proof}
  The proof is essentially the same as the proof for Lemma~\ref{thm:fo_unifier_subset}, by observing that 
  definitional expansion does not depend on unification equations but only on recursive definitions.
\end{proof}

\begin{lemma}
  \label{thm:ho_unifier_superset}
  If $\Gamma$ is a unifier for $\Delta$, then $\Gamma$ is a unifier for $\Delta'$
  where $\Delta'$ has all recursive definitions of $\Delta$ and additional true equations 
  $M \doteq M'$  in the sense that $\m{exp}^{\Delta[\Gamma]}\depth k (M[\Gamma]) = \m{exp}^{\Delta[\Gamma]}\depth k (M'[\Gamma])$.
\end{lemma}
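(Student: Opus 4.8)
The plan is to transport the proof of Lemma~\ref{thm:fo_unifier_superset} to the higher-order setting essentially verbatim, the only content being that definitional expansion reads off the recursive definitions of a context and ignores its unification equations. Concretely, I would first record that $\m{defs}(\Delta'[\Gamma]) = \m{defs}(\Delta[\Gamma])$: by hypothesis $\m{defs}(\Delta') = \m{defs}(\Delta)$, and applying $\Gamma$ to a unification context touches only the equations and the bodies of the recursive definitions (performing the same recursion-constant renaming in both cases), so both $\m{defs}(\Delta'[\Gamma])$ and $\m{defs}(\Delta[\Gamma])$ equal $\m{defs}(\Gamma)$ together with $\{\recvar r =_d U[\Gamma] \mid \recvar r =_d U \in \m{defs}(\Delta)\}$. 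Using the identity $\m{exp}^{\Theta}\depth k(M) = \m{exp}^{\m{defs}(\Theta)}\depth k(M)$ recalled after preprocessing, this yields $\m{exp}^{\Delta[\Gamma]}\depth k(M) = \m{exp}^{\Delta'[\Gamma]}\depth k(M)$ for every term $M$ and depth $k$.

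Next I would verify the two defining conditions of ``$\Gamma$ is a unifier of $\Delta'$''. The domain condition $\m{dom}(\Gamma) = UV(\Delta')$ follows from $\m{dom}(\Gamma) = UV(\Delta)$ (since $\Gamma$ unifies $\Delta$) together with $UV(\Delta') = UV(\Delta)$, as the added equations $M \doteq M'$ are stated over the recursion constants and unification metavariables already present. For the equational condition, take any $M \doteq M' \in \m{eqs}(\Delta')$. If $M \doteq M' \in \m{eqs}(\Delta)$, then $\m{exp}^{\Delta[\Gamma]}\depth k(M[\Gamma]) = \m{exp}^{\Delta[\Gamma]}\depth k(M'[\Gamma])$ because $\Gamma$ unifies $\Delta$; otherwise $M \doteq M'$ is one of the added equations and this same identity is exactly the hypothesis. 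In either case the first step transports the equality to $\m{exp}^{\Delta'[\Gamma]}\depth k(M[\Gamma]) = \m{exp}^{\Delta'[\Gamma]}\depth k(M'[\Gamma])$, which is what ``$M \doteq M'$ holds in $\Delta'[\Gamma]$'' means.

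I do not expect a genuine obstacle: the argument is the ``expansion ignores equations'' observation, just as in the first-order lemma. The only point that wants a moment's care is the bookkeeping of recursion-constant renaming when forming $\Delta[\Gamma]$ versus $\Delta'[\Gamma]$; but since $\Delta$ and $\Delta'$ carry the same recursive definitions, the renamings can be chosen identically, so $\m{defs}(\Delta[\Gamma])$ and $\m{defs}(\Delta'[\Gamma])$ are literally the same context, and the higher-order features ($\lambda$-bindings, pattern variables) never enter the reasoning.
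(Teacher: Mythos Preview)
Your proposal is correct and follows the same approach as the paper, whose proof is the single observation that definitional expansion depends only on recursive definitions, so $\m{exp}^{\Delta[\Gamma]}\depth k(M) = \m{exp}^{\Delta'[\Gamma]}\depth k(M)$ for all $k$ and $M$. You have simply unpacked this one-liner into its constituent bookkeeping steps (matching $\m{defs}$, the domain condition, and the two sources of equations), which is fine.
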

\begin{proof}
  Because definitional expansion depends only on recursive definitions,
  we have 
  $\m{exp}^{\Delta [\Gamma]} \depth k (M) = \m{exp}^{\Delta' [\Gamma]} \depth k (M)$ for all $k$ and $M$.
\end{proof}

  Let ${\Gamma'}|_{UV(\Delta)}$ be the substitution $\Gamma'$ with the domain restricted to $UV(\Delta)$.
  That is, $\Gamma'|_{UV(\Delta)}$ is obtained from $\Gamma'$ by
  removing all substitution equations of $\Gamma'$ if the unification metavariable on the left-hand side is not in $UV(\Delta)$.
  % We have the following three lemmas.

  We state a similar lemma to Lemma~\ref{thm:ho_unifier_superset} but now the unifier of $\Delta'$ can have a larger domain than $UV(\Delta)$.

\begin{lemma}
  \label{thm:ho_unifier_superset_dom_restr}
  If $\Gamma$ is a unifier for $\Delta$, then $\Gamma'$ (where $\Gamma = \Gamma' |_{dom(\Gamma)}$) is a unifier for $\Delta'$
  where $\Delta'$ has all recursive definitions of $\Delta$ and additional true equations 
  $M \doteq M'$  in the sense that $\m{exp}^{\Delta[\Gamma']}\depth k (M[\Gamma']) = \m{exp}^{\Delta[\Gamma']}\depth k (M'[\Gamma'])$.
\end{lemma}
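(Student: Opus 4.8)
The plan is to adapt the proof of Lemma~\ref{thm:ho_unifier_superset} almost verbatim, the only genuinely new ingredient being the bookkeeping forced by the enlarged domain. Unfolding the definition of a unifier, showing that $\Gamma'$ unifies $\Delta'$ requires two things: that $\m{dom}(\Gamma') = UV(\Delta')$, which holds by the way $\Gamma'$ is chosen, and that every equation of $\Delta'[\Gamma']$ holds. I would partition $\m{eqs}(\Delta'[\Gamma'])$ into the images under $\Gamma'$ of the original equations of $\Delta$ and the images of the freshly added equations $M \doteq M'$, and treat the two groups separately.

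For the original equations, the key observation is a \emph{restriction-stability} fact: since $\Gamma = \Gamma'|_{\m{dom}(\Gamma)}$, the substitutions $\Gamma$ and $\Gamma'$ agree on $\m{dom}(\Gamma) = UV(\Delta)$, and every term occurring in $\m{eqs}(\Delta)$ or $\m{defs}(\Delta)$ mentions only unification metavariables from $UV(\Delta)$; hence applying $\Gamma'$ to any such term gives the same result as applying $\Gamma$. Consequently $\Delta[\Gamma]$ and the ``$\Delta$-part'' of $\Delta'[\Gamma']$ carry the same equations and the same recursive definitions, up to renaming of recursion constants that were introduced fresh and therefore do not occur in any term of $\Delta$. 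Because, as already noted, definitional expansion depends only on recursive definitions, and any definitions added in passing from $\Delta$ to $\Delta'$ and from $\Gamma$ to $\Gamma'$ concern only fresh recursion constants, $\m{exp}^{\Delta'[\Gamma']}\depth{k}$ agrees with $\m{exp}^{\Delta[\Gamma]}\depth{k}$ on every term built from the old recursion constants. Since each original equation holds under $\Gamma$ (as $\Gamma$ unifies $\Delta$), it still holds under $\Gamma'$ in $\Delta'$.

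For the newly added equations, the hypothesis supplies precisely $\m{exp}^{\Delta[\Gamma']}\depth{k}(M[\Gamma']) = \m{exp}^{\Delta[\Gamma']}\depth{k}(M'[\Gamma'])$ for all $k$, and the same reasoning---expansion depends only on recursive definitions, which coincide up to fresh-constant renaming---transports this equality into $\Delta'[\Gamma']$. Hence these equations hold as well, and $\Gamma'$ is a unifier of $\Delta'$.

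The only point that requires care---and thus the main obstacle---is the substitution- and renaming-bookkeeping: one must confirm that restricting $\Gamma'$ back to $UV(\Delta)$ truly leaves the action of $\Gamma'$ on old terms unchanged, and that the recursion-constant renaming implicit in forming $\Delta'[\Gamma']$ versus $\Delta[\Gamma]$ does not perturb the relevant expansions. Both follow from the freshness conventions already in force for the rules (new recursion constants and unification metavariables are globally fresh) together with the previously established facts that substitution commutes with $\lambda$-abstraction and with simultaneous variable renaming.
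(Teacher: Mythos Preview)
Your proposal is correct and follows essentially the same approach as the paper, which simply states ``Similar to the proof of Lemma~\ref{thm:ho_unifier_superset}.'' You have supplied considerably more detail than the paper's one-line proof, explicitly working through the restriction-stability bookkeeping that the paper leaves implicit, but the core idea---that definitional expansion depends only on recursive definitions, so the extra equations and the enlarged substitution do not disturb the old equalities---is identical.
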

\begin{proof}
  Similar to the proof of Lemma~\ref{thm:ho_unifier_superset}.
\end{proof}

We show that the domain restriction preserves unifiers and their ordering.
\begin{lemma}
  \label{thm:ho_unifier_subset_fewer_unif_vars}
  Given unification contexts $\Delta$ and $\Delta'$,
  $\m{eqs}(\Delta) \subseteq \m{eqs}(\Delta')$, $\m{defs}(\Delta) \subseteq \m{defs} (\Delta')$, $UV(\Delta) \subseteq UV(\Delta')$,
   for any unifier $\Gamma'$ of  $\Delta'$, $\Gamma'|_{UV(\Delta)}$ is a unifier of $\Delta$.
\end{lemma}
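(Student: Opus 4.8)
The plan is to combine the reasoning of Lemma~\ref{thm:ho_unifier_subset} with the observation that restricting a substitution to the unification metavariables that actually occur in a context does not change the result of applying it to that context. First I would record two easy facts about $\Gamma'|_{UV(\Delta)}$. It is still a substitution: deleting substitution equations preserves contradiction-freeness, the uniqueness of the left-hand-side metavariables, and the condition $FV(\cdot)\subseteq\bar x$ on each remaining equation. And its domain is exactly $UV(\Delta)$: since $\Gamma'$ unifies $\Delta'$ we have $\m{dom}(\Gamma') = UV(\Delta')$, and $UV(\Delta)\subseteq UV(\Delta')$ by hypothesis, so $\m{dom}(\Gamma'|_{UV(\Delta)}) = \m{dom}(\Gamma')\cap UV(\Delta) = UV(\Delta)$. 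Also note $\m{defs}(\Gamma'|_{UV(\Delta)}) = \m{defs}(\Gamma')$, since the restriction only removes substitution equations.

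Next I would show that applying $\Gamma'$ and applying $\Gamma'|_{UV(\Delta)}$ agree on everything that lives inside $\Delta$. For any term $M$ occurring in $\m{eqs}(\Delta)$ or as the body of a recursive definition of $\Delta$, every unification metavariable of $M$ lies in $UV(\Delta)$ by definition of $UV(\cdot)$, and the two substitutions carry the same substitution equation for each such metavariable, so $M[\Gamma'] = M[\Gamma'|_{UV(\Delta)}]$ as syntactic terms (fixing one choice of recursion-constant renaming compatible with both $\Delta$ and $\Delta'$, which is possible because $\m{defs}(\Delta)\subseteq\m{defs}(\Delta')$ and $\Gamma'|_{UV(\Delta)}$ carries the same recursive definitions as $\Gamma'$). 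In particular, for every $\recvar r =_d U \in \m{defs}(\Delta) \subseteq \m{defs}(\Delta')$ both $\Delta[\Gamma'|_{UV(\Delta)}]$ and $\Delta'[\Gamma']$ contain the identical definition $\recvar r =_d U[\Gamma'|_{UV(\Delta)}] = U[\Gamma']$, and both contexts also contain $\m{defs}(\Gamma') = \m{defs}(\Gamma'|_{UV(\Delta)})$ verbatim; so the recursive definitions of all recursion constants reachable from $M[\Gamma'|_{UV(\Delta)}]$ coincide in the two contexts.

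Then I would run the computation of Lemma~\ref{thm:ho_unifier_subset} (equivalently Lemma~\ref{thm:fo_unifier_subset}). Fix $M\doteq M' \in \m{eqs}(\Delta)\subseteq\m{eqs}(\Delta')$. Since $\Gamma'$ unifies $\Delta'$, $\m{exp}^{\Delta'[\Gamma']}\depth k(M[\Gamma']) = \m{exp}^{\Delta'[\Gamma']}\depth k(M'[\Gamma'])$ for all $k$. Because definitional expansion depends only on recursive definitions, and $M[\Gamma'|_{UV(\Delta)}] = M[\Gamma']$ reaches only recursion constants whose definitions agree in $\Delta[\Gamma'|_{UV(\Delta)}]$ and $\Delta'[\Gamma']$ by the previous paragraph, we obtain $\m{exp}^{\Delta[\Gamma'|_{UV(\Delta)}]}\depth k(M[\Gamma'|_{UV(\Delta)}]) = \m{exp}^{\Delta'[\Gamma']}\depth k(M[\Gamma'])$, and symmetrically for $M'$. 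Chaining the three equalities shows $M\doteq M'$ holds in $\Delta[\Gamma'|_{UV(\Delta)}]$; together with $\m{dom}(\Gamma'|_{UV(\Delta)}) = UV(\Delta)$, this makes $\Gamma'|_{UV(\Delta)}$ a unifier of $\Delta$.

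I expect the only real friction to be the bookkeeping around recursion-constant renaming in the two substitution applications $\Delta[\Gamma'|_{UV(\Delta)}]$ and $\Delta'[\Gamma']$: one must argue that a single choice of renaming can be used for both, which follows from $\m{defs}(\Delta)\subseteq\m{defs}(\Delta')$ and from $\Gamma'|_{UV(\Delta)}$ carrying exactly the recursive definitions of $\Gamma'$ (and we may assume names in $\Delta'$ and $\Gamma'$ are globally disjoint to begin with). Everything else is a direct reuse of the fact, already exploited repeatedly, that $\m{exp}$ ignores unification equations and sees only recursive definitions.
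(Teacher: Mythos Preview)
Your proposal is correct and follows essentially the same route as the paper's own proof: set $\Gamma = \Gamma'|_{UV(\Delta)}$, observe $M[\Gamma] = M[\Gamma']$ for any $M$ occurring in $\Delta$, and then reduce to the fact that $\m{exp}$ depends only on recursive definitions (so $\m{exp}^{\Delta[\Gamma]}\depth k(M[\Gamma]) = \m{exp}^{\Delta'[\Gamma']}\depth k(M[\Gamma'])$), exactly as in Lemma~\ref{thm:fo_unifier_subset}. If anything you are more careful than the paper, which does not explicitly check that the restricted context is still a substitution, that its domain is $UV(\Delta)$, or discuss the recursion-constant renaming bookkeeping.
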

\begin{proof}
  This proof is similar to the proof of Lemma~\ref{thm:fo_unifier_subset}.

  Let $\Gamma'$ be a unifier of $\Delta'$,  all unification equations of $\Delta'[\Gamma'] $ hold.
  Let $\Gamma = \Gamma'|_{UV(\Delta)}$.
  Take any $M \doteq M' \in \Delta$, we want to show 
  $\m{exp}^{\Delta[\Gamma]}\depth k(M[\Gamma]) = \m{exp}^{\Delta[\Gamma]}\depth k(M'[ \Gamma])$.
  We know $\m{exp}^{\Delta' [\Gamma']}\depth k (M[\Gamma'])= \m{exp}^{\Delta' [\Gamma']}\depth k(M'[\Gamma'])$, 
  and all recursion constants in $M$ occur in $\Delta$ (because unification contexts have to be well-formed).
  WLOG, it suffices to show $\m{exp}^{\Delta [\Gamma]}\depth k (M[\Gamma])= \m{exp}^{\Delta' [\Gamma']}\depth k(M[\Gamma'])$.
  By definition, for any $M$ that occurs in $\Delta$, $M[\Gamma] = M[\Gamma']$, we have $\Delta[\Gamma] = \Delta[\Gamma']$.
  Then, it suffices to show $ \m{exp}^{\Delta [\Gamma']}\depth k (M[\Gamma'])
 = \m{exp}^{\Delta' [\Gamma']}\depth k(M[\Gamma'])$.
  But since definitional expansions only depend on recursive definitions that actually occur in $\Delta$, we have 

  \begin{tabular}{ll}
 $\m{exp}^{\Delta [\Gamma']}\depth k (M[\Gamma'])$ \\
 $= \m{exp}^{\m{defs}(\Delta [\Gamma'])}\depth k(M[\Gamma'])$ & (expansion only definition only depends on definitions of $\Delta[\Gamma']$) \\
 $= \m{exp}^{\m{defs}(\Delta' [\Gamma'])}\depth k(M[\Gamma'])$ & ($M$ can only depend on recursion constants occurring in $\Delta$)\\
 $= \m{exp}^{\Delta' [\Gamma']}\depth k(M[\Gamma']) $ &(expansion only definition only depends on definitions of $\Delta'[\Gamma']$) \\
  \end{tabular}

\end{proof}

\begin{lemma}[Domain Restriction Preserves Unifier Ordering]
  \label{thm:domain_restriction_preserves_most_general_unifiers}
  
  Given a substitution $\Gamma_1 \circ \Gamma_2 = \Gamma_3$, 
  let $S \subseteq \m{dom}(\Gamma_1)$, then there exists $\Gamma_2'$ such that 
   $(\Gamma_1 | _{S}) \circ \Gamma_2' = (\Gamma_3 | _{S}) $. 
   Moreover,  $\Gamma_2' = \Gamma_2 | _{FUV(\Gamma_1|_{S})}$,
where the set of  free unification metavariables of a substitution, $FUV(\Gamma)$, is the set of unification metavariables that occur on the right-hand sides and recursive definitions of the substitution $\Gamma$.
\end{lemma}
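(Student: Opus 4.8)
The plan is to prove that the two substitutions $(\Gamma_1|_S)\circ\Gamma_2'$ and $\Gamma_3|_S$ are equal by unfolding the definitions of composition, substitution application, and domain restriction, and then checking directly that the two have the same domain, the same recursive definitions (up to consistent renaming of recursion constants), and agree up to definitional expansion on the value of every metavariable in their common domain. First I would record what $\Gamma_3=\Gamma_1\circ\Gamma_2$ looks like: $\m{dom}(\Gamma_3)=\m{dom}(\Gamma_1)\cup\m{dom}(\Gamma_2)$, for $H^m\,\bar x\doteq M'\in\m{eqs}(\Gamma_1)$ the equation of $\Gamma_3$ for $H^m$ is $H^m\,\bar x\doteq M'[\Gamma_2]$, and $\m{defs}(\Gamma_3)=\m{defs}(\Gamma_2)$ together with $\{\recvar r=_d U[\Gamma_2]\mid \recvar r=_d U\in\m{defs}(\Gamma_1)\}$. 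Since $S\subseteq\m{dom}(\Gamma_1)\subseteq\m{dom}(\Gamma_3)$, restricting to $S$ leaves $\m{dom}(\Gamma_3|_S)=S$, keeps all of those recursive definitions, and gives $H^m\,\bar x\doteq M'[\Gamma_2]$ as the equation for each $H^m\in S$.

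The substantive step is to show that $\Gamma_2$ and $\Gamma_2'=\Gamma_2|_{FUV(\Gamma_1|_S)}$ have the same effect on every term all of whose unification metavariables lie in $FUV(\Gamma_1|_S)$: for such a metavariable $K$, either $K\in\m{dom}(\Gamma_2)$, in which case $K\in\m{dom}(\Gamma_2')$ with the same value, or $K\notin\m{dom}(\Gamma_2)$, in which case neither substitution rewrites $K$. An induction on term structure — using the already-established facts that substitution commutes with $\lambda$-abstraction and with variable renaming — then yields $M[\Gamma_2']=M[\Gamma_2]$ (up to renaming of fresh recursion constants) for every right-hand side $M$ of $\Gamma_1|_S$ and every body $U$ of a recursive definition of $\Gamma_1$, since by definition $FUV(\Gamma_1|_S)$ is exactly the set of metavariables occurring in those terms. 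Now I unfold $(\Gamma_1|_S)\circ\Gamma_2'$ as $(\Gamma_1|_S)[\Gamma_2']$ together with the additional $\Gamma_2'$-equations whose head is not in $\m{dom}(\Gamma_1|_S)=S$; using $\m{defs}(\Gamma_2')=\m{defs}(\Gamma_2)$ (domain restriction does not touch recursive definitions) and the agreement just established, its recursive definitions are exactly those of $\Gamma_3|_S$ and its equation for each $H^m\in S$ is $H^m\,\bar x\doteq M'[\Gamma_2']=H^m\,\bar x\doteq M'[\Gamma_2]$, matching $\Gamma_3|_S$.

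The step requiring the most care — and what I expect to be the main obstacle — is the domain bookkeeping: $(\Gamma_1|_S)\circ\Gamma_2'$ formally also carries the equations of $\Gamma_2'$ whose left-hand metavariable is not in $S$, and one must show these add nothing, i.e. that $\m{dom}(\Gamma_2')\setminus S$ contributes no genuinely new equation. I would discharge this by observing that every metavariable of $\m{dom}(\Gamma_2')$ occurs only in the right-hand sides and recursive-definition bodies of $\Gamma_1|_S$, and after applying $\Gamma_2'$ none of them remains anywhere in $(\Gamma_1|_S)[\Gamma_2']$, so these equations are for metavariables unreachable in the composed substitution and may be dropped without changing the substitution it denotes; in the places where the lemma is invoked one can alternatively check directly that $\Gamma_2$'s domain is confined to metavariables reachable from $S$, so that $\m{dom}(\Gamma_2')\subseteq S$ from the start. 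Once the domains are reconciled, the remaining checks are the routine unfoldings above, and the equality $(\Gamma_1|_S)\circ\Gamma_2'=\Gamma_3|_S$ follows.
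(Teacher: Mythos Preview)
Your proposal is correct and follows essentially the same approach as the paper: the paper's entire proof is the single observation that for any $H^m\,\bar x\doteq M\in\Gamma_1|_S$, applying $\Gamma_2$ to $M$ gives the same result as applying $\Gamma_2|_{FUV(\Gamma_1|_S)}$, which is exactly your ``substantive step.'' You are in fact more careful than the paper: the domain-bookkeeping issue you flag (that $(\Gamma_1|_S)\circ\Gamma_2'$ may carry extra equations for metavariables in $\m{dom}(\Gamma_2')\setminus S$) is real and is not addressed in the paper's one-line argument; your proposed resolutions---dropping equations for unreachable metavariables, or checking at the use sites that $\m{dom}(\Gamma_2')\subseteq S$---are both reasonable ways to close the gap the paper leaves open.
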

\begin{proof}
  For any substitution equation $H^m \,\bar x \doteq M \in \Gamma_1| _{S}$, 
  the result of applying $\Gamma_2$ to $M$ is the same as the result of 
  applying $ \Gamma_2 | _{FUV(\Gamma_1|_{S})}$ to $M$.

\end{proof}

\begin{theorem}
  [Correspondence]
  \label{thm:ho_correspondence}
  If $\Delta$ transforms into $\Delta'$ by applying one of the rules to some equation in $\Delta$, 
  then the set of unifiers of $\Delta$ coincides with the set of unifiers in $\Delta'$ with domains restricted to $UV(\Delta)$.
  Moreover, domain restriction preserves most general unifiers.
  
  % Moreover, the most general unifier of $\Delta$ corresponds to the most general unifier of $\Delta'$.
  % \red{most general unifier or unifiers? More general unifier relation forms a preorder, but not partial order.}
\end{theorem}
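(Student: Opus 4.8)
The plan is to prove the two inclusions of the asserted set equality and then the ``moreover'' clause, isolating first what is uniform across the rules and case-analyzing only where they differ. Note first that every rule in Figure~\ref{fig:unification_rules} only \emph{adds} equations, recursive definitions, or the symbol $\m{contra}$ to the context, so that $\m{eqs}(\Delta)\subseteq\m{eqs}(\Delta')$, $\m{defs}(\Delta)\subseteq\m{defs}(\Delta')$, and $UV(\Delta)\subseteq UV(\Delta')$ in every case. The \emph{easy inclusion} --- that $\Gamma'|_{UV(\Delta)}$ is a unifier of $\Delta$ whenever $\Gamma'$ is a unifier of $\Delta'$ --- is then immediate: if $\m{contra}\in\Delta'$ there is nothing to prove, and otherwise it is exactly Lemma~\ref{thm:ho_unifier_subset_fewer_unif_vars}.

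For the \emph{harder inclusion} --- that every unifier $\Gamma$ of $\Delta$ extends to some unifier $\Gamma'$ of $\Delta'$ with $\Gamma'|_{UV(\Delta)}=\Gamma$ --- I would split the rules into three groups. \emph{(i) Rules concluding $\m{contra}$} (\ref{rule:ho_structural_var_const}, \ref{rule:ho_structural_var_var_clash}, \ref{rule:ho_structural_const_const_clash}, \ref{rule:ho_projection_fail}): here I would show $\Delta$ has no unifier at all, so the inclusion holds vacuously. For the clash rules this is because substitution does not change the constructor/variable head of a term, so the depth-$1$ definitional expansions of the two sides of the offending equation have distinct heads; for \ref{rule:ho_projection_fail} it is because any unifier sends $H\hcontra$ to a value $\lambda\bar y.\,U$ with $FV(U)\subseteq\bar y$, so the variable $z_i\notin\bar y$ cannot head the expansion of $(H\hcontra\,\bar y)[\Gamma]$, contradicting the premise equation. \emph{(ii) Non-generative rules} (\ref{rule:ho_inst_contra}, \ref{rule:ho_inst_rec}, \ref{rule:ho_structural}, \ref{rule:ho_rec_expansion}, \ref{rule:ho_unif_consistency_contra}, \ref{rule:ho_unif_consistency_rec}, and the symmetry/transitivity rules): here $UV(\Delta)=UV(\Delta')$, so I take $\Gamma'=\Gamma$ and verify that each newly added equation $M\doteq M'$ is already true under $\Gamma$, i.e.\ $\m{exp}^{\Delta[\Gamma]}\depth k(M[\Gamma])=\m{exp}^{\Delta[\Gamma]}\depth k(M'[\Gamma])$ for all $k$, so Lemma~\ref{thm:ho_unifier_superset} applies. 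The checks are the expected ones: equal heads force equal arguments at one lower depth (\ref{rule:ho_structural}); unfolding a recursion constant does not change its expansion by definition (\ref{rule:ho_rec_expansion}); the expansion commutes with $\lambda$-abstraction and with variable renaming, so stripping the outer binder preserves an equation (\ref{rule:ho_inst_contra}, \ref{rule:ho_inst_rec}); and two substitution-shaped facts about a single metavariable must agree after renaming because the unifier assigns it one value (\ref{rule:ho_unif_consistency_contra}, \ref{rule:ho_unif_consistency_rec}).

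\emph{(iii) Generative rules} (\ref{rule:ho_immitation}, \ref{rule:ho_projection}, \ref{rule:ho_rec_pruning}, \ref{rule:ho_flexible_flexible_different_vars}, \ref{rule:ho_flexible_flexible_same_var}): here the work is to build the extension $\Gamma'$, the idea being that the premise equation, true under $\Gamma$, pins down enough of the value of the resolved metavariable that the fresh metavariables can be handed its ``leftover'' pieces. For \ref{rule:ho_immitation}/\ref{rule:ho_projection} the value of $H\hcontra$ under $\Gamma$ expands to a term with head $c$ (resp.\ $y_i$), and each argument slot supplies the value of the corresponding fresh $G\hrec_j$ once the $\eta$-expansion binders are stripped; for \ref{rule:ho_flexible_flexible_different_vars}/\ref{rule:ho_flexible_flexible_same_var} the value of $G^m$ (and $H^m$) under $\Gamma$ cannot mention the variables being pruned away, so it factors through a term in $\bar z$, which becomes the value of the fresh $F^m$; for \ref{rule:ho_rec_pruning} the value of $H\hrec$ must be a recursion constant whose unfolding cannot mention the pruned variables, and $\recvar t$, with body the fresh $G\hcontra\,\bar w$, is defined to mimic the common part. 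In each case one then checks that $\Gamma'$ unifies $\Delta'$ (the premise equation held already, the conclusion equations hold by construction) and that $\Gamma'|_{UV(\Delta)}=\Gamma$, since the fresh metavariables are exactly $UV(\Delta')\setminus UV(\Delta)$. For the \emph{``moreover''} clause, let $\Gamma'$ be a most general unifier of $\Delta'$; by the easy inclusion $\Gamma'|_{UV(\Delta)}$ unifies $\Delta$, and for any unifier $\Gamma$ of $\Delta$ the harder inclusion gives a unifier $\tilde\Gamma$ of $\Delta'$ with $\tilde\Gamma|_{UV(\Delta)}=\Gamma$; from $\Gamma'\circ\Gamma_2=\tilde\Gamma$ for some $\Gamma_2$, Lemma~\ref{thm:domain_restriction_preserves_most_general_unifiers} with $S=UV(\Delta)\subseteq\m{dom}(\Gamma')$ yields $\Gamma_2'$ with $(\Gamma'|_{UV(\Delta)})\circ\Gamma_2'=\tilde\Gamma|_{UV(\Delta)}=\Gamma$, so $\Gamma'|_{UV(\Delta)}$ is most general for $\Delta$; the converse direction is symmetric.

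I expect the main obstacle to be group (iii): constructing $\Gamma'$ for the generative rules, above all \ref{rule:ho_rec_pruning}, where one must exhibit a fresh recursion constant whose body is itself a fresh metavariable and verify that the rewritten recursion-constant equation together with the two new equations are all satisfied. The bookkeeping around the $\eta$-expansion binders in \ref{rule:ho_immitation}/\ref{rule:ho_projection} and around intersections and permutations of variable lists in \ref{rule:ho_flexible_flexible_different_vars}/\ref{rule:ho_flexible_flexible_same_var} is delicate but becomes routine once the pattern restriction is used to guarantee that the relevant values are closed over the advertised variables.
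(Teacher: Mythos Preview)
Your proposal is correct and follows essentially the same approach as the paper: the same three-way split (failure rules, non-generative rules, generative rules), the same lemmas invoked at each point (Lemma~\ref{thm:ho_unifier_subset_fewer_unif_vars} for the easy inclusion, Lemma~\ref{thm:ho_unifier_superset} for the non-generative rules, Lemma~\ref{thm:domain_restriction_preserves_most_general_unifiers} for the ``moreover''), and the same construction of the extension $\Gamma'$ for the generative rules. The only minor difference is that the paper additionally argues \emph{uniqueness} of the extension $\Gamma'$ in the generative cases, which you do not mention; however, uniqueness is not needed for the set equality stated in the theorem, so your plan is complete as it stands.
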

\begin{proof}
  The proof has two parts. First, we show that domain restriction preserves unifiers. Then, we show that the domain restriction preserves most general unifiers.

  \textbf{(Part 1)}

  If $\Delta'$ contains $\m{contra}$ then there is no unifier for $\Delta'$, and we can show in each case that there is no unifier for $\Delta$ 
  by inspecting rules (\ref*{rule:ho_structural_var_const})(\ref*{rule:ho_structural_var_var_clash})(\ref*{rule:ho_structural_const_const_clash})(\ref*{rule:ho_projection_fail}), and the case where $\m{contra}$ is already present in $\Delta$.
   Otherwise, assume $\m{contra} \notin \Delta'$.

  There are two groups of rules, rules that do not add new unification metavariables (\textbf{Group 1}), and rules that add new unification metavariables (\textbf{Group 2}).

  (\textbf{Group 1})
  For the rules that do not add new unification metavariables, every unifier for $\Delta'$ is also a unifier of $\Delta$ by Lemma~\ref{thm:ho_unifier_subset}.
  And every unifier of $\Delta$ is also a unifier for $\Delta'$ by Lemma~\ref{thm:ho_unifier_superset}. 
  We show rules (\ref*{rule:ho_rec_expansion})(\ref*{rule:ho_unif_consistency_contra}) as examples for applying Lemma~\ref{thm:ho_unifier_superset}.

  Rule (\ref*{rule:ho_rec_expansion}), given $\recvar{r}\,\bar x \doteq \recvar{s}\,\bar y$, $r=_d \lambda z.\, U_1$, and $s =_d \lambda w.\, U_2$ in $\Delta$, it adds the equation 
  $[\bar x/\bar z] U_1 \doteq [\bar y/\bar w] U_2$ in $\Delta'$. By Lemma~\ref{thm:ho_unifier_superset}, it suffices to show
  $\m{exp}\depth{k}^{\Delta[\Gamma]}(([\bar x/\bar z] U_1)[\Gamma]) = \m{exp}\depth{k}^{\Delta[\Gamma]}(([\bar y/\bar w] U_2)[\Gamma])$.
  Since $\Gamma$ is a unifier for $\Delta$, we have $\m{exp}\depth{k}^{\Delta[\Gamma]}((\recvar{r}\,\bar x)[\Gamma]) = \m{exp}\depth{k}^{\Delta[\Gamma]}((\recvar{s}\,\bar y)[\Gamma])$, 
  but $\m{exp}\depth{k}^{\Delta[\Gamma]}((\recvar{r}\,\bar x)[\Gamma]) = \m{exp}\depth{k}^{\Delta[\Gamma]}(\recvar{r}\, \bar x) = \m{exp}\depth{k}^{\Delta[\Gamma]}([\bar x/\bar z] (U_1[\Gamma]))$ and
  similarly for $(\recvar{s}\,\bar y)$.

  Rule (\ref*{rule:ho_unif_consistency_contra}), given $H\hcontra\, \bar x \doteq U_1$ and $H\hcontra \, \bar y \doteq U_2$ in $\Delta$,
  it adds the equation $U_1 \doteq [\bar x/\bar y]U_2$ to $\Delta'$.
  Let $\Gamma$ be a unifier for $\Delta$, we want to show $\Gamma$ is a unifier for $\Delta'$.
  By Lemma~\ref{thm:ho_unifier_superset}, it suffices to show $\m{exp}\depth{k}^{\Delta[\Gamma]}(U_1[\Gamma]) = \m{exp}\depth{k}^{\Delta[\Gamma]}(([\bar x/\bar y]U_2)[\Gamma])$.
  Suppose $H\hcontra\,\bar w \doteq U_{H\hcontra}$ is the substitution equation in $\Gamma$, and then

  \begin{tabular}{ll}
  $\m{exp}\depth{k}^{\Delta[\Gamma]}(U_1[\Gamma]) $ & \\
  $ = \m{exp}\depth{k}^{\Delta[\Gamma]}((H\hcontra\,\bar x)[\Gamma])$ &  (since $\Gamma$ is a unifier for $\Delta$ and $H\hcontra \, \bar x \doteq U_1 \in \Delta$)\\
  $ = \m{exp}\depth{k}^{\Delta[\Gamma]}([\bar x/\bar w]U_{H\hcontra})$ & (by the substitution equation $H\hcontra\,\bar w \doteq U_{H\hcontra} \in \Gamma$) \\
  $ = \m{exp}\depth{k}^{\Delta[\Gamma]}([\bar x/\bar y][\bar y/\bar w]U_{H\hcontra})$ & (by the definition of variable renaming)\\
  $ = [\bar x/\bar y]\m{exp}\depth{k}^{\Delta[\Gamma]}([\bar y/\bar w]U_{H\hcontra})$ & (variable renaming commutes with definitional expansion )\\
  $ = [\bar x/\bar y]\m{exp}\depth{k}^{\Delta[\Gamma]}((H\hcontra \, \bar y)[\Gamma])  $ & (by the substitution equation $H\hcontra\,\bar w \doteq U_{H\hcontra} \in \Gamma$) \\
  $ = [\bar x/\bar y]\m{exp}\depth{k}^{\Delta[\Gamma]}(U_2[\Gamma])  $ & (since $\Gamma$ is a unifier for $\Delta$ and $H\hcontra \, \bar y \doteq U_2 \in \Delta$) \\
  $ = \m{exp}\depth{k}^{\Delta[\Gamma]}([\bar x/\bar y](U_2[\Gamma]))$ & (variable renaming commutes with definitional expansion )\\
  $ = \m{exp}\depth{k}^{\Delta[\Gamma]}(([\bar x/\bar y]U_2)[\Gamma])$ & (variable renaming commutes with substitution )\\
  \end{tabular}

  The rest of the cases except rules (\ref*{rule:ho_immitation})(\ref*{rule:ho_projection})(\ref*{rule:ho_rec_pruning})(\ref*{rule:ho_flexible_flexible_different_vars})(\ref*{rule:ho_flexible_flexible_same_var}) are similar.
  It is obvious that the identity mapping (as a trivial domain restriction) preserves the most general unifiers.

  (\textbf{Group 2})
  For rules that add new unification metavariables (i.e. rules
 (\ref*{rule:ho_immitation})(\ref*{rule:ho_projection})(\ref*{rule:ho_rec_pruning})(\ref*{rule:ho_flexible_flexible_different_vars})(\ref*{rule:ho_flexible_flexible_same_var})), 
 we show that if $\Gamma'$ is a unifier for $\Delta'$, and then $\Gamma = {\Gamma'}|_{UV(\Delta)}$ is the unifier for $\Delta$.

 Rules (\ref*{rule:ho_immitation})(\ref*{rule:ho_projection}), 
 we have $H\hcontra\, \bar y \doteq h \, N_1\, \dots N_n \in \Delta$, 
 and \[H\hcontra\, \bar y \doteq h\, (\lambda \bar x_1.\, G\hrec_1\, \bar y\, \bar x_1)\, \dots\, (\lambda \bar x_n.\, G\hrec_n\, \bar y\,\bar x_n) \in \Delta'\]
 Given $\Gamma'$ is a unifier for $\Delta'$, it follows from Lemma~\ref{thm:ho_unifier_subset_fewer_unif_vars} that $\Gamma$ is a unifier for $\Delta$.
 To show that the restriction $-|_{UV(\Delta)}$ preserves unifiers, it suffices to show that given any $\Gamma$ that is a unifier of $\Delta$, 
 there exists a unique $\Gamma'$ such that $\Gamma'|_{UV(\Delta)}$ is $\Gamma$. 
 Given a unifier $\Gamma$, the substitution equation for $H\hcontra$ is $H\hcontra\, \bar z \doteq h\, (\lambda \bar x_1.\, N_1')\, \dots (\lambda \bar x_n.\, N_n')$.
 The corresponding $\Gamma'$ will map $H\hcontra$ analogously, and will have 
 the substitution equation for $G\hrec_i$ as $G\hrec_i\, \bar x_i\, \bar z \doteq N_i'$.
 $\Gamma'$ is a unifier for $\Delta'$ by Lemma~\ref{thm:ho_unifier_superset_dom_restr}.
 This $\Gamma'$ is unique because any different mapping of $G\hrec_i$ (modulo definitional expansion) will make the additional equation in $\Delta'$ false.
 
 Rule (\ref*{rule:ho_rec_pruning}), we have $H\hrec \, \bar y \doteq \recvar{r} \, \bar x  \in \Delta$, 
 and \[H\hrec \, \bar y \doteq \recvar{t}\, \bar w, \recvar{r} \, \bar x \doteq \recvar{t}\, \bar w, \recvar{t} = _d \lambda \bar w.\, G\hcontra\, \bar w \in \Delta'\] with $\bar w = \bar x \cap \bar y$.
 Given $\Gamma'$ is a unifier for $\Delta'$, it follows from Lemma~\ref{thm:ho_unifier_subset_fewer_unif_vars} that $\Gamma$ is a unifier for $\Delta$.
 To show that the restriction $-|_{UV(\Delta)}$ preserves unifiers, it suffices to show that given any $\Gamma$ that is a unifier of $\Delta$, 
 there exists a unique $\Gamma'$ such that $\Gamma'|_{UV(\Delta)}$ is $\Gamma$. 
 Given a unifier $\Gamma$, the substitution equation for $H\hrec$ is $H\hrec\, \bar y \doteq \recvar{s}\, \bar w$ with $\recvar{s} =_d \lambda \bar w.\,U \in \Gamma$
 \footnote{In practice, the substitution equation may be $H\hrec\, \bar z \doteq \recvar{s}\, \bar u$, we can $\alpha$-rename it to $H\hrec \, \bar y \doteq \recvar{s}\, ([\bar y/\bar z]\bar u)$.
 It might be the case that $([\bar y/\bar z]\bar u) \subsetneq \bar w$, with $\recvar{s} =_d \lambda([\bar y/\bar z]\bar u).\, U$, but we can always construct another definition $\recvar{q} =_d \lambda \bar w.\, U$ and set $H\hrec\, \bar y \doteq \recvar{q}\, \bar w$.}. 
  $H\hrec$ cannot be mapped to another recursive unification metavariable because of the equation $H\hrec\,\bar y \doteq r\, \bar x $.
 The corresponding $\Gamma'$ will map $H\hrec$ analogously, and will have 
 the substitution equation for $G\hcontra$ as $G\hcontra\, \bar w \doteq U$.
 This $\Gamma'$ is unique because any different mapping of $G\hcontra$ (modulo definitional expansion) will make the additional equation in $\Delta'$ false.

 Rules (\ref*{rule:ho_flexible_flexible_different_vars})(\ref*{rule:ho_flexible_flexible_same_var}) follows a similar argument. We elide the full development 
 and remark that $\Gamma'$ will map the additional unification metavariable $F^m$ analogously as $\Gamma$ maps $G^m$.

 \textbf{(Part 2)}
$-|_{UV(\Delta)}$ preserves the most general unifiers since the substitution that mediates between the most general unifier $\Gamma'$ and any more specific $\Gamma_2$ is 
a substitution whose restriction mediates 
between $\Gamma'|_{UV(\Delta)}$ and $\Gamma_2|_{UV(\Delta)}$ by Lemma~\ref{thm:domain_restriction_preserves_most_general_unifiers}.

\end{proof}

\begin{lemma}[Preservation of Well-formed Unification Contexts]
  The pattern restriction, $\beta$-normal-$\eta$-long forms, and typing are respected by the unification rules. 
\end{lemma}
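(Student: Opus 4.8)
The plan is to proceed by case analysis on the rules of Figure~\ref{fig:unification_rules}, showing for each rule that if the equations and recursive definitions occurring in its premises are well-formed --- meaning they lie in the pattern fragment, are written in $\beta$-normal-$\eta$-long form modulo the standing exception for arguments of recursion constants and unification metavariables, and are well-typed for the fixed simple-type assignment --- then the equations and definitions the rule adds to the context are again well-formed. The rules whose conclusion is $\m{contra}$ (namely (\ref*{rule:ho_structural_var_const}), (\ref*{rule:ho_structural_var_var_clash}), (\ref*{rule:ho_structural_const_const_clash}), (\ref*{rule:ho_projection_fail})) require nothing, and the symmetry and transitivity rules (\ref*{rule:ho_symmetry_contra})--(\ref*{rule:ho_transitivity_rec}) are immediate once one records the invariant that the two sides of a well-formed equation carry the same simple type, which by uniqueness of simple typing makes the transitivity conclusions well-typed.

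Before the case analysis I would isolate two auxiliary facts that carry most of the weight. The first is a renaming lemma: if $M$ is well-formed and $\bar x, \bar y$ are patterns of matching simple types, then $[\bar y/\bar x]M$ is well-formed with $FV([\bar y/\bar x]M) \subseteq (FV(M)\setminus\bar x)\cup\bar y$; the point is that an injective renaming sends a list of pairwise distinct variables to a list of pairwise distinct variables, so the pattern restriction at every recursion-constant and metavariable occurrence inside $M$ survives, while renaming commutes with the head--spine structure so the $\beta$-normal-$\eta$-long shape is untouched and non-expanded argument lists stay non-expanded lists of variables. This disposes of (\ref*{rule:ho_inst_contra}), (\ref*{rule:ho_inst_rec}) (the instantiated bodies are subterms of well-formed terms and the existential $(\exists\bar x)$ supplies globally fresh, hence distinct, binders), (\ref*{rule:ho_structural}) (each $N_i$ is an argument of the head $h$, so already a recursive term whose type is the $i$-th argument type of $h$), (\ref*{rule:ho_rec_expansion}) (the bodies $U_1, U_2$ have free variables contained in $\bar z, \bar w$ by the closedness invariant on recursive definitions, and $\bar x, \bar y$ are patterns because they are arguments of $\recvar r, \recvar s$, with argument types agreeing with $\bar z, \bar w$ since the definitions are well-typed), and (\ref*{rule:ho_unif_consistency_contra}), (\ref*{rule:ho_unif_consistency_rec}) (the side conditions $FV(U_i)\subseteq\bar x_i$ together with the fact that the two argument lists of one metavariable have equal length and pointwise-equal types).

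The second auxiliary fact is the typing convention for the metavariables and recursion constants introduced under $(\exists\,\cdot)$: whenever a rule creates $F^m$, $G\hrec_i$, $G\hcontra$, or $\recvar t$, it is assigned the unique simple type that makes the conclusion typecheck, and such a type always exists because it is read off the surrounding data. For (\ref*{rule:ho_immitation}) and (\ref*{rule:ho_projection}) the head $c$ (respectively $y_i$) fixes the argument types, each $G\hrec_i$ is typed to take $\bar y$ followed by the binders $\bar w_i$ that $\m{\eta exp}$ introduces, the resulting $c\,\overline{\m{\eta exp}(\overline{G\hrec\,\bar y})}$ is $\eta$-long by construction of $\m{\eta exp}$, and each occurrence $G\hrec_i\,\bar y\,\bar w_i$ is a pattern because $\bar y$ is a pattern and the $\bar w_i$ are globally fresh and disjoint from $\bar y$ and from each other. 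For (\ref*{rule:ho_rec_pruning}), $\bar w = \bar x\cap\bar y$ is a sublist of the patterns $\bar x$ and $\bar y$, so giving $\recvar t$ and $G\hcontra$ the type of $\bar w\to o$ (where $o$ is the common base type of the two sides) makes all three conclusions well-typed, and the new definition $\recvar t =_d \lambda\bar w.\, G\hcontra\,\bar w$ is closed because $FV(G\hcontra\,\bar w)\subseteq\bar w$, exactly the variables it abstracts. Rules (\ref*{rule:ho_flexible_flexible_different_vars}) and (\ref*{rule:ho_flexible_flexible_same_var}) follow the same pattern, with $\bar z$ a sublist of both argument lists and $F^m$ typed accordingly.

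I expect the main obstacle to be the bookkeeping in the three rules that simultaneously introduce fresh metavariables and $\eta$-expansions --- (\ref*{rule:ho_immitation}), (\ref*{rule:ho_projection}), and (\ref*{rule:ho_rec_pruning}): one must be precise that $\m{\eta exp}$ expands to exactly the depth dictated by the head's simple type (so the output is $\eta$-long, neither over- nor under-expanded), that the chosen $\bar w_i$ are disjoint from $\bar y$ and pairwise distinct (so $G\hrec_i\,\bar y\,\bar w_i$ is genuinely a pattern), and --- for (\ref*{rule:ho_rec_pruning}) --- that the intersection of two patterns is again a pattern and that the closedness invariant on recursive definitions is preserved. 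Everything else reduces to the two auxiliary facts above.
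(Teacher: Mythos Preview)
Your proposal is correct and takes essentially the same approach as the paper: the paper's proof is the single line ``Directly by analyzing the rules,'' and what you have written is precisely that analysis spelled out, with the renaming lemma and the fresh-symbol typing convention made explicit. The obstacles you flag (distinctness of $\bar y,\bar w_i$ in (\ref*{rule:ho_immitation})/(\ref*{rule:ho_projection}), intersections of patterns being patterns, closedness of the new definition in (\ref*{rule:ho_rec_pruning})) are the right ones and your treatment of them is sound.
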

\begin{proof}
  Directly by analyzing the rules.
 
\end{proof}

The preservation of $\beta$-normal-$\eta$-long forms guarantees that the number of variables following any unification metavariable 
is constant throughout. We define the \emph{width} of a unification metavariable to be the number of variables following it.
For example, if $F\hcontra \, x\, y\, z$ appears in a unification context $\Delta$, then the width of $F\hcontra$ is $3$.
Similarly, we define the \emph{width} of a recursion constant to be the number of variables following it.
A recursion constant $r$ is \emph{pruned} if there exists an equation $r\, \bar x \doteq s\, \bar y$ such that $\bar y \subsetneq \bar x$, and $r$ is \emph{unpruned} otherwise.
Note that the formal concept of $\emph{pruned}$ and $\emph{unpruned}$ recursion constants here should be understood in a different sense from the rule (\ref*{rule:ho_rec_pruning}), which has been suggesting the informal meaning of removing variables from the pattern arguments.
\begin{theorem}
  [Termination]
  \label{thm:ho_termination}
   The algorithm always terminates.
\end{theorem}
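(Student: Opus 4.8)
The plan is to reduce termination to a finiteness count. I will argue that over an entire run the saturation procedure introduces only finitely many distinct variables, recursion constants, and unification metavariables --- distinct up to the renaming of freshly introduced symbols that the bracketed quantifiers $(\exists X)$ in the rules quotient by --- and that the widths of all recursion constants and metavariables stay bounded by a number determined by the input. Granting this, every equation $M_1 \doteq M_2$ and every recursive definition $\recvar{r} =_d U$ that can ever appear is a shallow term: a recursive definition is one level deep by construction, and the two sides of an equation are a contractive or a recursive term, each of bounded depth, with a pattern of bounded length following each head. Hence there are only finitely many equations and definitions possible up to renaming. Since saturation only ever adds equations and definitions and never removes them, and since each rule application either inserts $\m{contra}$ --- after which, by inspection, nothing more fires --- or, by the non-redundancy built into the existential quantifiers, inserts something that is not a renaming variant of anything already present, the algorithm must reach a fixed point after finitely many steps.

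The substance is therefore the boundedness claim, which I would establish by a single well-founded analysis of how new symbols are spawned. The rules introducing neither a recursion constant nor a metavariable --- Rules (\ref*{rule:ho_structural}), (\ref*{rule:ho_rec_expansion}), (\ref*{rule:ho_unif_consistency_contra}), (\ref*{rule:ho_unif_consistency_rec}), the symmetry and transitivity rules, and (\ref*{rule:ho_inst_contra}), (\ref*{rule:ho_inst_rec}) --- cannot by themselves diverge and need only be checked against the count above; Rules (\ref*{rule:ho_inst_contra}) and (\ref*{rule:ho_inst_rec}) do create variables by peeling a $\lambda$-prefix, but the number of $\lambda$'s stackable on a term of a fixed simple type is bounded by the order of that type, the set of types occurring is fixed by the preceding Preservation of Well-formed Unification Contexts lemma, and the existential guard on these two rules forbids re-creating a $\lambda$-instantiation already present. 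New recursion constants come only from Rule (\ref*{rule:ho_rec_pruning}), which produces $\recvar{t} =_d \lambda \bar w.\, G\hcontra\, \bar w$ with $\bar w = \bar x \cap \bar y \subsetneq \bar x$, so $\recvar{t}$ has width strictly below that of the recursion constant $\recvar{r}$ it acts on, while simultaneously resolving a hitherto unresolved $H\hrec$ and turning $\recvar{r}$ into a \emph{pruned} recursion constant. New metavariables come from Rules (\ref*{rule:ho_immitation}) and (\ref*{rule:ho_projection}) (fresh $\overline{G\hrec}$, each of width equal to the width of the $H\hcontra$ being resolved plus the arity of an argument type of the head, which is bounded by the fixed signature), from Rules (\ref*{rule:ho_flexible_flexible_different_vars}) and (\ref*{rule:ho_flexible_flexible_same_var}) (a fresh $F^m$ of width strictly smaller than the symbols it replaces, since $\bar z$ is a proper common subset of $\bar x$ and $\bar y$), and from Rule (\ref*{rule:ho_rec_pruning}) as above. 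The only way widths or symbol counts could run away is the cycle in which imitation and projection (Rules (\ref*{rule:ho_immitation}), (\ref*{rule:ho_projection})) produce fresh recursive metavariables that pruning (Rule (\ref*{rule:ho_rec_pruning})) turns into fresh contractive metavariables that again feed imitation and projection, so that imitation keeps re-appending the binders that pruning removes. I would cut this off with the order argument standard in higher-order pattern unification: the binders appended by imitation and projection have types that are proper argument-subtypes of a type drawn from the original signature, hence of strictly smaller order, so after finitely many iterations these types have order $0$ and no further binders can be appended; combined with the width-strictly-decreasing character of Rules (\ref*{rule:ho_rec_pruning}), (\ref*{rule:ho_flexible_flexible_different_vars}), (\ref*{rule:ho_flexible_flexible_same_var}) and the bookkeeping that each metavariable is resolved only once and each recursion constant pruned only finitely often (every pruning strictly lowers width), this bounds the depth of the spawning tree and hence the total number of symbols and the maximum width.

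I expect this order-and-chaining argument to be the main obstacle: it must be made precise enough to interlock the order-decrease of imitation and projection with the width-decrease of pruning and flexible--flexible resolution, and with the invariants --- preserved by the Preservation of Well-formed Unification Contexts lemma --- that $\eta$-long forms, typing, and the pattern restriction are maintained, so that the width of a fixed symbol never changes and every pattern remains a list of distinct variables whose length is at most the bounded total number of variables. Once the boundedness claim is in place, everything else is the routine finiteness count sketched in the first paragraph.
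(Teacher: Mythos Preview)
Your high-level plan matches the paper's: bound the stock of variables, recursion constants, and metavariables, and then observe that shallow, well-typed terms over a finite stock give only finitely many equations and definitions up to the renaming quotient enforced by the existential guards. The divergence is in how you bound the symbol-creating rules (\ref*{rule:ho_immitation}), (\ref*{rule:ho_projection}), (\ref*{rule:ho_rec_pruning}), (\ref*{rule:ho_flexible_flexible_different_vars}), (\ref*{rule:ho_flexible_flexible_same_var}).

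Your proposed type-order argument for the imitation/projection $\leftrightarrow$ pruning cycle does not go through. The binders appended by (\ref*{rule:ho_immitation}) on $H\hcontra\,\bar y \doteq c\,\overline N$ have types that are argument types of the argument types of $c$; but when the cycle comes around via (\ref*{rule:ho_rec_pruning}) to a fresh $G'\hcontra$ and (\ref*{rule:ho_immitation}) fires again with a possibly \emph{different} head $c'$, the new binders' types are drawn from $c'$'s type, which bears no order relation to $c$'s. There is no monotone descent in type order along the cycle; the standard type-decrease argument from non-cyclic pattern unification does not transfer, because (\ref*{rule:ho_rec_pruning}) manufactures a new contractive metavariable whose type is unrelated to that of the $H\hcontra$ consumed earlier.

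The paper instead assigns to each context a lexicographic triple of multisets $\langle A, B, C\rangle$: $A$ is the multiset of widths of \emph{unpruned} recursion constants (where $\recvar r$ is pruned once some equation $\recvar r\,\bar x \doteq \recvar s\,\bar y$ with $\bar y \subsetneq \bar x$ exists), $B$ the widths of unresolved contractive metavariables, $C$ the widths of unresolved recursive metavariables. Rule (\ref*{rule:ho_rec_pruning}) strictly decreases $A$ --- it prunes $\recvar r$ and adds an unpruned $\recvar t$ of strictly smaller width --- so the fresh $G\hcontra$ it creates, which may enlarge $B$, is harmless. Rules (\ref*{rule:ho_immitation}) and (\ref*{rule:ho_projection}) strictly decrease $B$ (they resolve $H\hcontra$) while only enlarging $C$. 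Rules (\ref*{rule:ho_flexible_flexible_different_vars}) and (\ref*{rule:ho_flexible_flexible_same_var}) replace an element of $B$ or of $C$ by a strictly smaller one. No other rule increases the triple. You already have every ingredient of this argument --- resolution happens once, pruning lowers width, flexible--flexible lowers width --- but what is missing is precisely this ordering with $A$ dominating, which is what breaks the cycle without any appeal to type order.
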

\begin{proof}
  We observe that terms in the unification contexts are shallow as defined by the grammar, and all terms are well-typed.
  Given a bounded amount of variables, unification metavariables, and recursion constants, 
  there can only be finitely many equations and recursive definitions in a unification context because terms are shallow that they are only one level deep.
  The rules that create new unification metavariables, variables, or recursion constants are rules (\ref*{rule:ho_inst_contra})(\ref*{rule:ho_inst_rec})(\ref*{rule:ho_immitation})(\ref*{rule:ho_projection})(\ref*{rule:ho_rec_pruning})(\ref*{rule:ho_flexible_flexible_different_vars})(\ref*{rule:ho_flexible_flexible_same_var}), 
  it suffices to show that these rules can only be applied finitely many times.

  First we show that given a bounded amount of unification metavariables and recursion constants,
  the rules (\ref*{rule:ho_inst_contra})(\ref*{rule:ho_inst_rec}) can only be applied finitely many times.
  Since everything is well-typed, the maximum depth and width for terms are bounded. Then, there are only finitely many equations 
  modulo simultaneous variable renaming, and the subsumption $\exists \bar x$ in the conclusion of the 
  rule (\ref*{rule:ho_inst_contra})(\ref*{rule:ho_inst_rec}) prevents additional equations from being created that are 
  merely variable renaming of existing equations. 
  
  Then, it suffices to show the rules (\ref*{rule:ho_immitation})(\ref*{rule:ho_projection})(\ref*{rule:ho_rec_pruning})(\ref*{rule:ho_flexible_flexible_different_vars})(\ref*{rule:ho_flexible_flexible_same_var})
  can only be applied finitely many times.
  We associate with each unification context a lexicographic multi-set order $\langle A, B, C\rangle$, where $A, B, C$ are multisets of natural numbers defined below,
    and show that each rule that creates new unification metavariables or recursion constants strictly decreases this order.
  The multiset order \cite{Dershowitz79acm} states that a multiset of natural numbers $X$ is considered smaller than another multiset $Y$ 
  if $X$ can be obtained from $Y$ by removing a natural number $n$ and adding a finite number of natural numbers that are strictly smaller than $n$.
  The order $\langle A, B, C\rangle$ is given by
\begin{enumerate}
  \item $A = \{\m{width}(\recvar{r}) \mid   \recvar{r} =_d U\in \Delta, \recvar{r}\, \m{unpruned}\}$ is the multiset of widths of all unpruned recursion constants.
  \item  $B =  \{\m{width}(H\hcontra) \mid H\hcontra \in UV(\Delta), H\hcontra\, \m{unresolved}\}$ is the multiset of widths of all unresolved contractive unification metavariables.
  \item  $C = \{\m{width}(H\hrec) \mid H\hrec \in UV(\Delta), H\hrec\, \m{unresolved}\}$ is the multiset of widths of all unresolved recursive unification metavariables.
\end{enumerate}
For example, we could decrease the order $\langle A, B, C\rangle$ by 
resolving a contractive unification metavariable, and adding arbitrarily many recursive unification metavariables of any width.

First, it is easy to see that no rules can ever increase this order except rules (\ref*{rule:ho_immitation})(\ref*{rule:ho_projection})(\ref*{rule:ho_rec_pruning})(\ref*{rule:ho_flexible_flexible_different_vars})(\ref*{rule:ho_flexible_flexible_same_var}):
once a unification metavariable is resolved, it remains resolved, and once a recursion constant is pruned, it remains pruned. Both conditions rely on the 
existence of certain equations and rules never remove equations from the unification context.
Then we show each of the rules
(\ref*{rule:ho_immitation})(\ref*{rule:ho_projection})(\ref*{rule:ho_rec_pruning})(\ref*{rule:ho_flexible_flexible_different_vars})(\ref*{rule:ho_flexible_flexible_same_var})
strictly decreases the order $\langle A, B, C\rangle$.

  Rule (\ref*{rule:ho_immitation}) or (\ref*{rule:ho_projection}) removes one unresolved contractive unification metavariable and adds 
  a finite number of recursive unification metavariables.

  Rule (\ref*{rule:ho_rec_pruning})  prunes a recursion constant and
  adds a contractive unification metavariable.

  Each of the rules (\ref*{rule:ho_flexible_flexible_different_vars}) and (\ref*{rule:ho_flexible_flexible_same_var}) resolves 
  a recursive (resp. contractive) unification metavariable and adds a recursive (resp. contractive) unification metavariable of a smaller width.

\end{proof}

\begin{theorem}
  [Correctness of Unifiers]
  \label{thm:ho_correctness_of_unifiers}
  If $\Delta$ is a saturated contradiction-free unification context,  and $\Gamma = \m{unif}(\Delta)$ is the most general unifier for $\Delta$.
\end{theorem}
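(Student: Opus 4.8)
The plan is to mirror the two-part structure of the first-order proof (Theorem~\ref{thm:fo_correctness_of_unifiers}): first show that $\Gamma = \m{unif}(\Delta)$ is a unifier, then show it is most general. A small preliminary is needed, though: one must check that the iterated replacement in step (4) of the construction of $\m{unif}(\Delta)$ terminates. This follows from the shape of resolutions together with the well-founded width order of Theorem~\ref{thm:ho_termination}: a resolved recursive metavariable is replaced either by a recursion constant (no metavariable arguments) or by a recursive metavariable of strictly smaller width, and a resolved contractive metavariable — after rules (\ref*{rule:ho_immitation})/(\ref*{rule:ho_projection}) — is replaced by a rigid term whose only metavariable arguments are recursive, so the rewriting bottoms out.

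For \textbf{Part 1}, $\m{dom}(\Gamma) = UV(\Delta)$ holds by construction, so it remains to show every equation of $\Delta[\Gamma]$ holds. As in the first-order case I would prove, simultaneously by induction on $k$ with the $N$-claim allowed to appeal to the $U$-claim at the same $k$:
\begin{enumerate}
  \item for every $U_1 \doteq U_2$ in $\Delta$, $\m{exp}^{\Delta[\Gamma]}\depth k(U_1[\Gamma]) = \m{exp}^{\Delta[\Gamma]}\depth k(U_2[\Gamma])$;
  \item for every $N_1 \doteq N_2$ in $\Delta$, $\m{exp}^{\Delta[\Gamma]}\depth k(N_1[\Gamma]) = \m{exp}^{\Delta[\Gamma]}\depth k(N_2[\Gamma])$.
\end{enumerate}
The case $k = 0$ is trivial; for $k > 0$ I case-analyze the head shapes. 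If both $U_i$ have rigid heads under $\lambda$-binders, saturation under (\ref*{rule:ho_inst_contra}) has stripped the binders, contradiction-freedom with (\ref*{rule:ho_structural_var_const})/(\ref*{rule:ho_structural_var_var_clash})/(\ref*{rule:ho_structural_const_const_clash}) forces identical heads, (\ref*{rule:ho_structural}) places the argument equations in $\Delta$, and the IH at $k-1$ finishes. If one side is a resolved contractive metavariable $H\hcontra\, \bar y$ and the other rigid, then (\ref*{rule:ho_immitation})/(\ref*{rule:ho_projection}) give a one-level-deep resolution in $\Delta$, (\ref*{rule:ho_unif_consistency_contra}) guarantees all resolutions agree, $\Gamma$ substitutes the chosen one, and the equation becomes an instance of the rigid-rigid case modulo a renaming $[\bar x/\bar y]$ — here the commutation of variable renaming with $\m{exp}$ and with substitution (noted in Section~\ref{sec:term_equality_and_unifiers}) does the bookkeeping. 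If both sides are contractive metavariables, saturation forces them either both unresolved, hence sharing a representative (built from (\ref*{rule:ho_flexible_flexible_different_vars})/(\ref*{rule:ho_flexible_flexible_same_var})) so $\Gamma$ sends both to the same term, or both resolved, reducing to the previous case via (\ref*{rule:ho_unif_consistency_contra}). The $N$-cases are analogous: constant/constant equations are unfolded by (\ref*{rule:ho_rec_expansion}) to contractive-term equations handled by claim (1) at the same $k$; a recursive metavariable against a recursion constant whose arguments escape its pattern is first rewritten by (\ref*{rule:ho_rec_pruning}), threading in the mediating $\recvar t$ and $G\hcontra$; and metavariable/metavariable $N$-equations mirror the contractive case via (\ref*{rule:ho_unif_consistency_rec}).

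For \textbf{Part 2}, let $\Gamma_2$ be any unifier of $\Delta$. Since every equation of $\Delta[\Gamma_2]$ holds, $\Gamma_2$ must map each resolved metavariable to something definitionally equal to its chosen resolution under $\Gamma$ (otherwise the resolution equation together with (\ref*{rule:ho_unif_consistency_contra})/(\ref*{rule:ho_unif_consistency_rec}) fails), and the pattern restriction forces these maps to respect the variable lists; on unresolved metavariables $\Gamma_2$ is free up to the representative equations, which only record permutations of variables. I then take $\Gamma_1$ to be the substitution whose domain is exactly the representative metavariables, sending each to its $\Gamma_2$-image (with recursion-constant renaming as needed). Then $\Gamma \circ \Gamma_1$ agrees with $\Gamma_2$ on every metavariable modulo definitional expansion, so $\Gamma \circ \Gamma_1 = \Gamma_2$, establishing that $\Gamma$ is most general.

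The main obstacle is \textbf{Part 1}, and within it the metavariable cases: keeping the pattern-variable lists and induced renamings $[\bar x/\bar y]$ coherent across the resolution-by-another-metavariable case and especially the (\ref*{rule:ho_rec_pruning}) case, where the resolution drops variables and a fresh recursion constant $\recvar t$ with body $G\hcontra\, \bar w$ is interposed. One must verify that after applying $\Gamma$ the interposed $\recvar t$ (renamed away from clashes) expands to the same infinite tree as the pruned side, which relies on the renaming/$\m{exp}$ and renaming/substitution commutation lemmas together with the fact — already exploited in the preprocessing theorem — that $\m{exp}^\Delta\depth k$ of a term depends only on the recursive definitions reachable from it.
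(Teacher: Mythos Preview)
Your proposal is correct and follows essentially the same two-part structure and case analysis as the paper's proof. Two small differences worth noting: you add a useful preliminary on termination of step~(4) of the $\m{unif}$ construction, which the paper leaves implicit; conversely, the paper strengthens the induction to be lexicographic on $(k,\text{structure of }U\text{ or }N)$ rather than on $k$ alone, which is what licenses stripping top-level $\lambda$-binders at the same $k$ via commutation (your appeal to (\ref*{rule:ho_inst_contra}) having ``stripped the binders'' needs this or an equivalent secondary measure to avoid circularity, since the stripped equation is at the same depth).
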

\begin{proof}
  The proof largely follows the structure of the first-order case. We have two parts, the first part is to show that $\Gamma$ is a unifier, and the second part is to show that 
  $\Gamma$ is the most general unifier.

  \textbf{(Part 1)}
  To show $\Gamma$ is a unifier, 
  we need to show that $\m{dom}(\Gamma) = UV(\Delta)$, which is true by definition, and that every equation in $\Delta[\Gamma]$  holds.
  It suffices to show the following.
  \begin{enumerate}
    \item For all $U_1 \doteq U_2$ in $\Delta$, $\m{exp}^{\Delta[\Gamma]}\depth{k}(U_1[\Gamma]) = \m{exp}^{\Delta[\Gamma]}\depth{k}(U_2[\Gamma])$.
    \item For all $N_1 \doteq N_2$ in $\Delta$, $\m{exp}^{\Delta[\Gamma]}\depth{k}(N_1[\Gamma]) = \m{exp}^{\Delta[\Gamma]}\depth{k}(N_2[\Gamma])$.
  \end{enumerate}
  We show the following two claims simultaneously by lexicographic induction on ($k$, and the structure of $U$ or $N$), where claim (2) may refer to claim (1)
  without decreasing $k$.
  Both claims are trivial when $k = 0$. Consider the case when $k > 0$, we show (1) and (2) by case 
  analysis on the structure of $U_1 \doteq U_2$ and $N_1 \doteq N_2$. 

    We state some facts about the construction of $\m{unif}(\Delta)$.
    These facts build intuitions that make the proof cases easier to understand.
    We note that if the substitution equation for $H\hcontra$ is $H\hcontra\, \bar w \doteq F\hcontra\, \bar z$ ($F$ is necessarily unresolved by step (4) of the construction for $\Gamma = \m{unif}(\Delta)$), 
    then the substitution equation will also appear in $\Delta$ due to transitivity and the consistency of resolution rule.
    Also, if the substitution equation for $H\hcontra$ is $H\hcontra\, \bar w \doteq h\, N_1\, \dots\,N_n$, 
    there exists equations (not necessarily picked as resolution equations) $H\hcontra\, \bar w \doteq h\, N_1'\, \dots\,N_n'$ (with $FV(h\, N_1'\, \dots\,N_n') \subseteq \bar w$), 
    and $N_1 \doteq N_1', \dots, N_n \doteq N_n'$ in $\Delta$, by inspecting the process of obtaining $\Gamma$ and the structural rule (\ref*{rule:ho_structural}).
    For recursive unification metavariables $H\hrec$, the substitution equations $H\hrec\, \bar w \doteq F\hrec\, \bar z$ and $H\hrec\, \bar w \doteq \recvar{r}\, \bar z$
    will appear in $\Delta$ due to transitivity and the consistency of resolution (Rule (\ref*{rule:ho_unif_consistency_rec})).

  \begin{enumerate}[(a)]
    \item If $U_1$ or $U_2$ contains top-level $\lambda$-abstractions, then they must have equal number of $\lambda$-abstractions due to typing and $\eta$-long forms.
    Due to the commutation of the definitional expansion and $\lambda$-abstractions and the 
    commutation between substitution and $\lambda$-abstractions,  the result follows by induction hypothesis.
    The cases (b)(c)(d) below show the claim when $U_1$ and $U_2$ do not have top-level $\lambda$-abstractions.
  
  \item Both $U_1$ and $U_2$ have constructors or variables as their heads. Since $\m{contra} \notin \Delta$, they must 
   have identical constructor or variable heads. Now let $U_1 = h\, N_1\, \dots \, N_n$ and $U_2 = h\, N_1'\, \dots\, N_n'$.
    Since $\Delta$ is saturated, we have $N_i \doteq N_i'$ for all $1 \le i \le n$. The result 
    then follows from the fact that each $N_i$ and $N_i'$ 
    have equal definitional expansion up to depth $k-1$ by induction hypothesis. 

    \item One of $U_1$ and $U_2$ is a unification metavariable, and the other has a constructor as its head. 
    Without loss of generality, assume $U_1 = H\hcontra\, \bar x$ and $U_2 = h\, \overline{N}$.
    We have two cases, either $FV(h\, \overline{N}) \subseteq \bar x$ or not.
    \begin{enumerate}[(i)]
      \item 
    If $FV(h\, \overline{N}) \subseteq \bar x$, then this is a resolution equation. If this is the resolution equation used 
    in $\Gamma$, then we're done. Otherwise, the resolution equation of $H\hcontra$ in $\Gamma$ may be $H\hcontra \, \bar x \doteq h \, \overline{N'}$.
    The rule (\ref*{rule:ho_unif_consistency_contra}) ensures that there is an equation between
    $U_2$ and $h\, \overline{N'}$, and the rest follows from the case (b) above.
    \item
    If $FV(h\, \overline{N}) \not\subseteq \bar x$,  
    In this case either $H\hcontra$ is resolved or unresolved.
    $H\hcontra$ cannot be unresolved, since otherwise rule (\ref*{rule:ho_immitation}) or (\ref*{rule:ho_projection}) would apply.
    If $H\hcontra$ is resolved, then $H\hcontra\, \bar x \doteq h\, \overline{N'}$ is a substitution equation, 
    and the result follows from IH on the necessary equations between $\overline{N'}$ and $\overline{N}$.

    \end{enumerate}
    % then this is a resolution equation, and it suffices to show that all other resolutions have equal definitional expansions up to depth $k$,
    % Obviously this is a resolution equation, and it suffices to show that all other resolutions have equal definitional expansions up to depth $k$, 
    % which follows from saturation and case (b).

    \item Both $U_1$ and $U_2$ have contractive unification metavariables as their heads. 
      Let $U_1 = H\hcontra\, \bar x$ and $U_2 = G\hcontra\, \bar y$.
    Due to saturation, WLOG, there are three cases, 
    both unification metavariables are unresolved, only one is unresolved, or both are resolved. We consider them one by one.
    \begin{enumerate}[(i)]
      \item Both are unresolved. 
      If $H\hcontra$ is equal to $G\hcontra$ (they are the same unification metavariable), then $\bar x = \bar y$ (position-wise) since otherwise rule (\ref*{rule:ho_flexible_flexible_same_var}) will apply, and 
      it would have a resolution. Otherwise, suppose $H\hcontra \ne G\hcontra$,
      Since they are in the same equivalence class, for some representative unification metavariable be $F\hcontra$, we have
      the representative equations $H\hcontra\, \bar w \doteq F\hcontra\, \bar z$ and $G\hcontra\, \bar u \doteq F\hcontra\, \bar z$ in $\Gamma$.
      Here $\bar w$ and $\bar x$ may differ.
      By rule (\ref*{rule:ho_unif_consistency_contra}), on $H\hcontra$, we have equations $G\, \bar y \doteq [\bar x/\bar w] (F\hcontra\, \bar z)$ and similarly $H\hcontra\, \bar x \doteq [\bar x/\bar w] (F\hcontra\, \bar z)$ in $\Delta$
      By symmetry and transitivity, we have $[\bar x/\bar w] (F\hcontra \, \bar z) \doteq [\bar y/\bar u] (F\hcontra \, \bar z)$ in $\Delta$. We have just shown that  
      $[\bar x/\bar w] (F\hcontra \, \bar z)$ and $[\bar y/\bar u] (F\hcontra \, \bar z)$ are syntactically equal (otherwise they will be resolved by rule (\ref*{rule:ho_flexible_flexible_same_var})).
      But now 
      $U_1[\Gamma] = (H\hcontra\, \bar x)[\Gamma] = [\bar x/\bar w](F\hcontra\, \bar z) 
      = [\bar y/\bar u](F\hcontra\, \bar z) 
       = (G\hcontra\, \bar y)[\Gamma] 
      = U_2[\Gamma] 
      $. 
      \item Only one of them is unresolved. WLOG,  $H\hcontra$ is unresolved and $G\hcontra$ is resolved.
      We have $G\hcontra\,\bar u\doteq F\hcontra \, \bar z$ or $G\hcontra\,\bar u\doteq h\, \overline N$ in $\Gamma$.
      In the first case $F\hcontra\, \bar z$ is unresolved (otherwise it would have been replaced in step (4)), 
      and then $H\hcontra$ and $F\hcontra$ are in the same equivalence class, and 
       the rest follows from the case (d)(i) above. 
       In the second case, we would have an equation $H\hcontra\,\bar x \doteq [\bar y/\bar u](h\, \overline N)$.
       But now $H\hcontra$ could be resolved by rule (\ref*{rule:ho_immitation}) or (\ref*{rule:ho_projection}).

       \item Both are resolved.
       Suppose $H\hcontra\, \bar x \doteq U_{H\hcontra}$ and $G\hcontra\, \bar y \doteq U_{G\hcontra}$ are substitution equations in $\Gamma$.
       It cannot be the case that only one of $U_{H\hcontra}$ and $U_{G\hcontra}$ has unresolved unification metavariables as the head, and the other has a constructor or a variable as the head.
       Since transitivity and rule (\ref*{rule:ho_unif_consistency_contra}) ensure an equation between $U_{H\hcontra}$ and $U_{G\hcontra}$, 
       and the other would be resolved by rule (\ref*{rule:ho_immitation}) or (\ref*{rule:ho_projection}).
       Thus, both $U_{H\hcontra}$ and $U_{G\hcontra}$ have unresolved unification metavariables as the head, 
       or both have constructors or variables as the head.
       In the first case, the result follows from the case (d)(i) above. In the second case, 
       let $U_{H\hcontra} = h\, \overline{N}$ and $U_{G\hcontra} = h\, \overline{N'}$, 
       there are equations $U_1 \doteq h\, \overline{N''}$ and $U_2 \doteq h\, \overline{N'''}$ in $\Delta$, 
       with equations between $\overline{N}$ and $\overline{N''}$ (pairwise), and similarly for $\overline{N'}$ and $\overline{N'''}$.
       By transitivity, there is an equation $h\, \overline{N''} \doteq h\, \overline{N'''}$, and thus there are equations between $\overline{N''}$ and $\overline{N'''}$ (pairwise).
       Then the result follows by IH to show $\overline{N}, \overline{N'}, \overline{N''}, \overline{N'''}$ all have equal definitional expansions up to depth $k-1$.

    \end{enumerate}

    \item The case where $N_1$ or $N_2$ contains top-level $\lambda$-abstractions is similar 
    to the case (a), and we show subsequently the cases when $N_1$ and $N_2$ do not contain top-level $\lambda$-abstractions.

    \item Both $N_1$ and $N_2$ have recursion constants as heads. Let $N_1 = \recvar r\, \bar x$, 
    where $\recvar r =_d \bar w.\, U_1 \in \Delta$ and $N_2 = \recvar s\,\bar y$, where $s =_d \lambda \bar u.\, U_2 \in \Delta$.
    Since $\Delta$ is saturated, $[\bar x/\bar w]U_1 \doteq [\bar y/\bar u] U_2 \in \Delta$.
    By IH, $\m{exp}^{\Delta[\Gamma]}\depth k(([\bar x/\bar w]U_1)[\Gamma]) = \m{exp}^{\Delta[\Gamma]}\depth k (([\bar y/\bar u] U_2)[\Gamma])$, and 
    then 

    \begin{tabular}{ll}
    $\m{exp}^{\Delta[\Gamma]}\depth k (N_1[\Gamma])  $ & \\
    $= \m{exp}^{\Delta[\Gamma]}\depth k ((\recvar r\, \bar x)[\Gamma]) $ &  (given)\\
    $= \m{exp}^{\Delta[\Gamma]}\depth k (([\bar x/\bar w]U_1)[\Gamma]) $ & (by the definition of $\m{exp}^{\Delta[\Gamma]}\depth k $)\\
    $= \m{exp}^{\Delta[\Gamma]}\depth k (([\bar y/\bar u] U_2)[\Gamma]) $ & (shown by IH above) \\
    $= \m{exp}^{\Delta[\Gamma]}\depth k ((\recvar s\, \bar y)[\Gamma]) $ & (by the definition of $\m{exp}^{\Delta[\Gamma]}\depth k $) \\
    $= \m{exp}^{\Delta[\Gamma]}\depth k (N_2[\Gamma]) $ & (given)\\
    \end{tabular}

    \item One of $N_1$ and $N_2$ is a unification metavariable, and the other has a recursion constant as its head.
    WLOG, assume $N_1 = H\hrec\, \bar x$ and $N_2 = \recvar{r}\, \bar y$. We have two cases, either $FV(r\, \bar y) \subseteq \bar x$ or not.
    \begin{enumerate}[(i)]
      \item If $FV(\recvar r\, \bar y) \subseteq \bar x$, then this is a resolution equation. If this is the substitution equation used
      in $\Gamma$, then we're done. Otherwise, rule (\ref*{rule:ho_unif_consistency_rec}) ensures that there is an equation between
      $N_2$ and the resolution equation used in $\Gamma$, and the rest follows from the case (f) above.
      \item If $FV(\recvar r\, \bar y) \not\subseteq \bar x$,
      In this case either $H\hrec$ is resolved or unresolved.
      $H\hrec$ cannot be unresolved, since otherwise rule (\ref*{rule:ho_rec_pruning}) would apply.
      If $H\hrec$ is resolved, then let $H\hrec\, \bar x \doteq \recvar{s}\, \bar z$ be a substitution equation.
      By transitivity, there is an equation between $\recvar{s}\, \bar z$ and $\recvar{r}\, \bar y$.
      By the case (f), $\m{exp}\depth k(\recvar s\, \bar z) = \m{exp}\depth k(\recvar r\, \bar y)$,
    \end{enumerate}

    \item Both $N_1$ and $N_2$ have recursive unification metavariables as their heads.
    Let $N_1 = H\hrec\, \bar x$ and $N_2 = G\hrec\, \bar y$.
    Due to saturation, WLOG, there are three cases,
    both unification metavariables are unresolved, only one is unresolved, or both are resolved. We consider them one by one.

    \begin{enumerate}[(i)]
      \item Both are unresolved. This is exactly analogous to the case (d)(i).
      \item Only one of them is unresolved. This is exactly analogous to the case (d)(ii), except that in the case the resolution is a recursion constant, the 
      unresolved unification metavariables may be resolved by rule (\ref*{rule:ho_rec_pruning}).
      \item Both are resolved. Suppose $H\hrec\, \bar x \doteq N_{H\hrec}$ and $G\hrec\, \bar y \doteq N_{G\hrec}$ are substitution equations in $\Gamma$.
      By a similar reasoning as (d)(iii), both $N_{H\hrec}$ and $N_{G\hrec}$ have unresolved unification metavariables as heads, or 
      both have recursion constants as the heads. IN the first case, the equality can be established by (h)(i).
      In the latter case, there is an equation $N_{H\hrec} \doteq  N_{G\hrec}$ due to transitivity and the rest follows by the case (f).

    \end{enumerate}

  \end{enumerate}

  \textbf{(Part 2)}
  To show $\Gamma$ is the most general unifier, given any other unifier $\Gamma_2$ of $\Delta$, it suffices to 
  construct a unifier $\Gamma_1$ such that $\Gamma \circ \Gamma_1 =\Gamma_2$. But 
  the construction of $\Gamma_1$ is easy: $\Gamma_2$ must map resolved unification metavariables analogously as 
  $\Gamma$ (otherwise a contradiction will arise), and it may choose to map equivalence classes of
  unresolved unification metavariables freely. $\Gamma_1$ simply records how unresolved unification metavariables are mapped in $\Gamma_2$.

\end{proof}

\section{Related Work}
\label{sec:related_work}

The unification algorithm for the first-order terms was first developed by \citet{Robinson65jacm} as a procedure for implementing resolution. \citet{Jaffar84ngc} gave an efficient unification algorithm for first-order rational trees based on the system of equations presentation of \citet{Martelli82tpls}.
\citet{Huet75tcs} discovered a 
pre-unification algorithm for general higher-order terms.
Although general higher-order unification is undecidable and does not have most general unifiers \cite{Huet73ic}, \citet{Miller91jlc}
discovered the pattern restriction that if arguments to 
unification metavariables are restricted to pairwise distinct bound variables, decidability and most general unifiers can be recovered.
A similar idea of restricting the arguments to bound variables gives a formulation of regular B\"ohm trees \cite{Huet98mscs} with decidable equality. 
Huet's idea later becomes the prepattern restriction in CoLF \cite{Chen23fossacs}. 
The prepattern restriction is slightly relaxed over the pattern restriction that 
repetition of bound variables are allowed.
Our use of a signature for representing recursive definitions
directly follows that of CoLF \cite{Chen23fossacs}.

Nominal unification is an alternative way of carrying out higher-order unification \cite{Urban04tcs,Urban10unif}.
It is encodable in higher-order pattern unification and higher-order pattern unification can be encoded
in nominal unification.
\citet{Schmidt-Schauss22fi} have presented a nominal unification algorithm for a version 
of cyclic $\lambda$-calculi by \citet{Ariola97tacs}. However, their cyclic $\lambda$-calculi 
has a different criterion for term equality than ours.
\section{Conclusion}

We have presented a saturation-based unification algorithm for finding most general unifiers for higher-order rational terms ($\bot$-free regular B\"ohm trees).
We have shown the termination, soundness, and completeness of this algorithm.
The main complexity is  
to arrange the conditions for applying the rules to ensure termination.
We once again find Miller's pattern fragment to be fundamental in determining the most general unifiers in the presence of  higher-order terms.
A detailed analysis of the complexity of the algorithm and an efficient implementation will be future work.
% The CoLF implementation \cite{Chen23fossacs} implements an earlier version of this algorithm.

\appendix

% bibtex, not acm
% \printbibliography

%% FOR ACM natbib
% \bibliographystyle{ACM-Reference-Format}
\bibliographystyle{plainnat}
\bibliography{citationsconf}

% \appendix
% \input{Appendix.tex}

\end{document}